\documentclass[11pt]{article}

\usepackage[margin=1in]{geometry}
\usepackage[T1]{fontenc}
\usepackage{lmodern}
\usepackage{amsmath,amssymb,amsthm,mathtools}
\usepackage{algorithm}
\usepackage{algpseudocode}
\usepackage{thmtools}
\makeatletter
\@ifundefined{newcounteralias}{}{%
  \renewcommand\thmt@autorefsetup{%
    \@xa\def\csname\thmt@envname autorefname\@xa\endcsname
      \@xa{\thmt@thmname}%
  }%
}
\makeatother
\usepackage{booktabs}
\usepackage{enumitem}
\usepackage{xcolor}
\usepackage{tikz}
\usetikzlibrary{arrows.meta,calc,positioning}
\usepackage{hyperref}
\usepackage{microtype}
\usepackage{verbatim}
\usepackage{dsfont}
\usepackage[capitalise,nameinlink]{cleveref}

\hypersetup{
  colorlinks=true,
  linkcolor=blue!50!black,
  citecolor=blue!50!black,
  urlcolor=blue!50!black
}

\newtheorem{theorem}{Theorem}[section]
\newtheorem{lemma}[theorem]{Lemma}

\newtheorem{corollary}[theorem]{Corollary}

\newcommand{\OPT}{\operatorname{OPT}}
\newcommand{\cost}{\operatorname{cost}}
\newcommand{\rad}{\operatorname{rad}}
\newcommand{\val}{\operatorname{val}}

\newcommand{\U}{\mathcal U}

\newcommand{\supp}{\operatorname{supp}}
\newcommand{\ED}{\operatorname{ED}}

\newcommand{\parity}{\mathrm{Parity}}

\newcommand{\idc}[1]{\mathds{1}_{#1}} 
\newcommand{\LCS}{\mathsf{LCS}}
\newcommand{\Agr}{\mathsf{Agr}}
\newcommand{\Val}{\operatorname{Val}}

\DeclareMathOperator*{\argmin}{arg\,min}

\definecolor{greenlantern}{rgb}{0.0, 0.5, 0.0}

\definecolor{red}{rgb}{1.0, 0.0, 0.0}

\title{Ulam Rank Aggregation Is Hard to Approximate for Four Rankings\footnote{This paper merges two recent works~\cite{Habib26, azgor2026hardness}, subsuming and strengthening the results of both.}}

\author{
Sk Ruhul Azgor%
  \thanks{Pennsylvania State University.
    Email: \texttt{sfa6135@psu.edu}
  }
\and
Diptarka Chakraborty%
\thanks{National University of Singapore.
  Work partially supported by an MoE AcRF Tier 1 grant (T1 251RES2303) and a Google South \& South-East Asia Research Award.
    Email: \texttt{diptarka@nus.edu.sg}
  }
\and
Le Van Cuong%
\thanks{National University of Singapore.
    Email: \texttt{e1583414@u.nus.edu}
  }
\and
Debarati Das%
\thanks{Pennsylvania State University.
Work supported in part by NSF grant 2337832.
        Email: \texttt{debaratix710@gmail.com}
}
\and
Mursalin Habib%
\thanks{Rutgers University. Supported by the National Science Foundation under Grants CCF-2313372 and CCF-2443697.
        Email: \texttt{mursalin.habib@rutgers.edu}
}
\and
Tien Long Nguyen%
  \thanks{Pennsylvania State University.
    Email: \texttt{tfn5179@psu.edu}
  }
}

\date{}

\begin{document}
\maketitle

\begin{abstract}
    We study the approximability of rank aggregation under the Ulam metric. In the \emph{Ulam median} problem, the goal is to find a ranking (permutation) minimizing the sum of its Ulam distances to the input rankings, while in the \emph{Ulam center} problem, the objective is to minimize the maximum such distance. We prove that, for every $0<\varepsilon< 1/34$, it is $\mathrm{NP}$-hard to approximate either Ulam median or Ulam center within a factor of $35/34-\varepsilon$, even when the input consists of only four rankings. We further show that unless P = NP, neither problem admits a polynomial-time additive approximation scheme. Prior to our work, only the exact versions of both problems were known to be $\mathrm{NP}$-hard, and that too only when the number of input rankings is unbounded [Fischer et al., ESA'25 and Bachmaier et al., J. of Discrete Algorithms'15]. Furthermore, our inapproximability results are optimal in terms of the number of input rankings since for three inputs it is already known to be polynomial-time solvable [Chakraborty, Das, Krauthgamer, SODA'21].
    
    En route, we introduce a new general framework for reducing Boolean constraint satisfaction problems (CSP) to the Ulam median with only four inputs. As a specific instantiation of the reduction framework, we obtain our hardness-of-approximation results. The corresponding hardness for the Ulam center follows from a reduction from the Ulam median.
\end{abstract}

\section{Introduction}\label{sec:introduction}

Aggregating potentially inconsistent information from multiple sources is a fundamental challenge across social choice, information retrieval, and data analysis. When sources such as voters, search engines, or learned agents provide complete orderings over a set of alternatives, the problem is known as \emph{rank aggregation}. The goal is to compute a consensus ranking that optimizes a specific objective function. Because different individuals or retrieval systems rarely agree on every pair of alternatives, input rankings inherently conflict. This setting naturally arises in diverse applications, including voting, web search, database systems, similarity search, and classification~\cite{brandt2016handbook, DKNS01, Harman92a, FKS03, ACN08}.

Two of the most well-studied optimization objectives in this domain are the \emph{median} and \emph{center} variants. The median variant seeks a consensus ranking (permutation) that minimizes the sum of distances to the input rankings, while the center variant minimizes the maximum distance to any input. In both cases, the optimal consensus may be any valid permutation, not strictly one of the original inputs. Ultimately, the computational complexity of solving either variant depends heavily on the underlying distance metric used to compare permutations.

In this paper, we study rank aggregation under the \emph{Ulam distance}. For two rankings of $d$ items, their Ulam distance is the minimum number of move operations required to transform one into the other. Equivalently, it is $d$ minus the length of their \emph{longest common subsequence (LCS)}, or one-half of their insertion-deletion (edit) distance, thereby retaining the alignment structure of edit distance without repeated symbols. The metric is also natural for rankings: a move models a displaced item while preserving the relative order of a large unaffected subsequence. Accordingly, Ulam distance has been studied extensively in several algorithmic settings~\cite{CMS01,CK06,AN10,NSS17,BS19}. Rank aggregation under the Ulam metric, in particular, has attracted considerable recent attention~\cite{chakraborty2021approximating,CGJ21,CDK23,jaiswal2025robust,CDN26,bai26}.

For an arbitrary number of input rankings, the Ulam median problem is $\mathrm{NP}$-hard~\cite{fischer2025} and the best-known polynomial-time approximation factor for Ulam median is \(1.968\)~\cite{CDN26}, building on a sequence of earlier advances~\cite{chakraborty2021approximating,CDK23,jaiswal2025robust}. The $\mathrm{NP}$-hardness result in~\cite{fischer2025}, however, applies only when the number of input permutations is unbounded, that is, when the number of inputs is part of the instance. It therefore yields no hardness for any fixed number of inputs. At the other end of the spectrum, Ulam median can be solved exactly in polynomial time for three input permutations~\cite{chakraborty2021approximating}. Thus, prior to this work, the complexity of the problem remained open for every fixed number \(k\geq 4\) of inputs.  

This gap is particularly intriguing given the close connection between the Ulam and edit metrics. For the more general \emph{median-string problem} under insertion-deletion edit distance, where both the input strings and the median may be arbitrary strings, an exact median can be computed in polynomial time for every fixed number of inputs using a standard multidimensional dynamic program~\cite{Sankoff75,kruskal1983,NR03}. Since the Ulam distance is precisely one-half of the insertion-deletion distance on permutations over a common alphabet, one might expect the additional permutation structure to simplify the median problem. This leads to the following question:

\vspace{-2mm}

\begin{center}
\emph{Can the Ulam median be computed exactly in polynomial time for every fixed number of input permutations?}
\end{center}

\vspace{-2mm}


As mentioned earlier, when the number of input rankings is unrestricted, the best-known polynomial-time approximation factor for Ulam median is \(1.968\). The situation appears more favorable when the number of inputs is fixed. Using the fixed-dimensional dynamic program for edit-distance median as a relaxation and subsequently transforming its output into a permutation yields a polynomial-time approximation factor strictly better than \(3/2\) for every fixed number of input permutations. In particular, a sharper analysis of the approach of~\cite{chakraborty2021approximating} gives a \(5/4\)-approximation for four inputs. For completeness, we present the algorithm and its analysis in~\Cref {sec:approx-constant}.

Further evidence that arbitrarily accurate approximation might be possible comes from the Kendall-tau metric, another standard distance measure on rankings. The median problem under Kendall-tau, also known as \emph{Kemeny rank aggregation}, is \(\mathrm{NP}\)-hard already for three input rankings~\cite{peters2026kemeny,madarasi2026complexity}, yet admits a polynomial-time approximation scheme (PTAS) even when the number of inputs is unrestricted~\cite{kenyon2007rank}. Thus, although Ulam median is easier from the viewpoint of exact computation for three inputs, it is natural to ask whether it enjoys a comparable approximation guarantee. This leads to our second question:

\vspace{-2mm}

\begin{center}
\emph{Does Ulam median admit a PTAS, at least when the number of input permutations is fixed?}
\end{center}

\vspace{-2mm}

We answer both questions negatively. We prove an explicit constant-factor hardness of approximation for Ulam median with exactly four input permutations, thereby ruling out a PTAS unless \(\mathrm{P}=\mathrm{NP}\). Together with the exact algorithm for three inputs, this establishes a sharp transition from exact polynomial-time solvability with three inputs to constant-factor inapproximability with four. It also reverses the comparison with Kendall-tau median from the viewpoint of approximation: whereas Kendall-tau median admits a PTAS even for an unrestricted number of inputs, Ulam median admits no PTAS already for four. Thus, despite being easier for exact computation on three inputs, Ulam median is strictly harder to approximate in this precise sense. Finally, our reduction from Ulam median to Ulam center transfers the same inapproximability to the center objective.

\subsection{Our results}

Our main result gives an explicit multiplicative inapproximability bound for the Ulam median with only four input permutations. It rules out a PTAS unless \(\mathrm{P}=\mathrm{NP}\), in contrast to the PTAS for Kendall-tau median~\cite{kenyon2007rank}.

\begin{restatable}{theorem}{FourInputHardnessTheorem}
\label{thm:four-input-hardness}
    For every fixed \(0<\varepsilon<1/34\), it is $\mathrm{NP}$-hard to approximate the Ulam median within a (multiplicative) factor of \(35/34-\varepsilon\), even when the input consists of exactly four permutations.
\end{restatable}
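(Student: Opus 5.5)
We plan a gap-preserving reduction from a bounded-occurrence Boolean CSP with known APX-hardness, instantiating the general CSP-to-Ulam-median framework mentioned in the abstract. Concretely, we start from a Max-E2-Lin$(2)$/Max-Cut-type instance (or a bounded-degree Max-2SAT variant) with $m$ constraints. For such an instance it is NP-hard to distinguish satisfaction fraction at least $c$ from at most $s$. Since Ulam median is a minimization problem, we will translate ``number of unsatisfied constraints'' into an additive excess over a baseline cost $B = \Theta(m)$. The target is
\[
\OPT_{\mathrm{median}} = B + \bigl(m - \Val(\Phi)\bigr)\cdot \delta
\]
for a constant per-violation penalty $\delta$. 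The ratio $35/34$ will then come from plugging the best known gap $(c,s)$ and the gadget constants into $\bigl(B + (1-s)m\delta\bigr)/\bigl(B+(1-c)m\delta\bigr)$.

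\textbf{Construction.} For each variable $x_i$ we build a \emph{variable gadget}: a block of fresh symbols on which the four permutations are arranged so that exactly two LCS-optimal orderings of the block exist for any median. These two orderings encode $x_i=0$ and $x_i=1$, and both incur the same local cost. The natural design is two ``pair'' permutations $\pi_1,\pi_2$ favouring one orientation and $\pi_3,\pi_4$ favouring the other. With three inputs the median is essentially forced by the DP of~\cite{chakraborty2021approximating}, so the fourth input is exactly what creates a genuine binary choice.

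For each constraint we add a \emph{clause gadget}: symbols interleaved between the two variable blocks, ordered differently across the four permutations. The median can then save one extra unit of LCS against some input precisely when the two variable orientations satisfy the constraint. The variable blocks are placed in a global order that is consistent across the inputs, using padding or long common runs, so that the median cannot profitably mix blocks. Bounded occurrence keeps each block's interaction with clause gadgets local.

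\textbf{Analysis.} \emph{Completeness} is direct: given an assignment, we write down the permutation it induces and count, gadget by gadget, its LCS with each $\pi_j$. This gives cost $B+\delta\cdot\#\text{unsat}$.

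\emph{Soundness} is the main obstacle. An arbitrary median permutation need not respect gadget boundaries and may take ``inconsistent'' orderings inside a variable block. We will prove a normalization lemma: any permutation $\sigma$ can be transformed, without increasing $\sum_j \ED(\sigma,\pi_j)$, into a canonical one. In the canonical permutation each variable block is in one of its two orientations and each clause gadget's symbols sit in their designated position. The tool is exchange arguments on the four LCS alignments. Moving a misplaced symbol to its canonical slot loses at most one matched symbol per input it was matched in, and the gadget is designed with enough weight (repeated copies, i.e.\ each variable orientation realized by $w$ parallel disjoint sub-blocks) so that cheating in a variable block costs at least as much as any clause savings it could enable. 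Bounded occurrence ensures this weight $w$ is a constant, keeping $B=\Theta(m)$.

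Once normalization holds, the cost of a canonical $\sigma$ equals $B+\delta\cdot\#\text{unsat}(x_\sigma)$ exactly, which gives soundness. Finally we choose the source gap and the constants $w,\delta,B/m$ to obtain the factor $35/34-\varepsilon$. I expect the normalization step, and balancing $w$ against $\delta$ so the ratio stays as large as $35/34$, to be the hard part.
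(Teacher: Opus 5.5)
\textbf{There is a genuine gap.} Your proposal is a template rather than a proof, and the part you defer is the whole content of the theorem. You never specify the variable or clause gadgets, and the constant $35/34$ will not ``come out'' of the plan as set up. In the paper that constant is exactly $\frac{35/2}{17}$. An eight-symbol \textsc{Max-E3-Lin-2} gadget has restricted LCS $3$ on satisfying and $2$ on violating local assignments (gadget size $8$, $A=2$, $\Delta=1$). This yields the exact identity $\OPT_{\mathrm{med}}(\Pi_\varphi)=18m-\OPT_{\mathrm{CSP}}(\varphi)$, and H{\aa}stad's $(1-\eta,\tfrac12+\eta)$ gap then gives the ratio $(\tfrac{35}{2}-\eta)/(17+\eta)$.

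The sources you name have much weaker gaps. The known NP-hardness factors $12/11$, $17/16$, $22/21$ for \textsc{Max-2-Lin-2}, \textsc{Max-Cut}, \textsc{Max-2-Sat} give $c-s\le c/12$, and explicit constants for bounded-occurrence versions are weaker still. In your own parametrization the ratio is $1+\frac{(c-s)m\delta}{B+(1-c)m\delta}$. Reaching $35/34$ therefore forces $B\le\bigl(34(c-s)-(1-c)\bigr)m\delta<3\delta m$, so the baseline cost per constraint must be below three violation penalties. Nothing in your construction controls this. Your normalization strategy also pushes the other way: $w$ parallel copies of each variable block, with $w$ growing with the occurrence bound, multiply the baseline while the per-violation saving stays $\delta$.

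The idea you are missing is how to make inconsistency unprofitable \emph{without} any weighting; this is what keeps the overhead fixed at $\frac52\cdot 8-2=18$ per constraint against a unit penalty. The paper does not encode variables as orientations of shared blocks. That design would also leave open how a clause gadget between two blocks could sense the orientations of non-adjacent blocks, since LCS is not a pairwise measure. Instead:
\begin{itemize}
\item Each constraint gets private symbols, each carrying a variable and a label.
\item The inputs are $\beta_1Z,\beta_2Z,Z\alpha_1,Z\alpha_2$, with an anchor block $Z$ of length $2N+1$.
\item The anchor-block lemma puts some optimal median in the form $\tau_LZ\tau_R$. This is an arbitrary, possibly inconsistent partition $S_L\sqcup S_R$, and the triangle-inequality identity gives $\OPT_{\mathrm{med}}=3N-\max\bigl(q(S_L)+\kappa(S_R)\bigr)$.
\item Decode $f$ by the majority label in $S_L$ and set $D=|S_L\setminus\Agr(f)|$. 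The deletion bound $\LCS(P|_T,P'|_T)\le\LCS(P|_S,P'|_S)+|T\setminus S|$ gives $q(S_L)\le q(\Agr(f))+D$.
\item The label-swapped pair $\alpha_1,\alpha_2$ together with global label balance gives $\kappa(S_R)=N/2-D$ exactly.
\end{itemize}
Gain and loss cancel with no slack. This is why the reduction yields an exact identity and needs neither bounded occurrence nor a weight parameter.
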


We would like to emphasize that the previously known $\mathrm{NP}$-hardness reduction for the Ulam median requires an unbounded number of input permutations~\cite{fischer2025}. It starts from
\textsc{Max-Cut}, viewed as a Boolean CSP in which vertices are
variables and edges are constraints. It introduces a pair of
permutations for every edge, so the number of input permutations
depends on the number of constraints. It also uses many auxiliary
permutations to ensure that the median encodes a valid cut. These
auxiliary permutations dominate the objective and destroy any gap
in the source \textsc{Max-Cut} instance. Consequently, the
reduction establishes $\mathrm{NP}$-hardness but does not yield any constant factor hardness of approximation.

In contrast, our reduction uses only two permutations to encode
all the constraints and two more to enforce consistency. This
keeps the number of input permutations at four while preserving the
approximation gap. More generally, we develop a framework that transforms local permutation gadgets for Boolean constraints into gap-preserving reductions to the Ulam median with only four input permutations (see~\Cref{sec:boolean-csp-framework}). The framework allows the same consistency and anchor constructions to be reused across different CSPs. Combining polynomial-time constructible, constant-size labeled gadgets with prescribed LCS scores for satisfying and unsatisfying assignments and globally balanced labels for each variable, we establish an exact relation between the CSP optimum and the Ulam-median optimum. Instantiating this framework with \textsc{Max-E3-Lin-2} yields~\Cref{thm:four-input-hardness}. 

The same reduction produces a gap linear in the total input length \(kd\) (for $k$ input permutations over $d$ symbols). This also rules out a polynomial-time additive approximation scheme unless \(\mathrm{P}=\mathrm{NP}\), even for four input permutations.

\begin{restatable}{theorem}{MedianAdditiveHardness}
\label{thm:median-additive-hardness}
    For every fixed $0 < \varepsilon \leq 1/200$, it is $\mathrm{NP}$-hard to approximate the Ulam median within an additive $\varepsilon kd$-factor with $k$ input permutations over an alphabet of size $d$, even when $k = 4$.
\end{restatable}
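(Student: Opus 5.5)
We derive \Cref{thm:median-additive-hardness} from the same reduction that gives \Cref{thm:four-input-hardness}, instantiated with \textsc{Max-E3-Lin-2}. The point is that this reduction keeps the alphabet size linear in the number of constraints, so the multiplicative gap becomes an additive gap of order $kd$.

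We start from H\aa stad's theorem. For every $\delta>0$ it is NP-hard, given a \textsc{Max-E3-Lin-2} instance with $m$ equations, to decide whether at least $(1-\delta)m$ equations are satisfiable or at most $(1/2+\delta)m$ are. We may assume each variable occurs a bounded number of times, balanced between positive and negative labels as the framework requires. Such occurrence-bounded hardness with some constant gap is standard via expander replacement, at the cost of a worse constant, which the weak bound $1/200$ tolerates.

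The framework of \Cref{sec:boolean-csp-framework} builds four permutations over $d$ symbols. Here $d=c_1 m$ for a constant $c_1$ fixed by the gadget sizes and the consistency and anchor blocks; this uses bounded occurrence, so that the consistency blocks have total size $O(m)$. The framework also gives the exact relation $\OPT_{\text{Ulam}} = A - B\cdot \val$, where $\val$ is the maximum number of satisfied equations, $A\le c_2 m$, and $B\ge 1$ is the per-gadget LCS difference between satisfying and unsatisfying assignments. More precisely, the sum of distances equals $kd$ minus the total LCS score. The total score is a fixed contribution from the anchor and consistency parts plus $\sum_{\text{gadgets}}(\text{score})$, and each gadget's score takes one of two prescribed values according to whether its equation is satisfied.

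In the YES case the optimum is at most $A-B(1-\delta)m$. In the NO case it is at least $A-B(1/2+\delta)m$. The additive gap is therefore at least $B(1/2-2\delta)m$. Since $kd=4c_1 m$, the gap is at least $\frac{B(1/2-2\delta)}{4c_1}\,kd$. An algorithm with additive error $\varepsilon kd$ could distinguish the two cases whenever $2\varepsilon kd$ is smaller than this gap, so the claim holds for every $\varepsilon< B(1/2-2\delta)/(8c_1)$.

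The remaining work, and the main obstacle, is bookkeeping. We have to extract the explicit constants $c_1$ and $B$ from the concrete \textsc{Max-E3-Lin-2} gadget. This means counting the symbols per gadget, the consistency-block symbols per variable occurrence, and the anchor symbols. We then have to check that $B/(16c_1)>1/200$ with slack for $\delta$. If the bounded-occurrence step weakens the constants too much, the fallback is to note that the consistency blocks scale with the total number of occurrences, which is exactly $3m$ for E3 equations. So $d$ is linear in $m$ with no degree bound needed, provided the framework's consistency gadget is built per occurrence rather than per variable. We would verify this and recompute the ratio, choosing $\delta$ small enough that $\varepsilon\le 1/200$ fits.
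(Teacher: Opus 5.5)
Your strategy matches the paper's: reuse $\Pi_\varphi$ from the \textsc{Max-E3-Lin-2} instantiation, where $\OPT_{\mathrm{med}}(\Pi_\varphi)=18m-\OPT_{\mathrm{CSP}}(\varphi)$ and $d$ is linear in $m$. The whole content of the statement, however, is the explicit constant $1/200$, and the quantitative part you defer as bookkeeping does not reach it as written. The factor $2$ in your condition ``$2\varepsilon kd$ smaller than the gap'' is unjustified, and it is costly. The assumed algorithm returns a permutation $\rho$ whose cost can be computed exactly and always satisfies $\cost_{\Pi}(\rho)\geq\OPT_{\mathrm{med}}(\Pi)$. The error is therefore one-sided, and only a single $\varepsilon kd$ has to fit inside the gap. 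With your factor $2$ and the actual parameters (gadget gap $1$, $d=24m+1$, so $kd\approx 96m$, source gap $(1/2-2\eta)m$), you only get $\varepsilon<(1/2-2\eta)/192<1/384$. Your own check $B/(16c_1)>1/200$ then reads $1/384>1/200$, which is false.

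The bounded-occurrence step gains nothing, and if it costs any constant, as you expect, it breaks the bound. Balance is not a property of the source instance; \textsc{Max-E3-Lin-2} has no positive or negative occurrences. Balance comes from each eight-symbol gadget, which has two $x$-symbols of each label and one $y$-symbol and one $z$-symbol of each label. Moreover, $\alpha_1$ and $\alpha_2$ are reorderings of the same alphabet $\Sigma$ and add no symbols, so $d=N+M=3N+1=24m+1$ with no degree bound. The conclusion of your fallback is right, but not because of per-occurrence symbols. Far from being weak, $1/200$ is close to this reduction's limit of $1/192$, so any loss in H{\aa}stad's gap or any growth of $d/m$ would destroy it. The correct count is as follows. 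For $m\geq4$, $kd=4(24m+1)\leq97m$. In the completeness case the algorithm's output costs at most $(17+\eta+97\varepsilon)m$, and in the soundness case every permutation costs at least $(35/2-\eta)m$. Choosing $\eta$ with $2\eta+97\varepsilon<1/2$, which is possible because $97/200<1/2$, separates the two cases.
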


Together with the exact polynomial-time algorithm for three input permutations~\cite{chakraborty2021approximating}, our result establishes a sharp transition from exact tractability with three inputs to constant-factor inapproximability with four. By comparison, the general median-string problem under edit distance can be solved exactly in polynomial time for every fixed number of input strings~\cite{Sankoff75,kruskal1983}. This contrast highlights the computational challenges imposed by requiring the median itself to be a permutation, even when the number of inputs is fixed.

The same multiplicative inapproximability bound holds for the Ulam center, ruling out a PTAS unless \(\mathrm{P}=\mathrm{NP}\), again with only four input permutations.

\begin{restatable}{theorem}{CenterHardnessTheorem}
\label{thm:center-hardness}
    For every fixed \(0<\varepsilon<1/34\), it is $\mathrm{NP}$-hard to approximate the Ulam center within a (multiplicative) factor of \(35/34-\varepsilon\), even when the input consists of exactly four permutations.
\end{restatable}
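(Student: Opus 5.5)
We derive this from the median hardness of \Cref{thm:four-input-hardness} via a gap-preserving reduction from Ulam median to Ulam center that keeps the number of inputs at four. The point is that the center objective (a maximum) should be forced to track the median objective (a sum). We use the hard median instances produced by the framework, not arbitrary ones. For those instances we expect the four input permutations to play symmetric roles: two encode the constraints and two enforce consistency. The anchor and balanced-label structure should then make any near-optimal median roughly equidistant from all four inputs.

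\textbf{Key steps.}

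\textbf{Step 1 (equalize distances).} Establish the following claim for the hard instances $\pi_1,\dots,\pi_4$: any permutation $\sigma$ can be converted, in polynomial time, into $\sigma'$ with
\[
\sum_i \mathrm{Ulam}(\sigma',\pi_i)\le \sum_i \mathrm{Ulam}(\sigma,\pi_i)
\]
and with $\max_i \mathrm{Ulam}(\sigma',\pi_i)$ within $o(d)$ of the average. If the instance lacks this property, symmetrize it. Concatenate a second copy of the instance on fresh symbols, with the roles of the inputs permuted cyclically (for example, swap the constraint pair with the consistency pair, or reverse the order of the gadgets). Each new input $\pi_i^{\mathrm{new}}$ is the concatenation of $\pi_i$ on block $A$ and $\pi_{\tau(i)}$ on block $B$, for a permutation $\tau$ of $\{1,2,3,4\}$ whose cyclic subgroup acts transitively.

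\textbf{Step 2 (decomposition).} Show that the Ulam distance decomposes approximately over disjoint symbol blocks. For concatenations over disjoint alphabets, the LCS of the inputs with any $\sigma$ is at most the sum of the per-block LCS values. It is attained by restricting $\sigma$ to each block and concatenating. So the optimal center may be assumed blockwise, and its cost to input $i$ is $\sum_{j}\mathrm{cost}_j(\tau^j(i))$. Taking the cyclic group of order $4$ with $4$ blocks, every input sees every role exactly once. Choosing the same restricted permutation $\sigma_B$ in every block then makes all four distances equal to the median value. Hence
\[
\OPT_{\mathrm{center}}(\text{new})\le \OPT_{\mathrm{median}}.
\]

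\textbf{Step 3 (lower bound).} For any $\sigma$, the maximum distance is at least the average:
\[
\max_i \mathrm{Ulam}(\sigma,\pi_i^{\mathrm{new}})\ge \tfrac14\sum_i \mathrm{Ulam}(\sigma,\pi_i^{\mathrm{new}}).
\]
The right-hand side equals $\tfrac14\sum_j \mathrm{med}_j(\sigma_j)\ge \OPT_{\mathrm{median}}$. Thus
\[
\OPT_{\mathrm{center}}(\text{new})=\OPT_{\mathrm{median}}
\]
exactly. A $(35/34-\varepsilon)$-approximation for the center therefore yields a median solution, namely the best block restriction, of the same quality. This proves the theorem with $4d$ symbols and four inputs.

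\textbf{Main obstacle.} The delicate point is the exact additivity of the LCS across blocks in Step 2. A permutation $\sigma$ may interleave symbols from different blocks, but since each $\pi_i^{\mathrm{new}}$ places the blocks in a fixed order, the common subsequence with $\pi_i^{\mathrm{new}}$ restricted to each block is a common subsequence of the restrictions. Therefore
\[
\mathrm{LCS}\le\sum_j \mathrm{LCS}_j,
\]
which gives exactly the inequality we need, namely that distances are at least the blockwise sums. I would verify this carefully, together with the requirement that the block order be identical across all inputs, so that restricting and concatenating is optimal simultaneously for all four inputs.
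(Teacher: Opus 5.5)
Your Steps~2--3 are the paper's argument (\Cref{lem:median-to-center}): four disjoint alphabet copies in a common block order, inputs cyclically shifted across the blocks, and blockwise additivity of LCS for candidates sorted by block. The same median placed in every block gives the upper bound, averaging over the four inputs gives the lower bound, and the cheapest block is extracted, so the proof is correct and matches the paper's. The equalization claim in Step~1 is never proved but is also never needed, and in Step~3 ``equals'' should read ``$\geq$'' for non-blockwise $\sigma$, which your main-obstacle paragraph already supplies.
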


Previously, the Ulam center was known to be $\mathrm{NP}$-hard only when the number of inputs is unrestricted~\cite{BACHMAIER20152}. On the algorithmic side, it admits an exact polynomial-time algorithm for three input permutations and an approximation factor strictly better than $3/2$ for every fixed number of inputs~\cite{CGJ21}. Our result therefore establishes the same sharp transition from exact tractability with three inputs to constant-factor inapproximability with four. For general strings under edit distance, however, a dynamic program analogous to the one for median strings computes an exact center in polynomial time for any fixed number of inputs. Thus, the hardness caused by the permutation constraint extends beyond the median objective and holds for the center objective as well.

We obtain the hardness for the center objective through an approximation-preserving reduction from Ulam median to Ulam center that also preserves the number of input permutations. Consequently, any improvement to the multiplicative hardness bound for Ulam median immediately transfers to Ulam center under the same restriction on the number of inputs. The same reduction also rules out a polynomial-time additive approximation scheme for Ulam center unless \(\mathrm{P}=\mathrm{NP}\), even with four inputs.

\begin{restatable}{theorem}{CenterAdditiveHardness}
\label{thm:center-additive-hardness}
For every fixed $0 < \varepsilon \leq 1/200$, it is $\mathrm{NP}$-hard to approximate the Ulam center within an additive $\varepsilon d$-factor with $k$ input permutations over an alphabet of size $d$, even when $k = 4$.
\end{restatable}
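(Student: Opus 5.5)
We derive \Cref{thm:center-additive-hardness} from \Cref{thm:median-additive-hardness} through the median-to-center reduction already announced in the introduction. That reduction keeps the number of inputs fixed at $k=4$ and turns a median instance into a center instance whose optimum tracks the median optimum.

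The reduction first pads. From a median instance $\pi_1,\dots,\pi_4$ over $d$ symbols, we build four center inputs $\sigma_1,\dots,\sigma_4$ over an alphabet of size $d' = O(d)$. Each $\sigma_i$ is a concatenation of four blocks on disjoint copies $\Sigma^{(1)},\dots,\Sigma^{(4)}$ of the alphabet. Block $j$ of $\sigma_i$ is the copy of $\pi_{i+j \bmod 4}$, so that every input permutation appears in every block position across the four inputs. A consensus $\tau$ may be normalised so that it respects the block order, giving $\tau = \tau^{(1)}\tau^{(2)}\tau^{(3)}\tau^{(4)}$ with $\tau^{(j)}$ a permutation of $\Sigma^{(j)}$. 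Any symbol $\tau$ interleaves across blocks costs at least as much as moving it back, and this exchange argument has to be checked. Under the normalisation, $\ulam(\tau,\sigma_i) = \sum_j \ulam(\tau^{(j)}, \pi_{i+j})$.

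Taking every $\tau^{(j)}$ equal to the same median $\mu$ makes every $\sigma_i$ at distance exactly $\sum_\ell \ulam(\mu,\pi_\ell)$. Hence $\OPT_{\mathrm{center}} \le \OPT_{\mathrm{median}}$.

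The lower bound is the main obstacle. The maximum over $i$ of $\sum_j \ulam(\tau^{(j)},\pi_{i+j})$ is at least the average, which equals $\frac14\sum_j\sum_\ell \ulam(\tau^{(j)},\pi_\ell)$. Each inner sum $\sum_\ell \ulam(\tau^{(j)},\pi_\ell)$ is at least $\OPT_{\mathrm{median}}$, so the average is at least $\OPT_{\mathrm{median}}$. Thus $\OPT_{\mathrm{center}} = \OPT_{\mathrm{median}}$ exactly, and the cyclic shifts make the averaging tight. The construction is polynomial, so it is approximation-preserving both multiplicatively and additively, provided the block-order normalisation is rigorous. My first attempt at that normalisation will use the LCS formulation: an LCS between $\tau$ and $\sigma_i$ restricted to $\Sigma^{(j)}$ is a common subsequence of the restrictions. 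This gives $\LCS(\tau,\sigma_i)\le\sum_j \LCS(\tau|_{\Sigma^{(j)}},\pi_{i+j})$ directly, with no normalisation needed.

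Now the parameters. \Cref{thm:median-additive-hardness} gives hardness of distinguishing median optima that differ by $\varepsilon\cdot 4d$ for every $0<\varepsilon\le 1/200$. The center instance has alphabet size $d'=4d$, and its optimum equals the median optimum. An additive $\varepsilon d'$-approximation for the center therefore equals an additive $\varepsilon\cdot 4d = \varepsilon kd$ approximation of the median, which would decide the hard gap. This yields the theorem with the same range of $\varepsilon$.
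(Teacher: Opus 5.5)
Your proposal is correct and follows the paper's argument: the same cyclic block construction over $4d$ symbols with $\OPT_{\mathrm{ctr}}=\OPT_{\mathrm{med}}$, then the identity $\varepsilon\cdot 4d=\varepsilon kd$ to invoke \Cref{thm:median-additive-hardness}. Bounding $\LCS(\tau,\sigma_i)$ by the blockwise restrictions is a valid shortcut that replaces the paper's stable-sorting step, and your averaging also shows that some restriction $\tau|_{\Sigma^{(j)}}$ has median cost at most $\rad(\tau)$, which is exactly the solution extraction the paper uses to apply the median theorem as a black box.
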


\subsection{Technical overview}

We begin with an overview of the reduction establishing $\mathrm{NP}$-hardness, then explain how its underlying ideas lead to a general framework and our hardness-of-approximation results.

\paragraph{The basic 3-SAT reduction.} To illustrate our main techniques, we start with a simplified description
of how a \textsc{3-SAT} formula \(\varphi\) can be transformed into
four permutations whose optimal Ulam median encodes its
satisfiability. The key is to view satisfiability as a \textit{partition
problem}: place every literal occurrence on a true side or a false
side, so that every clause contains a true occurrence and, for each
variable, all positive occurrences lie on one side and all negative
occurrences on the other. The latter condition ensures that the
partition comes from a single assignment. Thus, our goal is to construct four permutations whose Ulam median encodes this partition problem.

To do so, we start from a \emph{balanced}\footnote{Balanced means that every variable has equally
many positive and negative occurrences.} \(3\)-CNF formula
\(\varphi\) and introduce a fresh symbol for each \textit{literal occurrence} (so even
two appearances of the same literal in different clauses receive different symbols).
Next, we construct two permutations \(\beta_1,\beta_2\) over this alphabet that give
a \emph{clause score}. For a set \(S\) of proposed true occurrences (which are possibly inconsistent),
they satisfy
\[
\LCS(\beta_1|_S,\beta_2|_S)
=
\text{number of clauses containing an occurrence from \(S\)},
\]
where by \(\beta_i|_S\), we mean the string obtained by deleting all symbols outside \(S\) in \(\beta_i\). {To construct this pair, let \(a_j,b_j,c_j\) denote the three occurrence symbols of clause \(j\). In a fixed clause order, \(\beta_1\) concatenates the blocks \(a_jb_jc_j\), while \(\beta_2\) concatenates their reversals \(c_jb_ja_j\). A common subsequence can therefore contain at most one symbol from each clause, and the shared block order lets it contain one from every clause with a symbol in \(S\).}

Then, we construct two more permutations \(\alpha_1,\alpha_2\) that give
a \emph{consistency score}, measuring how consistent different occurrences of the same literal are across clauses. For a set \(S'\)
of proposed false occurrences, they satisfy
\[
\LCS(\alpha_1|_{S'},\alpha_2|_{S'})
=
\begin{gathered}
\text{maximum number of occurrences in \(S'\) that can}\\
\text{simultaneously be false under a single assignment}.
\end{gathered}
\]
{To construct this pair, let \(B_x^0\) and \(B_x^1\) list all negative and all positive occurrences of \(x\), respectively, in fixed internal orders. Both permutations list the variables in the same order, using \(B_x^0B_x^1\) in \(\alpha_1\) and \(B_x^1B_x^0\) in \(\alpha_2\), without changing either block internally. A common subsequence can use symbols from at most one block for each variable, but can retain all selected symbols from that block. After restricting to \(S'\), choosing the block with more remaining symbols corresponds to setting \(x\) so that as many of its selected occurrences as possible are false, yielding the stated consistency score.}
Since the formula is balanced, every assignment makes exactly half
of all occurrences false. This is therefore an upper bound on the
consistency score, attained by the false side of any consistent
partition. \Cref{fig:overview-local-gadgets} illustrates the two mechanisms: reversing the occurrence order within each clause block and swapping the blocks of positive and negative occurrences for each variable.

\begin{figure}[!htbp]
\centering
\begin{tikzpicture}[
  font=\small,
  token/.style={draw=black!40,fill=black!3,minimum width=.78cm,
    minimum height=.58cm,inner sep=2pt},
  chosen/.style={token,draw=blue!65!black,fill=blue!10},
  block/.style={token,minimum width=1.65cm},
  picked/.style={block,draw=orange!70!black,fill=orange!15},
  rowlabel/.style={anchor=east,font=\scriptsize}
]
  \node[font=\small\bfseries] at (2.8,.65) {(a) Clause witnesses};
  \node[rowlabel] at (.65,0) {$\beta_1$};
  \node[rowlabel] at (.65,-1.3) {$\beta_2$};
  \node at (1.05,0) {$\cdots$};
  \node at (1.05,-1.3) {$\cdots$};
  \node[token] (a1) at (1.8,0) {$a_j$};
  \node[chosen] (b1) at (2.8,0) {$b_j$};
  \node[token] (c1) at (3.8,0) {$c_j$};
  \node[token] (c2) at (1.8,-1.3) {$c_j$};
  \node[chosen] (b2) at (2.8,-1.3) {$b_j$};
  \node[token] (a2) at (3.8,-1.3) {$a_j$};
  \node at (4.6,0) {$\cdots$};
  \node at (4.6,-1.3) {$\cdots$};
  \draw[black!25,dashed] (a1.south)--(a2.north);
  \draw[black!25,dashed] (c1.south)--(c2.north);
  \draw[blue!65!black,line width=1pt] (b1.south)--(b2.north);
  \node[align=center,text width=5.6cm] at (2.8,-2.15)
    {At most one LCS match\\from each clause};

  \draw[black!20] (6.1,.95)--(6.1,-2.65);
  \begin{scope}[xshift=7.1cm]
    \node[font=\small\bfseries] at (2.8,.65) {(b) Variable consistency};
    \node[rowlabel] at (.3,0) {$\alpha_1$};
    \node[rowlabel] at (.3,-1.3) {$\alpha_2$};
    \node at (.75,0) {$\cdots$};
    \node at (.75,-1.3) {$\cdots$};
    \node[picked] (n1) at (1.9,0) {$B_x^0$};
    \node[block] (p1) at (3.8,0) {$B_x^1$};
    \node[block] (p2) at (1.9,-1.3) {$B_x^1$};
    \node[picked] (n2) at (3.8,-1.3) {$B_x^0$};
    \node at (4.95,0) {$\cdots$};
    \node at (4.95,-1.3) {$\cdots$};
    \draw[black!25,dashed] (p1.south)--(p2.north);
    \draw[orange!70!black,line width=1pt] (n1.south)--(n2.north);
    \node[align=center,text width=5.6cm] at (2.8,-2.15)
      {At most one label class\\from each variable};
  \end{scope}
\end{tikzpicture}
\caption{The two local mechanisms in the \textsc{3-SAT} reduction.
(a) The three occurrence symbols of each clause appear in reverse order,
while the clause blocks retain their order. Restricting to $S$ gives one
LCS match for each clause meeting $S$.
(b) For each variable $x$, the negative-occurrence block $B_x^0$ and the
positive-occurrence block $B_x^1$ swap places, retaining their internal
orders. Restricting to $S'$ contributes the larger of the two label-class
sizes. Solid connections illustrate one admissible choice in each panel.}
\label{fig:overview-local-gadgets}
\end{figure}

Finally, we take a long anchor block \(Z\) of fresh auxiliary
symbols and form the four input permutations
\[
\beta_1Z,\qquad \beta_2Z,\qquad Z\alpha_1,\qquad Z\alpha_2.
\]
Our anchor-block lemma (\Cref{lem:anchor-block}) shows that, when \(Z\) is sufficiently long,
some optimal median has the form \(\tau_S Z\tau_{S'}\), where
\(\tau_S\) and \(\tau_{S'}\) order the symbols in a partition
\(S\sqcup S'\) of the literal occurrences (see~\Cref{fig:overview-anchor-decoding} for an illustration). We interpret the symbols before \(Z\) as proposed true
occurrences and those after it as proposed false occurrences.
This partition need not be consistent. After optimizing the order
within each side, we show that the total LCS score depends on the sum of the clause score of \(S\) and the consistency
score of \(S'\). Thus, an optimal median must choose a partition
maximizing the sum of these two scores.

The partition induced by a satisfying assignment attains the maximum possible sum:
the clause score counts every clause, and the consistency score
counts exactly half of all occurrences. Conversely, attaining
both bounds yields a satisfying assignment. Thus, the optimum
median cost determines whether \(\varphi\) is satisfiable,
proving NP-hardness even with four input permutations.

In fact, the reduction is also \textit{gap-preserving}. An optimal
median corresponds to a partition, not necessarily
consistent, that maximizes the sum of the clause score on the
proposed true side and the consistency score on the proposed false
side. We show that inconsistent partitions never help: any extra
clause score obtained by placing incompatible literal occurrences
on the proposed true side is offset by a loss in the consistency
score. {The key is a quantitative decoding argument: for each variable, choose the value that makes as many of its occurrences in \(S\) true as possible, breaking ties arbitrarily, and let \(D\) count selected occurrences that remain false. The clause score can exceed the number of clauses satisfied by this assignment by at most \(D\), while balance makes the consistency score exactly \(N/2-D\), where \(N\) is the total number of literal-occurrence symbols. Thus, the decoded assignment induces a consistent partition with at least the same combined score, allowing the four input permutations to preserve the source gap. \Cref{fig:overview-anchor-decoding} summarizes how decoding bounds its combined score.} 

Combining this with the fact that \textsc{Max-3-SAT} is NP-hard to approximate to any factor better than 7/8 gives hardness
of approximation within every fixed factor smaller than \(53/52\),
even with exactly four input permutations. The details appear in
\Cref{sec:3sat-reduction}. 

\begin{figure}[!htbp]
\centering
\begin{tikzpicture}[
  font=\small,
  bar/.style={draw=black!55,minimum height=.55cm,inner sep=2pt},
  clause/.style={bar,fill=blue!10},
  consistency/.style={bar,fill=orange!15},
  anchorblock/.style={bar,fill=black!12},
  score/.style={draw=black!35,rounded corners=2pt,minimum height=.75cm,
    text width=5.9cm,align=center,inner sep=5pt}
]
  \node[font=\small\bfseries] at (3.1,.65) {Clause inputs};
  \node[font=\small\bfseries] at (10.1,.65) {Consistency inputs};
  \node[clause,minimum width=2cm] at (1.7,0) {$\beta_1$};
  \node[anchorblock,minimum width=2.8cm] at (4.1,0) {$Z$};
  \node[clause,minimum width=2cm] at (1.7,-.68) {$\beta_2$};
  \node[anchorblock,minimum width=2.8cm] at (4.1,-.68) {$Z$};
  \node[anchorblock,minimum width=2.8cm] at (9.1,0) {$Z$};
  \node[consistency,minimum width=2cm] at (11.5,0) {$\alpha_1$};
  \node[anchorblock,minimum width=2.8cm] at (9.1,-.68) {$Z$};
  \node[consistency,minimum width=2cm] at (11.5,-.68) {$\alpha_2$};

  \node[align=center,font=\scriptsize] at (6.6,-1.4)
    {Anchor lemma: some optimal median has the form};
  \draw[-{Stealth[length=2mm]},black!65] (6.6,-1.67)--(6.6,-1.96);
  \node[clause,minimum width=3cm] at (3.05,-2.3) {$\tau_S$};
  \node[anchorblock,minimum width=4.1cm] at (6.6,-2.3) {$Z$};
  \node[consistency,minimum width=3cm] at (10.15,-2.3) {$\tau_{S'}$};
  \node[font=\scriptsize] at (3.05,-2.82) {proposed true occurrences};
  \node[font=\scriptsize] at (10.15,-2.82) {proposed false occurrences};

  \node[font=\scriptsize,align=center] at (6.6,-3.43)
    {Decode each variable by its majority label in $S$};
  \node[score,fill=blue!5] (q) at (3.25,-4.16)
    {$q(S)\leq\Val_\varphi(f)+D$};
  \node[score,fill=orange!7] (k) at (9.95,-4.16)
    {$\kappa(S')=N/2-D$};
  \draw[-{Stealth[length=2mm]},black!65] (q.south)--(5.9,-4.96);
  \draw[-{Stealth[length=2mm]},black!65] (k.south)--(7.3,-4.96);
  \node[align=center] at (6.6,-5.3)
    {$q(S)+\kappa(S')\leq\Val_\varphi(f)+N/2$};
\end{tikzpicture}
\caption{From four input permutations to a partition, and from an
arbitrary partition to a consistent assignment. Here $N$ is the number
of literal-occurrence symbols, $q(S)$ is the clause score, and
$\kappa(S')$ is the consistency score. For the majority assignment $f$, with ties broken arbitrarily,
let $D$ count occurrences in $S$ that are false under $f$. Any extra clause score is at most $D$, while the
consistency score loses exactly $D$. Thus, inconsistent partitions
cannot improve the combined score.}
\label{fig:overview-anchor-decoding}
\end{figure}

\paragraph{A general framework and stronger hardness.}
Looking back at the proof above, one may see that the only part of the
reduction specific to \textsc{3-SAT} is the pair of clause
permutations \(\beta_1,\beta_2\). It is therefore natural to ask
whether we can replace them by permutations for a different Boolean
constraint. We show that this is indeed possible. More precisely,
we associate each symbol with a variable and a Boolean label. For
\textsc{3-SAT}, the label records whether a literal occurrence is
positive or negative, or equivalently, the value that makes the
literal true. In general, the label should be interpreted as the
value that \textit{the symbol wants its associated variable to take}. It
then suffices to find, for each constraint, a pair of local
permutations whose score on the symbols agreeing with a proposed
local assignment is large exactly when the
constraint is satisfied. More precisely, each constraint must admit a polynomial-time constructible pair of permutations on \(B\) labeled symbols whose restricted LCS is exactly \(A+\Delta\) for every satisfying local assignment and exactly \(A\) for every unsatisfying one, for fixed constants \(B\), \(A\geq0\), and \(\Delta>0\) common to all constraints. The preceding argument -- that any gain in the constraint score from inconsistent choices is offset by a loss in the consistency score -- extends beyond \textsc{3-SAT}. Its key ingredient is gadget-independent: deleting the same \(D\) symbols from both restricted permutations decreases their LCS by at most \(D\), regardless of the internal structure of the local constraint gadgets. Assuming that each variable has equally many symbols of each label overall, the consistency and anchor gadgets can be reused, and the same argument controls inconsistent
partitions. The resulting hardness factor obtained from the reduction depends both on the gap of the starting CSP and on the quality of these local gadgets. 

We instantiate this framework with
\textsc{Max-E3-Lin-2}.\footnote{The constraints are linear equations
over \(\mathbb F_2\), each involving three distinct Boolean
variables; the goal is to maximize the number of satisfied
equations.}
We first construct a pair of balanced eight-symbol permutations \(P_j\) and \(P'_j\) whose restricted LCS
is three when the equation is satisfied and two otherwise. For an equation \(x\oplus y\oplus z=1\), the gadget in~\Cref{fig:overview-parity-gadget} uses two copies of each label for \(x\) and one copy of each label for \(y\) and \(z\), so every assignment selects four symbols and the gadget is balanced for each variable. The two selected \(x\)-symbols appear on either side of the selected \(y\)- and \(z\)-symbols in one permutation and are consecutive in the other, preventing all four symbols from forming a common subsequence. Their placement encodes parity: a satisfying assignment puts the selected \(y\)- and \(z\)-symbols on the same side of the consecutive \(x\)-pair, in the same order in both permutations, so they can be matched together with one \(x\)-symbol. An unsatisfying assignment produces two disjoint reversed pairs, forcing every common subsequence to omit at least one symbol from each pair and limiting the LCS to two. Flipping the labels of the \(x\)-symbols handles \(x\oplus y\oplus z=0\), interchanging the parity classes while preserving balance. We then use the fact that \textsc{Max-E3-Lin-2} is hard to
approximate beyond \(1/2+\varepsilon\), which is substantially stronger
than the \(7/8+\varepsilon\) guarantee for \textsc{3-SAT}.
Combining the gadget with this larger source gap improves the
hardness bound to every fixed factor smaller than \(35/34\).
The framework and its instantiation appear in
\Cref{sec:boolean-csp-framework,sec:e3lin-gadget}.

\begin{figure}[!htbp]
\centering
\begin{tikzpicture}[
  x=1.25cm,y=1cm,font=\small,
  symbol/.style={rounded corners=1pt,minimum width=0.90cm,
    minimum height=0.58cm,inner sep=2pt},
  excluded/.style={symbol,draw=black!45,dashed,
    text=black!65,fill=black!4},
  selected/.style={symbol,draw=black!70,text=black,fill=white},
  matched/.style={selected,draw=teal!70!black,fill=teal!12},
  match/.style={draw=teal!70!black,line width=0.9pt},
  rowlabel/.style={anchor=east},
  legend/.style={minimum width=0.38cm,minimum height=0.28cm}
]
  \node at (4.5,0.65)
    {\textbf{Unsatisfied:} $(x,y,z)=(0,0,0)$};
  \node[rowlabel] at (0.35,0) {$P_j$};
  \node[rowlabel] at (0.35,-1.2) {$P'_j$};

  \foreach \i/\s/\state in {
    1/{x_0^{(1)}}/selected, 2/{y_0}/matched,
    3/{z_0}/matched,       4/{x_1^{(1)}}/excluded,
    5/{x_1^{(2)}}/excluded,6/{z_1}/excluded,
    7/{y_1}/excluded,      8/{x_0^{(2)}}/selected}
    \node[\state] (fP\i) at (\i,0) {$\s$};

  \foreach \i/\s/\state in {
    1/{x_1^{(1)}}/excluded,2/{y_0}/matched,
    3/{z_1}/excluded,      4/{x_0^{(1)}}/selected,
    5/{x_0^{(2)}}/selected,6/{z_0}/matched,
    7/{y_1}/excluded,      8/{x_1^{(2)}}/excluded}
    \node[\state] (fQ\i) at (\i,-1.2) {$\s$};

  \draw[match] (fP2.south) -- (fQ2.north);
  \draw[match] (fP3.south) -- (fQ6.north);
  \node at (4.5,-1.85) {$\LCS(P_j|_A,P'_j|_A)=2$};

  \draw[black!20] (0.35,-2.2) -- (8.45,-2.2);

  \begin{scope}[yshift=-3.3cm]
    \node at (4.5,0.65)
      {\textbf{Satisfied:} $(x,y,z)=(0,0,1)$};
    \node[rowlabel] at (0.35,0) {$P_j$};
    \node[rowlabel] at (0.35,-1.2) {$P'_j$};

    \foreach \i/\s/\state in {
      1/{x_0^{(1)}}/selected,2/{y_0}/matched,
      3/{z_0}/excluded,      4/{x_1^{(1)}}/excluded,
      5/{x_1^{(2)}}/excluded,6/{z_1}/matched,
      7/{y_1}/excluded,      8/{x_0^{(2)}}/matched}
      \node[\state] (tP\i) at (\i,0) {$\s$};

    \foreach \i/\s/\state in {
      1/{x_1^{(1)}}/excluded,2/{y_0}/matched,
      3/{z_1}/matched,       4/{x_0^{(1)}}/selected,
      5/{x_0^{(2)}}/matched, 6/{z_0}/excluded,
      7/{y_1}/excluded,      8/{x_1^{(2)}}/excluded}
      \node[\state] (tQ\i) at (\i,-1.2) {$\s$};

    \draw[match] (tP2.south) -- (tQ2.north);
    \draw[match] (tP6.south) -- (tQ3.north);
    \draw[match] (tP8.south) -- (tQ5.north);
    \node at (4.5,-1.85) {$\LCS(P_j|_A,P'_j|_A)=3$};
  \end{scope}

  \node[excluded,legend] at (0.8,-5.95) {};
  \node[anchor=west,font=\scriptsize] at (1.05,-5.95)
    {not selected};
  \node[selected,legend] at (3.3,-5.95) {};
  \node[anchor=west,font=\scriptsize] at (3.55,-5.95)
    {selected, outside shown LCS};
  \node[matched,legend] at (7.0,-5.95) {};
  \node[anchor=west,font=\scriptsize] at (7.25,-5.95)
    {shown LCS};
\end{tikzpicture}
\caption{The eight-symbol parity gadget for $x\oplus y\oplus z=1$.
For the indicated local assignment $a$, the four symbols in
$A$ are the ones whose labels (indicated as subscripts) agree with \(a\) and shown with solid borders; the other four disagreeing symbols are gray
and dashed. Teal nodes and connecting lines exhibit a longest common
subsequence of the restrictions to $A$: $y_0z_0$ in the unsatisfied
example and $y_0z_1x_0^{(2)}$ in the satisfied example.
In general, the restricted LCS has length $2$ for every unsatisfying
assignment and length $3$ for every satisfying assignment.}
\label{fig:overview-parity-gadget}
\end{figure}

\paragraph{Median hardness to center hardness.} To transfer the hardness results to Ulam center, we use the cyclic block reduction from median to center for the Kendall-tau metric~\cite{biedl2009complexity}, whose applicability to the Ulam metric was observed in~\cite{CGJ21}. These works focus on polynomial-time reductions sufficient for establishing $\mathrm{NP}$-hardness, rather than explicitly formulating the approximation-preservation guarantees needed here. Given a Ulam median instance with $k$ input permutations over $d$ symbols, we create $k$ disjoint copies of its alphabet and construct $k$ center inputs over $kd$ symbols. All inputs use the same order of the alphabet copies, while the original permutations are arranged cyclically across them. Thus, every center input contains one copy of each original permutation, and every original permutation appears once in each block across the center instance.

Placing the same median candidate in every block makes its distance to each center input equal to its total distance in the original median instance. Conversely, the common block order lets us rearrange any center candidate blockwise without increasing its radius. Averaging over the center inputs then shows that at least one block defines a median candidate of cost no larger than that radius. These observations establish the exact preservation of the optimum value and the lossless extraction of a median solution from any center solution (see~\Cref{lem:median-to-center}). Together with the preservation of the number of inputs and the alphabet size of $kd$ in the center instance, these guarantees transfer both multiplicative and additive hardness from the Ulam median to the Ulam center. For completeness, we provide a detailed proof of these guarantees in~\Cref{sec:reduction-med-center}.

\section{Preliminaries}\label{sec:preliminaries}
\noindent \textbf{Notations. } For $m\in\mathbb N$, we use $[m]$ to denote the set $\{1,\ldots,m\}$. We use the notation $\sqcup$ to denote the union of two disjoint sets. For any set $X$ partitioned into $t$ disjoint subsets $X_1,X_2,\ldots,X_t$, we thus write $X = X_1 \sqcup X_ 2 \sqcup \ldots \sqcup X_t$.

\paragraph{Permutations and Ulam distance.} For an alphabet $\Sigma$, let $\mathcal P(\Sigma)$ denote the set of all permutations of $\Sigma$. For a permutation $\pi\in\mathcal P(\Sigma)$ and a subset $S\subseteq\Sigma$, let $\pi|_S$ denote the restriction of $\pi$ to the symbols in $S$, with their relative order preserved. For two sequences $x$ and $y$, let $\LCS(x,y)$ denote the length of a \emph{longest common subsequence}. For any two $x,y\in\mathcal P(\Sigma)$, their \emph{Ulam distance} is
\[
    \U(x,y):=|\Sigma|-\LCS(x,y).
\]

We will repeatedly use the following elementary identity, which follows from the triangle inequality for the Ulam distance.

\begin{lemma}\label{lem:two-permutation-score}
Let $\Sigma$ be an alphabet and let $\pi,\sigma\in\mathcal P(\Sigma)$. Then
\[
    \max_{\tau\in\mathcal P(\Sigma)}\bigl(\LCS(\tau,\pi)+\LCS(\tau,\sigma)\bigr)=|\Sigma|+\LCS(\pi,\sigma).
\]
\end{lemma}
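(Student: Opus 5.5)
We prove the identity by matching an upper bound and a lower bound. Write $n=|\Sigma|$. The upper bound comes from the triangle inequality for the Ulam distance. The lower bound comes from an explicit permutation $\tau$ attaining it.

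\emph{Upper bound.} For any $\tau\in\mathcal P(\Sigma)$, the Ulam distance is a metric on $\mathcal P(\Sigma)$. In particular it obeys the triangle inequality, since it is one half of the insertion-deletion edit distance, which is a metric. Hence
\[
\U(\pi,\sigma)\le \U(\tau,\pi)+\U(\tau,\sigma).
\]
Substituting $\U(x,y)=n-\LCS(x,y)$ gives
\[
n-\LCS(\pi,\sigma)\le 2n-\LCS(\tau,\pi)-\LCS(\tau,\sigma).
\]
Rearranging yields $\LCS(\tau,\pi)+\LCS(\tau,\sigma)\le n+\LCS(\pi,\sigma)$. Since $\tau$ was arbitrary, the maximum is at most $n+\LCS(\pi,\sigma)$.

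\emph{Lower bound.} It suffices to take $\tau=\pi$. Then $\LCS(\pi,\pi)=n$, so
\[
\LCS(\tau,\pi)+\LCS(\tau,\sigma)=n+\LCS(\pi,\sigma).
\]
The choice $\tau=\sigma$ works equally well. The maximum is therefore attained, and the identity follows.

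No step here is a real obstacle. The only point that deserves a sentence is why the triangle inequality holds. One route is the edit-distance characterization: a deletion/insertion script from $\pi$ to $\tau$ composed with one from $\tau$ to $\sigma$ is a script from $\pi$ to $\sigma$. A more direct route, if we want to avoid citing it, goes through common subsequences. Let $C_1\subseteq\Sigma$ be the symbol set of a common subsequence of $\tau$ and $\pi$, and $C_2$ that of $\tau$ and $\sigma$. On $C_1\cap C_2$, the orders of $\pi$ and $\sigma$ both agree with $\tau$, so $\pi$ and $\sigma$ share a common subsequence on $C_1\cap C_2$. Therefore
\[
\LCS(\pi,\sigma)\ge |C_1\cap C_2|\ge |C_1|+|C_2|-n,
\]
which is exactly the upper bound.
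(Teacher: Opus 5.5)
Your proof is correct and is essentially the paper's argument: the triangle inequality for $\U$ gives the upper bound, and the choice $\tau=\pi$ attains it. Your supplementary common-subsequence argument (on $C_1\cap C_2$ both $\pi$ and $\sigma$ agree with $\tau$) is a valid self-contained proof of the triangle inequality, which the paper simply takes for granted.
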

\begin{proof}
Let $d:=|\Sigma|$.  For every $\tau\in\mathcal P(\Sigma)$, the triangle
inequality for Ulam distance gives
\[
 \U(\tau,\pi)+\U(\tau,\sigma)\geq \U(\pi,\sigma).
\]
Using the definition of the Ulam distance between any two permutations $\rho,\rho'$, $\U(\rho,\rho')=d-\LCS(\rho,\rho')$, and rearranging, we obtain
\[
 \LCS(\tau,\pi)+\LCS(\tau,\sigma)
 \leq d+\LCS(\pi,\sigma).
\]
This upper bound is attained by taking $\tau=\pi$, since
$\LCS(\pi,\pi)=d$.  The claimed equality follows.
\end{proof}

\paragraph{Ulam median.} An instance of \emph{Ulam median} is a (multi)-set $\Pi=\{\pi_1,\ldots,\pi_k\}$ of permutations of a common alphabet $\Sigma$. The problem asks to find a permutation $\sigma\in\mathcal P(\Sigma)$ that minimizes the total distance to input permutations, namely, $\cost_\Pi(\sigma):=\sum_{i=1}^k \U(\sigma,\pi_i)$. We denote the optimum value of the instance $\Pi$ by
\[
    \OPT_{\mathrm{med}}(\Pi):=\min_{\sigma\in\mathcal P(\Sigma)}\cost_\Pi(\sigma).
\]

\paragraph{Ulam center.} An instance of \emph{Ulam center} is a (multi)-set $\Pi=\{\pi_1,\ldots,\pi_k\}$ of permutations of a common alphabet $\Sigma$. The problem asks to find a permutation $\sigma\in\mathcal P(\Sigma)$ that minimizes the radius $\rad_\Pi(\sigma):=\max_{i\in[k]} \U(\sigma,\pi_i)$. We denote the optimum value of the instance $\Pi$ by
\[
    \OPT_{\mathrm{ctr}}(\Pi):=\min_{\sigma\in\mathcal P(\Sigma)}\rad_\Pi(\sigma).
\]

\paragraph{Approximation scheme.} For the Ulam median problem, an algorithm is a polynomial-time approximation scheme (PTAS) if, given any constant $\varepsilon > 0$ and an instance $\Pi$ of $k$ permutations over an alphabet of size $d$, it computes in polynomial time a solution $\rho$ satisfying $\cost_{\Pi}(\rho) \leq (1+\varepsilon)\OPT_{\mathrm{med}}(\Pi)$. An algorithm is an $\varepsilon kd$-additive approximation scheme for the Ulam median problem if it outputs a solution $\rho$ in polynomial time such that $\cost_{\Pi}(\rho) \leq \OPT_{\mathrm{med}}(\Pi) + \varepsilon kd$. The definition of PTAS for Ulam center is analogous to that of the median problem. An algorithm is an $\varepsilon d$-additive approximation scheme for the Ulam center problem if it outputs a solution $\tau$ in polynomial time such that $\rad_{\Pi}(\tau) \leq \OPT_{\mathrm{ctr}}(\Pi) + \varepsilon d$.

\paragraph{\textsc{Max-E3-Lin-2}.}
An instance of \textsc{Max-E3-Lin-2} is a Boolean CSP
\(\varphi=(V,\mathcal C)\), where
\(V=\{x_1,\ldots,x_n\}\) is a set of Boolean variables and
\(\mathcal C=(C_1,\ldots,C_m)\) is a list of equations over
\(\mathbb F_2\). Each equation has the form
\[
C_j:\qquad
x_{j,1}\oplus x_{j,2}\oplus x_{j,3}=b_j,
\]
where \(x_{j,1},x_{j,2},x_{j,3}\in V\) are distinct variables and
\(b_j\in\{0,1\}\). For an assignment \(f:V\to\{0,1\}\), let
\[
\Val_\varphi(f)
:=
\bigl|
\{j\in[m]:f\text{ satisfies }C_j\}
\bigr|,
\]
and define
\[
\OPT_{\mathrm{CSP}}(\varphi)
:=
\max_{f:V\to\{0,1\}}\Val_\varphi(f).
\]

We use the following gap form of H{\aa}stad's hardness theorem.

\begin{theorem}[{H{\aa}stad~\cite[Theorem~5.4]{Hastad01}}]
\label{thm:hastad-e3lin-gap}
For every fixed \(0<\eta<1/4\), it is \(\mathrm{NP}\)-hard to
distinguish instances \(\varphi\) of \textsc{Max-E3-Lin-2} with
\(m\) equations between the following two cases:
\begin{itemize}
    \item \emph{(Completeness)}
    \(
    \OPT_{\mathrm{CSP}}(\varphi)\geq(1-\eta)m.
    \)
    \item \emph{(Soundness)}
    \(
    \OPT_{\mathrm{CSP}}(\varphi)
    \leq\left(\frac12+\eta\right)m.
    \)
\end{itemize}
\end{theorem}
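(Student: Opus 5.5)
This is H{\aa}stad's theorem, and I would prove it as he does, by a reduction from Label Cover through a long-code PCP with a three-query linear test. The plan proceeds in four steps.

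\textbf{Step 1 (starting point).} I would start from the gap version of Label Cover obtained from the PCP theorem together with Raz's parallel repetition theorem. For every $\delta>0$ there is a label-size bound $R=R(\delta)$ such that it is NP-hard to distinguish bipartite projection instances $(U,W,E,\{\pi_e\})$ that are fully satisfiable from those in which no labeling satisfies more than a $\delta$ fraction of the edges.

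\textbf{Step 2 (the construction).} Each $u\in U$ and $w\in W$ is replaced by a long code, meaning a Boolean table indexed by functions on its label set. The tables are \emph{folded}: over $\pm1$ they are odd, and the table at $w$ is additionally conditioned on the constraint predicate. The verifier then proceeds as follows.
\begin{itemize}
  \item It picks a random edge $(u,w)$ and uniformly random $f\colon[L]\to\{\pm1\}$ and $g\colon[R]\to\{\pm1\}$.
  \item It picks a noise function $\mu$ equal to $-1$ independently with probability $\eta$ at each point.
  \item It accepts if and only if $A_u(f)\,B_w(g)\,B_w\bigl((f\circ\pi)\,g\,\mu\bigr)=1$.
\end{itemize}
Because of folding, each query to $A_u(f)$ is (up to a sign that is a constant) a single Boolean variable. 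So every test is a linear equation $x\oplus y\oplus z=b$ over $\mathbb F_2$, and weighting by the test distribution (by repetition of equations) gives a Max-E3-Lin-2 instance with $m$ equations.

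\textbf{Step 3 (the two cases).}
\begin{itemize}
  \item \emph{Completeness.} Dictator encodings of a satisfying labeling pass unless the noise flips the relevant coordinate, so at least a $(1-\eta)$ fraction of the equations are satisfied.
  \item \emph{Soundness.} Arithmetizing, the acceptance probability equals $\frac12+\frac12\,\mathbb E_{u,w}\sum_{\beta}\hat A_{\pi_2(\beta)}\hat B_\beta^2(1-2\eta)^{|\beta|}$.
\end{itemize}
If this advantage exceeds $\eta$, I would decode with the standard randomized labeling:
\begin{itemize}
  \item choose $\beta$ with probability $\hat B_\beta^2$ and assign $w$ a random element of $\beta$;
  \item choose $\alpha$ with probability $\hat A_\alpha^2$ and assign $u$ a random element of $\alpha$.
\end{itemize}
Folding guarantees that the chosen sets are nonempty. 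Cauchy--Schwarz, together with the bound $(1-2\eta)^{2|\beta|}\le 1/(4\eta|\beta|)$, then shows this labeling satisfies an $\Omega(\eta^{2})$-type fraction of edges. Choosing $\delta$ smaller than that contradicts the soundness of Label Cover, so at most a $(\frac12+\eta)$ fraction of equations can be satisfied.

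\textbf{Step 4 (distinct variables).} The formulation here requires the three variables of each equation to be distinct. In this construction the three queries are automatically distinct except on a negligible event: $A_u$ lives on a different table from $B_w$, and the two $B_w$ queries coincide (after folding) only when $(f\circ\pi)\mu\equiv\pm1$, which has exponentially small probability. I would discard those equations, or replace them by a constant number of fresh-variable dummy equations that are satisfiable in the YES case, and absorb the loss into $\eta$ by running the argument with $\eta/2$.

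\textbf{Main obstacle.} The heart of the argument is the Fourier-analytic soundness step, specifically controlling the large-$|\beta|$ terms through the noise and turning the correlation into a good labeling. Since this is exactly H{\aa}stad's analysis, in the paper I would simply invoke \cite[Theorem~5.4]{Hastad01} and verify only the distinct-variable bookkeeping.
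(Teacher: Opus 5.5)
Your proposal is correct and takes the same approach as the paper, which gives no proof of its own and imports this gap statement directly from H{\aa}stad's Theorem~5.4; your sketch is H{\aa}stad's long-code argument, and your check that equations where the two $B_w$ queries coincide (after folding and conditioning) carry negligible weight is sensible bookkeeping that the paper leaves implicit. One harmless slip: the Cauchy--Schwarz decoding yields a labeling satisfying at least $4\eta(2\eta)^2=16\eta^3$ of the edges, i.e.\ $\Omega(\eta^3)$ rather than $\Omega(\eta^2)$, which changes nothing since any positive function of $\eta$ suffices.
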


\section{The Basic \textsc{3-SAT} Reduction}
\label{sec:3sat-reduction}

In this section, we give a direct reduction from \textsc{3-SAT} to
Ulam median on four input permutations. Let
\[
\varphi:=C_1\wedge C_2\wedge\cdots\wedge C_m
\]
be a \(3\)-CNF formula on the variable set
\(V=\{x_1,x_2,\ldots,x_n\}\). We assume that every clause contains
exactly three literal occurrences, and that, for every variable
\(x\in V\), the number of positive occurrences of \(x\) equals the
number of negative occurrences of \(x\).\footnote{These assumptions do not affect our results; see~\Cref{app:balanced-3sat-hardness}.}
For an assignment \(f:V\to\{0,1\}\) to the variables, let
\[
\Val_\varphi(f)
:=
\bigl|
\{j\in[m]:f\text{ satisfies }C_j\}
\bigr|
\]
denote the number of clauses satisfied by \(f\), and define
\[
\OPT_{\mathrm{Max3SAT}}(\varphi)
:=
\max_{f:V\to\{0,1\}}\Val_\varphi(f).
\]

Our construction will contain one symbol for each \emph{literal
occurrence} in \(\varphi\). In particular, two appearances of the
same literal will correspond to distinct symbols. Let \(\Sigma\)
denote the set of all such literal-occurrence symbols, and let
\(
N:=|\Sigma|=3m.
\)

For a symbol \(\sigma\in\Sigma\), let
\(\operatorname{var}(\sigma)\) denote the variable underlying that
literal occurrence. We associate with \(\sigma\) a
\emph{label} \(\ell(\sigma)\in\{0,1\}\), defined to be the value of
\(\operatorname{var}(\sigma)\) that makes the corresponding literal
true. In other words, a positive occurrence of \(x\) has label \(1\), while a
negative occurrence of \(x\) has label \(0\).

For every variable \(x\in V\) and \(b\in\{0,1\}\), define
\[
\Sigma_x^b
:=
\{\sigma\in\Sigma:
\operatorname{var}(\sigma)=x,\ \ell(\sigma)=b\}.
\]
Since every variable has an equal number of positive and negative occurrences, we have
\begin{equation}
\label{eq:3sat-balance}
|\Sigma_x^0|=|\Sigma_x^1|
\qquad
\text{for every }x\in V.
\end{equation}
For each variable \(x\), write
\(r_x:=|\Sigma_x^0|=|\Sigma_x^1|\), so that
\(\sum_{x\in V}r_x=N/2\).
For each clause \(C_j\), let
\(\Sigma(C_j)\subseteq\Sigma\) denote the set of its three literal-occurrence
symbols. A central concept in our discussion is that of a \textit{truth partition},
which we now define.

\paragraph{Truth partitions.}
We refer to any partition
\(\Sigma=T\sqcup F\) as a \emph{truth partition},
interpreting \(T\) and \(F\) as the proposed sets of true and false
literal occurrences. A truth partition is \emph{consistent} if, for
every variable \(x\in V\), there exists \(b\in\{0,1\}\) such that
\[
\Sigma_x^b\subseteq T
\qquad\text{and}\qquad
\Sigma_x^{1-b}\subseteq F.
\]
In other words, a consistent partition disallows placing occurrences
of the same literal on different sides, or occurrences of a literal
and its negation on the same side.

Next, we define the \textit{agreement set} corresponding to an assignment to the variables. For an assignment \(f:V\to\{0,1\}\) define
\[
\Agr(f)
:=
\{\sigma\in\Sigma:
f(\operatorname{var}(\sigma))=\ell(\sigma)\}.
\]
Note that \(\Agr(f)\) contains precisely the literal
occurrences made true by \(f\). Thus, \(f\) induces the consistent
truth partition
\[
T=\Agr(f),
\qquad
F=\Sigma\setminus\Agr(f).
\]
Under this partition, \(T\cap\Sigma(C_j)\neq\varnothing\) precisely
when \(f\) satisfies the clause \(C_j\). The balance condition also gives
\begin{equation}
\label{eq:3sat-agreement-size}
|\Agr(f)|=\frac N2
\qquad
\text{for every assignment }f.
\end{equation}

We are now ready to describe our construction. At a high level, our
construction consists of two pairs of permutations. The first pair,
which we refer to as the \emph{clause-witness gadget}, provides a clause score, which records how
many clauses are satisfied by a set of proposed true literals. The
second pair, which we refer to as the
\emph{variable-consistency gadget}, supplies a consistency score, which records how consistent different occurrences of the same literal are across clauses. For the remainder of this section, fix an arbitrary total ordering
of the symbols in \(\Sigma\).

\paragraph{Clause-witness gadget.}
For each clause \(C_j\), let \(A_j\in\Sigma^3\) be the string
obtained by listing the three symbols in \(\Sigma(C_j)\) according
to the fixed ordering of \(\Sigma\). Let \(A_j^{\mathcal R}\)
denote its reversal, and define
\[
\beta_1:=A_1A_2\cdots A_m,
\qquad
\beta_2:=A_1^{\mathcal R}A_2^{\mathcal R}\cdots A_m^{\mathcal R}.
\]
The basic property of this gadget is
\(\LCS(\beta_1,\beta_2)=m\): a common subsequence may contain at most
one symbol from each clause and may choose one symbol from every clause.
We will use the following stronger form where we restrict to an arbitrary subset of all literal occurrences.

\begin{lemma}
\label{lem:clause-gadget}
For every \(S\subseteq\Sigma\),
\[
\LCS(\beta_1|_S,\beta_2|_S)
=
\bigl|
\{j\in[m]:S\cap\Sigma(C_j)\neq\varnothing\}
\bigr|.
\]
\end{lemma}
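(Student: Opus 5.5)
We prove the equality by establishing the two inequalities separately. Write $J(S):=\{j\in[m]:S\cap\Sigma(C_j)\neq\varnothing\}$. The starting observation is that restriction commutes with concatenation:
\[
\beta_1|_S=A_1|_S\,A_2|_S\cdots A_m|_S
\qquad\text{and}\qquad
\beta_2|_S=A_1^{\mathcal R}|_S\cdots A_m^{\mathcal R}|_S .
\]
Moreover, $A_j^{\mathcal R}|_S=(A_j|_S)^{\mathcal R}$, since deleting symbols commutes with reversal. Consequently, both restricted strings consist of the same sequence of (possibly empty) clause blocks in the same order. The block of clause $j$ is nonempty exactly when $j\in J(S)$.

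\emph{Lower bound.} For each $j\in J(S)$, pick one symbol $s_j\in S\cap\Sigma(C_j)$. Concatenating the $s_j$ in increasing order of $j$ gives a string that is a subsequence of both $\beta_1|_S$ and $\beta_2|_S$. Indeed, $s_j$ lies in the $j$-th block in both strings, and the blocks appear in the same order in both. Hence the LCS is at least $|J(S)|$.

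\emph{Upper bound.} This is the only step needing care. Take any common subsequence $w$ of $\beta_1|_S$ and $\beta_2|_S$. Since all symbols are distinct, $w$ is just a sequence of distinct symbols of $S$ whose relative order agrees with both strings. We claim $w$ contains at most one symbol from each $\Sigma(C_j)$. Suppose instead that two distinct symbols $u,v\in\Sigma(C_j)$ both occur in $w$, with $u$ before $v$ in $A_j$. Then $u$ precedes $v$ in $\beta_1$, and hence in $\beta_1|_S$. But $v$ precedes $u$ in $A_j^{\mathcal R}$, and hence in $\beta_2|_S$. This contradicts $w$ respecting both orders.

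Every symbol of $w$ lies in $S\cap\Sigma(C_j)$ for some $j$, which forces $j\in J(S)$. Therefore $|w|\le |J(S)|$. Combining the two bounds proves the lemma. We expect no real obstacle; the only point to state explicitly is that, because the symbols are distinct, the LCS of two permutation restrictions equals the maximum size of a set whose relative order agrees in both strings.
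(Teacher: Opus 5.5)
Your proposal is correct and follows essentially the same route as the paper: the upper bound comes from the within-clause reversal, which allows at most one symbol per clause in any common subsequence, and the lower bound from picking one symbol of $S\cap\Sigma(C_j)$ per clause in increasing order of $j$. Your added remarks that restriction commutes with concatenation and reversal just make explicit what the paper uses implicitly.
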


\begin{proof}
For every \(j\in[m]\), the relative order of the symbols in
\(\Sigma(C_j)\) is reversed between \(\beta_1\) and \(\beta_2\).
Therefore, a common subsequence of \(\beta_1|_S\) and
\(\beta_2|_S\) contains at most one symbol from
\(S\cap\Sigma(C_j)\). Hence
\[
\LCS(\beta_1|_S,\beta_2|_S)
\leq
\bigl|
\{j\in[m]:S\cap\Sigma(C_j)\neq\varnothing\}
\bigr|.
\]

Conversely, for every \(j\in[m]\) such that
\(S\cap\Sigma(C_j)\neq\varnothing\), choose one symbol from this
intersection. Since the clause blocks occur in the same order in
\(\beta_1\) and \(\beta_2\), the chosen symbols, listed in increasing
order of \(j\), form a common subsequence. This proves the reverse
inequality and hence the claimed equality.
\end{proof}

\paragraph{Variable-consistency gadget.}
For every variable \(x\in V\) and \(b\in\{0,1\}\), let \(B_x^b\)
be the string obtained by listing the symbols in \(\Sigma_x^b\)
according to the fixed ordering of \(\Sigma\). Define
\[
\alpha_1
:=
B_{x_1}^0B_{x_1}^1
B_{x_2}^0B_{x_2}^1
\cdots
B_{x_n}^0B_{x_n}^1,
\]
\[
\alpha_2
:=
B_{x_1}^1B_{x_1}^0
B_{x_2}^1B_{x_2}^0
\cdots
B_{x_n}^1B_{x_n}^0.
\]
For each variable, the two label blocks appear in opposite orders in
\(\alpha_1\) and \(\alpha_2\). Consequently, a common subsequence
can use symbols from at most one label class of each variable.

\begin{lemma}
\label{lem:consistency-gadget}
For every \(S\subseteq\Sigma\),
\[
\LCS(\alpha_1|_S,\alpha_2|_S)
=
\sum_{x\in V}
\max\bigl\{
|S\cap\Sigma_x^0|,
|S\cap\Sigma_x^1|
\bigr\}.
\]
Consequently,
\(
\LCS(\alpha_1|_S,\alpha_2|_S)\leq N/2.
\)
\end{lemma}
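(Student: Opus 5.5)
The plan is to prove the two inequalities separately, mirroring the proof of \Cref{lem:clause-gadget}. Write $\alpha_1|_S$ and $\alpha_2|_S$ as concatenations of restricted blocks. In $\alpha_1|_S$ the order is $B_{x_1}^0|_S B_{x_1}^1|_S\cdots B_{x_n}^0|_S B_{x_n}^1|_S$. In $\alpha_2|_S$ it is the same except that the two blocks of each variable are swapped. The key structural facts are twofold. First, the variable segments (the symbols of $\Sigma_x^0\cup\Sigma_x^1$) occur in the same order $x_1,\ldots,x_n$ in both sequences. Second, within each variable segment, the internal order of each label block is identical in both sequences, because both inherit the fixed ordering of $\Sigma$.

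For the lower bound, for each $x$ choose $b_x\in\{0,1\}$ maximizing $|S\cap\Sigma_x^{b_x}|$. Concatenate the strings $B_x^{b_x}|_S$ in variable order. This string is a subsequence of both restricted permutations, because each block appears contiguously and with the same internal order in both, and the variable segments appear in the same order. Its length is exactly $\sum_x\max\{|S\cap\Sigma_x^0|,|S\cap\Sigma_x^1|\}$.

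For the upper bound, take any common subsequence $w$ and fix a variable $x$. Suppose $w$ contains some $\sigma\in\Sigma_x^0$ and some $\sigma'\in\Sigma_x^1$. Then $\sigma$ precedes $\sigma'$ in $\alpha_1$, but $\sigma'$ precedes $\sigma$ in $\alpha_2$, which contradicts the relative order being the same in a common subsequence. Hence $w$ meets at most one label class of each variable. Restricted to $x$, $w$ therefore has at most $\max\{|S\cap\Sigma_x^0|,|S\cap\Sigma_x^1|\}$ symbols, and summing over $x$ gives the bound. This pairwise-inversion argument is the only step that needs any care. It is essentially the same as the clause-gadget argument and is not a real obstacle.

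Finally, the consequence follows from $\max\{|S\cap\Sigma_x^0|,|S\cap\Sigma_x^1|\}\le r_x$ by \eqref{eq:3sat-balance}, together with $\sum_{x\in V}r_x=N/2$.
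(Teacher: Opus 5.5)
Your proposal is correct and follows essentially the same argument as the paper. The key steps match: the two label blocks of each variable appear in opposite orders in $\alpha_1$ and $\alpha_2$, so a common subsequence uses at most one label class per variable; the bound is attained by concatenating the larger restricted block of each variable in variable order; and the consequence follows from $\max\{|S\cap\Sigma_x^0|,|S\cap\Sigma_x^1|\}\le r_x$ together with $\sum_{x\in V}r_x=N/2$.
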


\begin{proof}
Fix \(x\in V\). Every symbol of \(B_x^0\) precedes every symbol of
\(B_x^1\) in \(\alpha_1\), while every symbol of \(B_x^1\) precedes
every symbol of \(B_x^0\) in \(\alpha_2\). Therefore, a common
subsequence contains symbols from at most one of these two blocks.
Its contribution from the blocks for \(x\) is consequently at most
\(
\max\bigl\{
|S\cap\Sigma_x^0|,
|S\cap\Sigma_x^1|
\bigr\}.
\)

This bound can be attained simultaneously for every variable: for
each \(x\), choose the larger of \(S\cap\Sigma_x^0\) and
\(S\cap\Sigma_x^1\), list its symbols in their common internal
order, and concatenate these choices in variable order. This proves
the formula. The upper bound follows from
\[
\LCS(\alpha_1|_S,\alpha_2|_S)
\leq
\sum_{x\in V}r_x
=
\frac N2.
\]
\end{proof}

\paragraph{Adding an anchor block.}
Let \(\Sigma_Z:=\{z_1,z_2,\ldots,z_M\}\) be an alphabet disjoint
from \(\Sigma\), where \(M:=2N+1\), and define
\[
Z:=z_1z_2\cdots z_M.
\]
We refer to \(Z\) as the \emph{anchor block} and its symbols as
\emph{anchor symbols}. Finally, construct the four input permutations
\[
L_1:=\beta_1Z,
\qquad
L_2:=\beta_2Z,
\qquad
R_1:=Z\alpha_1,
\qquad
R_2:=Z\alpha_2,
\]
and let
\[
\Pi_\varphi:=\{L_1,L_2,R_1,R_2\}.
\]

The following lemma shows that the anchor block acts as a separator:
every candidate median may be transformed, without decreasing any of
its four LCS scores, into one in which \(Z\) appears as a contiguous
block.

\begin{lemma}[Anchor-block lemma]
\label{lem:anchor-block}
Let \(\gamma_1,\gamma_2,\delta_1,\delta_2\) be arbitrary
permutations of\/ \(\Sigma\). For every permutation \(\pi\) of\/
\(\Sigma\sqcup \Sigma_Z\), there exist a partition
\(\Sigma=S_L\sqcup S_R\) and permutations
\(\tau_L,\tau_R\) of \(S_L,S_R\), respectively, such that
\(\pi^\star:=\tau_LZ\tau_R\) satisfies
\[
\LCS(\pi^\star,\gamma_iZ)
\geq
\LCS(\pi,\gamma_iZ),
\]
\[
\LCS(\pi^\star,Z\delta_i)
\geq
\LCS(\pi,Z\delta_i)
\]
for each \(i\in\{1,2\}\).
\end{lemma}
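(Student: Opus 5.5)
We take $S_L$ and $S_R$ to be the symbols of $\Sigma$ lying before and after one well-chosen anchor symbol in $\pi$, and keep the order $\pi$ induces on each side. Since $M=2N+1$, there is a middle anchor that has at least $N+1$ anchor symbols on each side, counting itself. That surplus of $N+1$ will pay for any symbol of $\Sigma$ that ends up on the wrong side of $Z$.

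Concretely, let $\pi_1,\dots,\pi_{N+M}$ be the entries of $\pi$. Let $p$ be the position of the $(N+1)$-th anchor symbol of $\pi$, counted from the left in position order, not by index. Set $S_L:=\{\pi_t\in\Sigma: t<p\}$ and $S_R:=\{\pi_t\in\Sigma:t>p\}$, and let $\tau_L:=\pi|_{S_L}$ and $\tau_R:=\pi|_{S_R}$. Write $P(q):=\pi_1\cdots\pi_q|_\Sigma$ for the $\Sigma$-prefix of $\pi$ up to position $q$, and $Q(q)$ for the analogous $\Sigma$-suffix from position $q$ on. Note that $\tau_L=P(p)$ and $\tau_R=Q(p)$.

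First I record the structure of common subsequences with $\gamma_iZ$. Fix an optimal common subsequence of $\pi$ and $\gamma_iZ$. Its $\Sigma$-part precedes its anchor part in $\gamma_iZ$, so the same holds in $\pi$. Hence there is a cut $q_i$ with the $\Sigma$-part inside $\pi_1\cdots\pi_{q_i}$ and the anchor part inside $\pi_{q_i+1}\cdots$. This gives
\[
\LCS(\pi,\gamma_iZ)\le \LCS(P(q_i),\gamma_i)+\bigl(M-a(q_i)\bigr),
\]
where $a(q)$ is the number of anchors at positions $\le q$. On the other hand, for $\pi^\star=\tau_LZ\tau_R$ we have $\LCS(\pi^\star,\gamma_iZ)\ge \LCS(\tau_L,\gamma_i)+M$.

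We now compare the two bounds in two cases.
\begin{itemize}
\item If $q_i\le p$, then $P(q_i)$ is a subsequence of $\tau_L$, so $\pi^\star$ is at least as good.
\item If $q_i>p$, then $P(q_i)$ is $\tau_L$ followed by at most $s$ extra symbols, where $s$ is the number of $\Sigma$-symbols strictly between positions $p$ and $q_i$. Deleting $s$ symbols lowers an LCS by at most $s$, so $\LCS(\tau_L,\gamma_i)\ge\LCS(P(q_i),\gamma_i)-s$. It therefore suffices to have $s\le a(q_i)$. This holds because $s\le N$, while $a(q_i)\ge a(p)=N+1$.
\end{itemize}

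The inputs $Z\delta_i$ are handled symmetrically. A common subsequence has its anchor part before a cut $q'_i$ and its $\Sigma$-part after it. The number of unused anchors is at least the number of anchors at positions $\ge q'_i$. When $q'_i<p$, that number is at least $M-N=N+1$, since it includes the anchor at $p$ and all $N$ anchors after it. This again dominates the at most $N$ symbols of $\Sigma$ in $(q'_i,p)$ that are lost. When $q'_i\ge p$, $Q(q'_i)$ is already a subsequence of $\tau_R$.

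The main obstacle is that a single choice of $p$ must serve all four inputs simultaneously. Choosing $p$ to be the middle anchor, and using $M=2N+1$, is what makes this work. The only other point needing care is the precise accounting of unused anchors, namely the bound $M-a(q_i)$ and its mirror image.
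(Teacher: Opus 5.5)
Your proof is correct and is essentially the paper's argument. You cut at the position of the $(N+1)$-th anchor, which is exactly where $z_{N+1}$ sits after the paper's in-place sorting, so your $\tau_L,\tau_R$ coincide with the paper's $U_0\cdots U_N$ and $U_{N+1}\cdots U_M$. You then use the same surplus of $N+1$ anchors on each side to pay for at most $N$ misplaced $\Sigma$-symbols. The only difference is that you skip the paper's Step~1 (sorting the anchors) by bounding the usable anchors directly by $M-a(q_i)$, and its mirror image, in the unsorted $\pi$ and then truncating at $p$; this is a slightly more direct presentation of the same idea.
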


We defer the proof of \Cref{lem:anchor-block} to
\Cref{sec:anchor-block-proof}. Note that this lemma is independent of the \textsc{3-SAT} instance
\(\varphi\): it applies to any four permutations of \(\Sigma\).
It allows us to restrict attention, without loss of generality, to
only candidate medians of the form \(\tau_LZ\tau_R\). Each such
``canonical'' permutation induces a truth partition
\(\Sigma=S_L\sqcup S_R\), where \(S_L\) and \(S_R\)
are the symbols placed before and after \(Z\), respectively. We
interpret \(S_L\) as the proposed set of true literal occurrences
and \(S_R\) as the proposed set of false literal occurrences. Note that this
truth partition need not be consistent. Once the partition is fixed,
the only remaining choice is the ordering of the symbols within
\(S_L\) and \(S_R\).

\begin{lemma}[Fixed-partition optimum]
\label{lem:canonical-score}
Fix a partition
\(\Sigma=S_L\sqcup S_R\). Then
\[
\begin{aligned}
&\max_{\tau_L,\tau_R}
\sum_{i=1}^2
\Bigl(
\LCS(\tau_LZ\tau_R,L_i)
+\LCS(\tau_LZ\tau_R,R_i)
\Bigr)\\
&\qquad=
4M+N
+\LCS(\beta_1|_{S_L},\beta_2|_{S_L})
+\LCS(\alpha_1|_{S_R},\alpha_2|_{S_R}),
\end{aligned}
\]
where the maximum ranges over all permutations \(\tau_L\) of
\(S_L\) and \(\tau_R\) of \(S_R\).
\end{lemma}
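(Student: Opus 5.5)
The proof decomposes each of the four LCS values of the canonical permutation \(\pi=\tau_LZ\tau_R\) along the anchor block, optimizes the two sides independently, and applies \Cref{lem:two-permutation-score} on each side.

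\textbf{Step 1: the left inputs.} Consider \(\LCS(\tau_LZ\tau_R,\beta_iZ)\). In \(\beta_iZ\), every symbol of \(\Sigma\) precedes all of \(Z\). In \(\pi\), the symbols of \(S_R\) come after \(Z\). Hence a common subsequence cannot use both a symbol of \(S_R\) and a symbol of \(Z\). Moreover, among the \(S_R\) symbols it can use only those that match inside \(\beta_i\) against \(\tau_R\). Such a subsequence has length at most \(|S_R|\le N<M\), so any common subsequence using \(S_R\) symbols is no longer than \(M\). Concretely, the LCS equals
\[
\max\Bigl\{\LCS(\tau_L,\beta_i|_{S_L})+M,\ \LCS(\tau_L\tau_R,\beta_i)\Bigr\},
\]
and the second term is at most \(N<M\). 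The inequality \(M=2N+1>N\) is what makes this work. So \(\LCS(\pi,L_i)=M+\LCS(\tau_L,\beta_i|_{S_L})\).

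For the upper bound, take any common subsequence and split it at the last matched symbol preceding \(Z\) in \(\pi\). Anything matched from \(S_R\) forbids all of \(Z\), so it is dominated by the \(M\) matches of \(Z\). The lower bound is immediate by exhibiting the subsequence \(\LCS(\tau_L,\beta_i|_{S_L})\) followed by \(Z\).

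\textbf{Step 2: the right inputs.} Symmetrically, \(\LCS(\pi,R_i)=M+\LCS(\tau_R,\alpha_i|_{S_R})\). Here symbols of \(S_L\) sit before \(Z\) in \(\pi\) but after \(Z\) in \(Z\alpha_i\), and the same bound \(|S_L|\le N<M\) applies.

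\textbf{Step 3: optimize each side.} The total score becomes
\[
4M+\sum_i\LCS(\tau_L,\beta_i|_{S_L})+\sum_i\LCS(\tau_R,\alpha_i|_{S_R}).
\]
The two sums depend on disjoint choices, \(\tau_L\) and \(\tau_R\) respectively, so we maximize them separately. Apply \Cref{lem:two-permutation-score} to the alphabet \(S_L\) with \(\pi=\beta_1|_{S_L}\) and \(\sigma=\beta_2|_{S_L}\). This gives the maximum \(|S_L|+\LCS(\beta_1|_{S_L},\beta_2|_{S_L})\). Likewise, the right side gives \(|S_R|+\LCS(\alpha_1|_{S_R},\alpha_2|_{S_R})\). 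Since \(|S_L|+|S_R|=N\), the claimed formula follows.

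\textbf{Main obstacle.} The only delicate point is the rigorous upper bound in Step 1, namely that mixing \(S_R\) symbols with anchor matches cannot help. I would handle it by cases on whether the common subsequence contains any anchor symbol. If it does, it contains no \(S_R\) symbol, so its length is at most \(M+\LCS(\tau_L,\beta_i|_{S_L})\). If it does not, its length is at most \(N<M\).
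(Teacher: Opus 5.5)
Your proposal is correct and follows the paper's proof. The paper also uses $M>N$ to show that an LCS with $L_i$ (resp.\ $R_i$) never uses symbols on the far side of $Z$, which gives $M+\LCS(\tau_L,\beta_i|_{S_L})$ (resp.\ $M+\LCS(\tau_R,\alpha_i|_{S_R})$), and then applies \Cref{lem:two-permutation-score} independently on each side. One small slip: an anchor-free common subsequence may also contain $S_L$ symbols, so its length is bounded by $N$ rather than $|S_R|$, which is the bound you correctly use in your final paragraph.
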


\begin{proof}
Let \(\pi=\tau_LZ\tau_R\). Since \(M>N\), a longest common
subsequence of \(\pi\) and \(L_i=\beta_iZ\) cannot contain a symbol
from \(\tau_R\): any such subsequence contains no anchor symbol and
therefore has length at most \(N<M\), whereas \(Z\) itself is a
common subsequence of length \(M\). Hence
\[
\LCS(\pi,L_i)
=
M+\LCS(\tau_L,\beta_i|_{S_L}).
\]
Symmetrically,
\[
\LCS(\pi,R_i)
=
M+\LCS(\tau_R,\alpha_i|_{S_R}).
\]

Applying \Cref{lem:two-permutation-score} independently on the two
sides gives
\[
\max_{\tau_L}
\sum_{i=1}^2
\LCS(\tau_L,\beta_i|_{S_L})
=
|S_L|+\LCS(\beta_1|_{S_L},\beta_2|_{S_L}),
\]
\[
\max_{\tau_R}
\sum_{i=1}^2
\LCS(\tau_R,\alpha_i|_{S_R})
=
|S_R|+\LCS(\alpha_1|_{S_R},\alpha_2|_{S_R}).
\]
These maxima are attained by
\(\tau_L=\beta_1|_{S_L}\) and
\(\tau_R=\alpha_1|_{S_R}\), respectively.
Adding the equalities and using \(|S_L|+|S_R|=N\) proves the lemma.
\end{proof}

For \(S\subseteq\Sigma\), the expressions \(\LCS(\beta_1|_S,\beta_2|_S)\) and \(\LCS(\alpha_1|_S,\alpha_2|_S)\) are important enough to give them names. We call them the \textit{clause score} and \textit{consistency score}, respectively, and  write
\[
q(S):=\LCS(\beta_1|_S,\beta_2|_S),
\qquad
\kappa(S):=\LCS(\alpha_1|_S,\alpha_2|_S).
\]
The names are suggestive, as \(q(S)\) counts the clauses containing
an occurrence from \(S\), while \(\kappa(S)\) is the maximum number
of occurrences in \(S\) that can simultaneously evaluate to false
under a single assignment.

Now, since every
input permutation has length \(M+N\),
\Cref{lem:anchor-block,lem:canonical-score} give
\begin{equation}
\label{eq:3sat-partition-optimum}
\OPT_{\mathrm{med}}(\Pi_\varphi)
=
3N-
\max_{\Sigma=S_L\sqcup S_R}
\bigl(q(S_L)+\kappa(S_R)\bigr).
\end{equation}
In other words, an optimal median must correspond to a truth partition, not necessarily consistent, that maximizes the sum of the clause score of the proposed true side and the consistency score of the proposed false side. We will show that inconsistent partitions never help---any extra clause
score obtained by placing incompatible literal
occurrences on the proposed true side is offset by a loss in the consistency
score.

We are now ready to prove the main guarantee of our construction,

\begin{theorem}
\label{thm:3sat-exact-relation}
The instance \(\Pi_\varphi\) satisfies
\[
\OPT_{\mathrm{med}}(\Pi_\varphi)
=
\frac{5N}{2}
-\OPT_{\mathrm{Max3SAT}}(\varphi)
=
\frac{15m}{2}
-\OPT_{\mathrm{Max3SAT}}(\varphi).
\]
\end{theorem}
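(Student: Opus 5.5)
By \eqref{eq:3sat-partition-optimum}, it suffices to show
\[
\max_{\Sigma=S_L\sqcup S_R}\bigl(q(S_L)+\kappa(S_R)\bigr)=\frac N2+\OPT_{\mathrm{Max3SAT}}(\varphi),
\]
since then \(\OPT_{\mathrm{med}}(\Pi_\varphi)=3N-N/2-\OPT_{\mathrm{Max3SAT}}(\varphi)=5N/2-\OPT_{\mathrm{Max3SAT}}(\varphi)\), and \(N=3m\) gives the second form. I will prove the two inequalities separately.

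\emph{Lower bound.} Take an optimal assignment \(f\) and the consistent partition \(S_L=\Agr(f)\), \(S_R=\Sigma\setminus\Agr(f)\). By \Cref{lem:clause-gadget}, \(q(S_L)\) counts the clauses meeting \(\Agr(f)\), which are exactly the clauses satisfied by \(f\); so \(q(S_L)=\Val_\varphi(f)\). For each variable \(x\), \(S_R\) contains all of \(\Sigma_x^{1-f(x)}\) and none of \(\Sigma_x^{f(x)}\). Hence \Cref{lem:consistency-gadget} gives \(\kappa(S_R)=\sum_x r_x=N/2\).

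\emph{Upper bound.} Fix an arbitrary partition \(S_L\sqcup S_R\). Decode an assignment \(f\) by letting \(f(x)\) be a label \(b\) that maximizes \(|S_L\cap\Sigma_x^b|\), breaking ties arbitrarily. Let \(D:=|S_L\setminus\Agr(f)|\), the number of selected occurrences that are false under \(f\).

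First, I claim \(q(S_L)\le \Val_\varphi(f)+D\). Every clause meeting \(S_L\) either meets \(S_L\cap\Agr(f)\), in which case \(f\) satisfies it, or contains a symbol of \(S_L\setminus\Agr(f)\). Each such symbol lies in exactly one clause, so at most \(D\) clauses fall into the second case. Equivalently, deleting \(D\) symbols lowers the LCS by at most \(D\).

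Second, I claim \(\kappa(S_R)=N/2-D\). Write \(s_x^b:=|S_L\cap\Sigma_x^b|\). Then \(|S_R\cap\Sigma_x^b|=r_x-s_x^b\), so the maximum over \(b\) in \Cref{lem:consistency-gadget} equals \(r_x-\min_b s_x^b\). Because \(f(x)\) was chosen as the majority label, \(\min_b s_x^b=s_x^{1-f(x)}\), which is exactly the contribution of \(x\) to \(D\). Summing over \(x\) and using \(\sum_x r_x=N/2\) gives the claim.

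Adding the two bounds yields \(q(S_L)+\kappa(S_R)\le \Val_\varphi(f)+N/2\le \OPT_{\mathrm{Max3SAT}}(\varphi)+N/2\).

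The main obstacle is the upper bound, specifically the identity for \(\kappa(S_R)\). It depends on two things: the balance condition \eqref{eq:3sat-balance}, which makes the complement of the majority label the maximizer on the \(S_R\) side, and decoding by majority in \(S_L\), which makes the \(D\) counted on each side the same quantity.
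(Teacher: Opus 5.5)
Your proposal is correct and follows essentially the same route as the paper: reduce via \eqref{eq:3sat-partition-optimum}, use the partition induced by an optimal assignment for the lower bound, and for the upper bound decode by majority label in \(S_L\), bound \(q(S_L)\le \Val_\varphi(f)+D\) by clause counting, and show \(\kappa(S_R)=N/2-D\) using balance. Your computation of \(\kappa(S_R)\) as \(\sum_x\bigl(r_x-\min_b s_x^b\bigr)\) is only a slightly more compact bookkeeping of the paper's inequality chain and its count of false occurrences.
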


\begin{proof}
By \eqref{eq:3sat-partition-optimum}, it suffices to prove that
\[
\max_{\Sigma=S_L\sqcup S_R}
\bigl(q(S_L)+\kappa(S_R)\bigr)
=
\frac N2+\OPT_{\mathrm{Max3SAT}}(\varphi).
\]
First, let \(f\) be an assignment satisfying
\(\OPT_{\mathrm{Max3SAT}}(\varphi)\) clauses, and consider its
induced truth partition
\[
T=\Agr(f),
\qquad
F=\Sigma\setminus\Agr(f).
\]
By \Cref{lem:clause-gadget,lem:consistency-gadget} and the balance
condition,
\[
q(T)=\Val_\varphi(f)=\OPT_{\mathrm{Max3SAT}}(\varphi),
\qquad
\kappa(F)=\frac N2.
\]
This proves \(
\max_{\Sigma=S_L\sqcup S_R}
\bigl(q(S_L)+\kappa(S_R)\bigr)
\geq
\frac N2+\OPT_{\mathrm{Max3SAT}}(\varphi).
\)

For the reverse inequality, fix an arbitrary truth partition
\(\Sigma=S_L\sqcup S_R\). The symbols in \(S_L\) are
being treated as true, but they need not be simultaneously true
under any assignment. We therefore choose an assignment that makes
as many of these proposed true occurrences as possible actually
true. This can be done independently for each variable: choose
\[
f(x)\in\arg\max_{b\in\{0,1\}}|S_L\cap\Sigma_x^b|,
\]
breaking ties arbitrarily. Let
\[
D:=|S_L\setminus\Agr(f)|.
\]
Thus, \(D\) counts the occurrences placed on the proposed true side
that nevertheless evaluate to false under \(f\).

We first compare the clause score with the number of clauses
satisfied by \(f\). By \Cref{lem:clause-gadget},
\[
q(S_L)
=
\bigl|
\{j\in[m]:S_L\cap\Sigma(C_j)\neq\varnothing\}
\bigr|.
\]
A clause counted by \(q(S_L)\) but not satisfied by \(f\) must
contain one of the \(D\) false occurrences in \(S_L\). Since each
literal occurrence belongs to exactly one clause, there are at most
\(D\) such clauses. Therefore,
\begin{equation}
\label{eq:3sat-clause-decoding}
q(S_L)\leq\Val_\varphi(f)+D.
\end{equation}
In other words, the proposed partition can gain at most \(D\) clause
witnesses beyond those justified by the decoded assignment.

We now show that these disagreements cost exactly \(D\) in the
consistency score. Fix a variable \(x\). By the choice of \(f(x)\) and \eqref{eq:3sat-balance},
\[
\begin{aligned}
|S_R\cap\Sigma_x^{1-f(x)}|
&=r_x-|S_L\cap\Sigma_x^{1-f(x)}|\\
&\geq r_x-|S_L\cap\Sigma_x^{f(x)}|\\
&=|S_R\cap\Sigma_x^{f(x)}|.
\end{aligned}
\] Thus,
\(\Sigma_x^{1-f(x)}\) is a largest label class inside \(S_R\).
By \Cref{lem:consistency-gadget},
\[
\kappa(S_R)
=
\sum_{x\in V}|S_R\cap\Sigma_x^{1-f(x)}|.
\]
This sum counts precisely the occurrences in \(S_R\) that evaluate
to false under \(f\). By balance, exactly \(N/2\) literal
occurrences evaluate to false in total. Of these, exactly \(D\)
lie in \(S_L\), and all the others lie in \(S_R\). Hence
\begin{equation}
\label{eq:3sat-consistency-decoding}
\kappa(S_R)=\frac N2-D.
\end{equation}
Combining
\Cref{eq:3sat-clause-decoding,eq:3sat-consistency-decoding}, we obtain
\begin{align*}
q(S_L)+\kappa(S_R)
&\leq
\Val_\varphi(f)+D+\frac N2-D\\
&=
\Val_\varphi(f)+\frac N2\\
&\leq
\OPT_{\mathrm{Max3SAT}}(\varphi)+\frac N2.
\end{align*}
Thus, the consistent partition induced by \(f\) has combined score
at least as large as that of the original partition: any additional
clause witnesses are paid for by the loss in the consistency score.
This proves the required upper bound. Applying
\eqref{eq:3sat-partition-optimum} and using \(N=3m\) completes the
proof.
\end{proof}
In particular, \Cref{thm:3sat-exact-relation} gives
\[
\varphi\text{ is satisfiable}
\quad\Longleftrightarrow\quad
\OPT_{\mathrm{med}}(\Pi_\varphi)\leq\frac{13m}{2}.
\]
Each input permutation has length
\(M+N=3N+1=9m+1\), and the construction can be carried out in
polynomial time. This completes the reduction from
\textsc{3-SAT}. \Cref{lem:balanced-3sat-decision} immediately yields that Ulam median is NP-hard already at four input permutations. 

\paragraph{Approximation hardness.}
\Cref{thm:3sat-exact-relation} further allows us
to transfer approximation hardness for \textsc{Max-3-SAT} to Ulam
median. We use the balanced gap result established in
\Cref{lem:balanced-3sat-gap}: for every fixed
\(0<\eta<1/16\), it is $\mathrm{NP}$-hard to distinguish formulas satisfying
the assumptions above for which
\(
\OPT_{\mathrm{Max3SAT}}(\varphi)\geq(1-\eta)m
\)
from those for which
\(
\OPT_{\mathrm{Max3SAT}}(\varphi)
\leq\left(\frac78+\eta\right)m.
\)
Applying \Cref{thm:3sat-exact-relation} gives the following gap for
Ulam median.

\begin{corollary}
\label{cor:ulam-gap-3sat}
For every fixed \(0<\eta<1/16\), given a four-permutation Ulam-median instance \(\Pi\), in which each
input permutation has length \(9m+1\), it is $\mathrm{NP}$-hard to distinguish
between the following two cases.
\begin{itemize}
    \item \emph{(Completeness)} \(
\OPT_{\mathrm{med}}(\Pi)
\leq\left(\frac{13}{2}+\eta\right)m
\)
\item \emph{(Soundness)} \(
\OPT_{\mathrm{med}}(\Pi)
\geq\left(\frac{53}{8}-\eta\right)m.
\)
\end{itemize}
\end{corollary}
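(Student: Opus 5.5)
The proof is a direct substitution of the balanced gap for \textsc{Max-3-SAT} (\Cref{lem:balanced-3sat-gap}) into the exact relation of \Cref{thm:3sat-exact-relation}. Fix \(0<\eta<1/16\) and take a balanced \(3\)-CNF formula \(\varphi\) with \(m\) clauses, each containing exactly three literal occurrences, as in \Cref{lem:balanced-3sat-gap}. We build \(\Pi:=\Pi_\varphi\) as in this section. This takes polynomial time: the alphabet \(\Sigma\sqcup\Sigma_Z\) has size \(N+M=3N+1=9m+1\), and the four permutations \(L_1,L_2,R_1,R_2\) are written down explicitly. Hence each input permutation has length \(9m+1\), as claimed.

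\emph{Completeness.} Suppose \(\OPT_{\mathrm{Max3SAT}}(\varphi)\geq(1-\eta)m\). Then \Cref{thm:3sat-exact-relation} gives
\[
\OPT_{\mathrm{med}}(\Pi)=\tfrac{15m}{2}-\OPT_{\mathrm{Max3SAT}}(\varphi)\leq\tfrac{15m}{2}-(1-\eta)m=\left(\tfrac{13}{2}+\eta\right)m .
\]

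\emph{Soundness.} Suppose \(\OPT_{\mathrm{Max3SAT}}(\varphi)\leq(\tfrac78+\eta)m\). Then
\[
\OPT_{\mathrm{med}}(\Pi)\geq\tfrac{15m}{2}-\left(\tfrac78+\eta\right)m=\left(\tfrac{53}{8}-\eta\right)m .
\]

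The two bounds separate the cases: since \(\eta<1/16\), we have \(\tfrac{13}{2}+\eta<\tfrac{53}{8}-\eta\), because the gap is \(\tfrac18-2\eta>0\). Therefore an algorithm distinguishing the two Ulam-median cases would distinguish the two \textsc{Max-3-SAT} cases, and the latter is \(\mathrm{NP}\)-hard by \Cref{lem:balanced-3sat-gap}.

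All the substantive work was already done in \Cref{thm:3sat-exact-relation} and in the balanced gap lemma. The only point needing care is that the balancing transformation in \Cref{lem:balanced-3sat-gap} preserves the exact-three-literal and balance assumptions, which \Cref{thm:3sat-exact-relation} requires, and that it keeps the gap constants in terms of the same \(m\). We take both from the lemma's statement.
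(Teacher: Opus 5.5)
Your proposal is correct and matches the paper's proof: both plug the completeness and soundness thresholds of \Cref{lem:balanced-3sat-gap} into the identity $\OPT_{\mathrm{med}}(\Pi_\varphi)=\frac{15m}{2}-\OPT_{\mathrm{Max3SAT}}(\varphi)$ from \Cref{thm:3sat-exact-relation}. Your extra checks are also right, namely the length $N+M=9m+1$ and the separation $\frac18-2\eta>0$. One minor quibble: \Cref{lem:balanced-3sat-gap} does not apply a balancing transformation but builds balanced formulas directly from \textsc{Max-E3-Lin-2}, which is harmless since you only use the lemma's statement.
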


\begin{proof}
In the completeness case, \Cref{thm:3sat-exact-relation} gives
\[
\OPT_{\mathrm{med}}(\Pi_\varphi)
=\frac{15m}{2}-\OPT_{\mathrm{Max3SAT}}(\varphi)
\leq\left(\frac{13}{2}+\eta\right)m.
\]
In the soundness case, the same theorem gives
\[
\OPT_{\mathrm{med}}(\Pi_\varphi)
\geq\frac{15m}{2}-\left(\frac78+\eta\right)m
=\left(\frac{53}{8}-\eta\right)m.
\]
\end{proof}

Note that the ratio between the soundness and completeness thresholds
approaches
\(
\frac{53/8}{13/2}=\frac{53}{52}
\)
as \(\eta\to0\). We thus obtain the following consequence.

\begin{corollary}
\label{cor:53-52-hardness}
For every fixed \(0<\varepsilon<1/52\), it is $\mathrm{NP}$-hard to
approximate Ulam median within a factor of
\(53/52-\varepsilon\), even when the input consists of exactly
four permutations.
\end{corollary}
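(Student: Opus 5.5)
The plan is to derive \Cref{cor:53-52-hardness} from the gap in \Cref{cor:ulam-gap-3sat}. Fix \(0<\varepsilon<1/52\). Suppose there were a polynomial-time algorithm that, on every four-permutation instance, outputs a median of cost at most \((53/52-\varepsilon)\OPT_{\mathrm{med}}(\Pi)\). We will show that this algorithm distinguishes the two cases of \Cref{cor:ulam-gap-3sat} for a suitable constant \(\eta\), which is NP-hard.

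\emph{Choice of \(\eta\).} We need \(\eta\in(0,1/16)\) such that
\[
\Bigl(\tfrac{53}{52}-\varepsilon\Bigr)\Bigl(\tfrac{13}{2}+\eta\Bigr)<\tfrac{53}{8}-\eta .
\]
At \(\eta=0\) the left side equals \(\tfrac{53}{8}-\tfrac{13}{2}\varepsilon\), which is strictly smaller than \(\tfrac{53}{8}\). Both sides are continuous (affine) in \(\eta\). Hence some sufficiently small positive \(\eta\) works. Explicitly, any \(\eta<\tfrac{13\varepsilon/2}{1+53/52}\) that is also below \(1/16\) suffices, and this choice depends only on \(\varepsilon\).

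\emph{The distinguisher.} Given an instance \(\Pi\) produced by \Cref{cor:ulam-gap-3sat} with parameter \(m\), we run the algorithm and compute the cost \(c\) of its output. Computing \(c\) takes polynomial time, since it is a sum of four LCS computations. We accept if and only if \(c<(\tfrac{53}{8}-\eta)m\).

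\emph{Correctness.} In the completeness case,
\[
c\le\Bigl(\tfrac{53}{52}-\varepsilon\Bigr)\Bigl(\tfrac{13}{2}+\eta\Bigr)m<\Bigl(\tfrac{53}{8}-\eta\Bigr)m ,
\]
so the distinguisher accepts. In the soundness case, \(c\ge\OPT_{\mathrm{med}}(\Pi)\ge(\tfrac{53}{8}-\eta)m\), so it rejects. This decides an NP-hard promise problem, proving the corollary. The instance has exactly four permutations, so the restriction in the statement is respected.

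The only mildly delicate step is choosing \(\eta\) so that the strict inequality holds; this is routine arithmetic. The condition \(\varepsilon<1/52\) only ensures that the factor \(53/52-\varepsilon\) exceeds \(1\), so the statement is meaningful.
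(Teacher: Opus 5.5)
Your proposal is correct and follows essentially the same argument as the paper: choose \(\eta\in(0,1/16)\) with \(\bigl(\tfrac{53}{52}-\varepsilon\bigr)\bigl(\tfrac{13}{2}+\eta\bigr)<\tfrac{53}{8}-\eta\), then use the cost of the algorithm's output to separate the completeness and soundness cases of \Cref{cor:ulam-gap-3sat}. Your explicit choice \(\eta<\tfrac{13\varepsilon/2}{1+53/52}\) is valid, since the exact requirement is \(\eta<\tfrac{13\varepsilon/2}{105/52-\varepsilon}\), a slightly weaker condition.
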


\begin{proof}
Fix \(0<\varepsilon<1/52\), and set
\(\rho:=53/52-\varepsilon\). Choose \(0<\eta<1/16\) sufficiently
small that
\[
\rho\left(\frac{13}{2}+\eta\right)
<\frac{53}{8}-\eta.
\]
On a completeness instance from \Cref{cor:ulam-gap-3sat}, a
\(\rho\)-approximation algorithm would return a permutation of
cost strictly less than \((53/8-\eta)m\). On a soundness instance,
every permutation has cost at least \((53/8-\eta)m\). Such an
algorithm would therefore distinguish the two cases, proving the
claim.
\end{proof}

\subsection{Proof of the Anchor-Block Lemma}
\label{sec:anchor-block-proof}

In this section, we give a proof of the anchor block lemma.

\begin{proof}[Proof of~\Cref{lem:anchor-block}]
Fix an arbitrary permutation \(\pi\) of
\(\Sigma\sqcup \Sigma_Z\). We transform \(\pi\) into the
desired canonical form in two steps.

\paragraph{Step 1: sorting the anchor symbols.}
Let \(\widehat{\pi}\) be obtained from \(\pi\) by permuting only the
symbols of \(\Sigma_Z\) so that they occur in the order
\(z_1,z_2,\ldots,z_M\). The symbols of \(\Sigma\) remain in their
original locations and hence retain their relative order. We claim
that, for each \(i\in\{1,2\}\),
\[
\LCS(\widehat{\pi},\gamma_iZ)
\geq
\LCS(\pi,\gamma_iZ)
\quad\text{and}\quad
\LCS(\widehat{\pi},Z\delta_i)
\geq
\LCS(\pi,Z\delta_i).
\]

For \(x\in\Sigma\), let \(a(x)\) denote the number of symbols from
\(\Sigma_Z\) that precede \(x\) in \(\pi\). Since the reordering does
not change the locations occupied by symbols of \(\Sigma_Z\), the
anchor symbols preceding \(x\) in \(\widehat{\pi}\) are precisely
\(z_1,\ldots,z_{a(x)}\), while those following \(x\) are precisely
\(z_{a(x)+1},\ldots,z_M\).

Fix \(i\in\{1,2\}\), and let \(\rho\) be a longest common subsequence
of \(\pi\) and \(\gamma_iZ\). If \(\rho\) contains no symbol of
\(\Sigma\), then \(|\rho|\leq M\), whereas \(Z\) is a common
subsequence of \(\widehat{\pi}\) and \(\gamma_iZ\) of length \(M\).

Otherwise, let \(x\) be the last symbol of \(\Sigma\) appearing in
\(\rho\). Since every symbol of \(\Sigma\) precedes every anchor symbol
in \(\gamma_iZ\), all anchor symbols appearing in \(\rho\) must follow
\(x\) in \(\pi\). Hence \(\rho\) contains at most \(M-a(x)\) anchor
symbols. Moreover,
\[
\bigl(\rho|_\Sigma\bigr)
z_{a(x)+1}z_{a(x)+2}\cdots z_M
\]
is a common subsequence of \(\widehat{\pi}\) and \(\gamma_iZ\). Its
length is at least \(|\rho|\), and therefore
\[
\LCS(\widehat{\pi},\gamma_iZ)
\geq
\LCS(\pi,\gamma_iZ).
\]
The proof for \(Z\delta_i\) is symmetric. Thus,
\[
\LCS(\widehat{\pi},Z\delta_i)
\geq
\LCS(\pi,Z\delta_i).
\]
It therefore suffices to consider the case in which the anchor symbols
occur in the order \(z_1,z_2,\ldots,z_M\). We may then write
\[
\widehat{\pi}
=
U_0z_1U_1z_2\cdots z_MU_M,
\]
where each \(U_j\) is a possibly empty string over \(\Sigma\).

\paragraph{Step 2a: analyzing \(\gamma_iZ\).}
Fix \(i\in\{1,2\}\). We claim that
\begin{equation}
\LCS(\widehat{\pi},\gamma_iZ)
=
M+\max_{0\leq t\leq M}
\bigl(
\LCS(U_0U_1\cdots U_t,\gamma_i)-t
\bigr).
\label{eq:cut-1}
\end{equation}

Indeed, for any \(t\in\{0,\ldots,M\}\), a common subsequence of
\(U_0U_1\cdots U_t\) and \(\gamma_i\) can be followed by
\(z_{t+1}z_{t+2}\cdots z_M\). Hence
\[
\LCS(\widehat{\pi},\gamma_iZ)
\geq
\LCS(U_0U_1\cdots U_t,\gamma_i)+M-t.
\]

For the reverse inequality, let \(\rho\) be any common subsequence of
\(\widehat{\pi}\) and \(\gamma_iZ\). If \(\rho\) contains no symbol
from \(\Sigma\), then \(|\rho|\leq M\), which is bounded by the
right-hand side of~\Cref{eq:cut-1} by taking \(t=0\).

Otherwise, suppose that the last symbol of \(\Sigma\) appearing in
\(\rho\) belongs to \(U_t\). Then \(\rho|_\Sigma\) is a common
subsequence of \(U_0U_1\cdots U_t\) and \(\gamma_i\). Moreover, every
symbol of \(\Sigma_Z\) appearing in \(\rho\) must belong to
\(z_{t+1}z_{t+2}\cdots z_M\), so \(\rho\) contains at most \(M-t\)
such symbols. Therefore,
\[
|\rho|
\leq
\LCS(U_0U_1\cdots U_t,\gamma_i)+M-t.
\]
This proves~\Cref{eq:cut-1}.

Since \(\gamma_i\) has length \(N\), we have
\(\LCS(U_0U_1\cdots U_t,\gamma_i)\leq N\). Thus, when \(t>N\), the
quantity inside the maximum in~\Cref{eq:cut-1} is negative, whereas
for \(t=0\) it is nonnegative. Consequently, the maximum is attained
for some \(t\leq N\).

It follows that, for each \(i\in\{1,2\}\), there exists a longest
common subsequence of \(\widehat{\pi}\) and \(\gamma_iZ\) whose symbols
from \(\Sigma\) all belong to \(U_0U_1\cdots U_N\).

\paragraph{Step 2b: analyzing \(Z\delta_i\).}
A symmetric argument gives
\begin{equation}
\LCS(\widehat{\pi},Z\delta_i)
=
M+\max_{0\leq t\leq M}
\bigl(
\LCS(U_tU_{t+1}\cdots U_M,\delta_i)-(M-t)
\bigr).
\label{eq:cut-2}
\end{equation}

Since \(\delta_i\) has length \(N\), we have
\(\LCS(U_tU_{t+1}\cdots U_M,\delta_i)\leq N\). Thus, if
\(M-t>N\), then the quantity inside the maximum
in~\Cref{eq:cut-2} is negative, whereas for \(t=M\) it is
nonnegative. Consequently, the maximum is attained for some
\(t\geq M-N\). Since \(M=2N+1\), we have \(M-N=N+1\).

It follows that, for each \(i\in\{1,2\}\), there exists a longest
common subsequence of \(\widehat{\pi}\) and \(Z\delta_i\) whose symbols
from \(\Sigma\) all belong to
\(U_{N+1}U_{N+2}\cdots U_M\).

\paragraph{Step 2c: constructing the canonical permutation.}
Let \(S_L\) be the set of symbols appearing in
\(U_0U_1\cdots U_N\), and let \(S_R:=\Sigma\setminus S_L\). Define
\[
\tau_L:=U_0U_1\cdots U_N,
\qquad
\tau_R:=U_{N+1}U_{N+2}\cdots U_M,
\]
and set
\[
\pi^\star:=\tau_LZ\tau_R.
\]
We show that \(\pi^\star\) satisfies all four inequalities in the
statement of the lemma.

Fix \(i\in\{1,2\}\), and let \(\rho\) be a longest common subsequence
of \(\widehat{\pi}\) and \(\gamma_iZ\) whose symbols from \(\Sigma\)
all belong to \(U_0U_1\cdots U_N\), as guaranteed by Step~2a. Set
\(\rho_\Sigma:=\rho|_\Sigma\). The string \(\rho_\Sigma\) is a
subsequence of both \(\tau_L\) and \(\gamma_i\), since the construction
of \(\tau_L\) preserves the relative order of all symbols in
\(U_0U_1\cdots U_N\). It follows that
\(\rho_\Sigma Z\) is a common subsequence of
\(\pi^\star=\tau_LZ\tau_R\) and \(\gamma_iZ\). Moreover,
\[
|\rho_\Sigma Z|
=
|\rho_\Sigma|+M
\geq
|\rho|,
\]
because \(\rho\) contains at most \(M\) symbols from \(\Sigma_Z\).
Therefore,
\[
\LCS(\pi^\star,\gamma_iZ)
\geq
\LCS(\widehat{\pi},\gamma_iZ).
\]

Similarly, let \(\rho\) be a longest common subsequence of
\(\widehat{\pi}\) and \(Z\delta_i\) whose symbols from \(\Sigma\) all
belong to \(U_{N+1}U_{N+2}\cdots U_M\), as guaranteed by Step~2b, and
again set \(\rho_\Sigma:=\rho|_\Sigma\). The string
\(\rho_\Sigma\) is a subsequence of both \(\tau_R\) and \(\delta_i\).
Consequently, \(Z\rho_\Sigma\) is a common subsequence of
\(\pi^\star\) and \(Z\delta_i\), and
\[
|Z\rho_\Sigma|
=
M+|\rho_\Sigma|
\geq
|\rho|.
\]
Hence
\[
\LCS(\pi^\star,Z\delta_i)
\geq
\LCS(\widehat{\pi},Z\delta_i).
\]

Finally, Step~1 established that sorting the symbols of \(\Sigma_Z\)
does not decrease any of the relevant LCS lengths. Thus, for each
\(i\in\{1,2\}\),
\[
\LCS(\pi^\star,\gamma_iZ)
\geq
\LCS(\widehat{\pi},\gamma_iZ)
\geq
\LCS(\pi,\gamma_iZ),
\]
\[
\LCS(\pi^\star,Z\delta_i)
\geq
\LCS(\widehat{\pi},Z\delta_i)
\geq
\LCS(\pi,Z\delta_i).
\]
This proves the lemma.
\end{proof}

\section{A General Framework for Boolean CSPs}
\label{sec:boolean-csp-framework}

In this section, we give a general framework for reducing arbitrary Boolean CSPs to Ulam median. While reading~\Cref{sec:3sat-reduction}, a careful reader might notice that the clause-witness gadget is the only part of the construction that
uses the form of a \textsc{3-SAT} clause. The consistency gadget uses
only the variables and labels attached to the symbols, and the
anchor-block lemma does not refer to the 3-SAT formula at all. It is
therefore natural to ask whether we can replace the clause-witness
gadget by one for a different Boolean constraint. Given a general constraint, the key ingredient is a pair of local permutations whose score, when restricted to the symbols
agreeing with an assignment, is large exactly when the
constraint is satisfied by the assignment. As we show below, such a gadget, together
with the same balance assumption, is enough for the reduction to
go through.

Let \(\varphi=(V,\mathcal C)\) be a Boolean CSP instance, where
\(V=\{x_1,\ldots,x_n\}\) and
\(\mathcal C=(C_1,\ldots,C_m)\). For an assignment
\(f:V\to\{0,1\}\), let
\[
\Val_\varphi(f)
:=
\bigl|\{j\in[m]:f\text{ satisfies }C_j\}\bigr|,
\]
and define
\[
\OPT_{\mathrm{CSP}}(\varphi)
:=
\max_{f:V\to\{0,1\}}\Val_\varphi(f).
\]

\paragraph{Local constraint gadgets.}
For each constraint \(C_j\), let \(V_j\subseteq V\) denote its
variables, and let \(\Sigma_j\) be a fresh alphabet of \(B\) symbols,
where \(B\) is a fixed positive integer. The alphabets
\(\Sigma_1,\ldots,\Sigma_m\) are pairwise disjoint, and we write
\[
\Sigma
:=
\Sigma_1\sqcup \cdots\sqcup \Sigma_m,
\qquad
N:=|\Sigma|=Bm.
\]

Each symbol \(\sigma\in\Sigma_j\) is associated with a variable
\(\operatorname{var}(\sigma)\in V_j\) and a label
\(\ell(\sigma)\in\{0,1\}\). As before, a symbol agrees with an
assignment when its associated variable is assigned its label.
For a local assignment \(a:V_j\to\{0,1\}\), define
\[
\Agr_j(a)
:=
\{\sigma\in\Sigma_j:
a(\operatorname{var}(\sigma))=\ell(\sigma)\}.
\]

We require two permutations \(P_j,P_j'\in\mathcal P(\Sigma_j)\)
whose score on this agreement set is large if and only if \(C_j\) is satisfied. More precisely, we want
constants \(A\geq0\) and \(\Delta>0\), independent of \(j\), such
that, for every local assignment \(a\),
\begin{equation}
\label{eq:local-csp-gadget}
\LCS(P_j|_{\Agr_j(a)},P_j'|_{\Agr_j(a)})
=
\begin{cases}
A+\Delta, & \text{if \(a\) satisfies \(C_j\)},\\
A, & \text{otherwise}.
\end{cases}
\end{equation}
We refer to \(B\) as the size of the local gadget and to
\(\Delta\) as its gap. We assume that the labeled alphabets and the
local permutations can be constructed in polynomial time from the
CSP instance.

The clause-witness gadget from \Cref{sec:3sat-reduction} is an example of such a local constraint gadget. It has
\(B=3\), \(A=0\), and \(\Delta=1\).

\paragraph{Global balance.}
For the consistency calculation from the
\textsc{3-SAT} proof to go through, we impose the same balance condition. For a
variable \(x\in V\) and \(b\in\{0,1\}\), define
\[
\Sigma_x^b
:=
\{\sigma\in\Sigma:
\operatorname{var}(\sigma)=x,\ \ell(\sigma)=b\},
\]
and require
\begin{equation}
\label{eq:global-balance}
|\Sigma_x^0|=|\Sigma_x^1|
\qquad
\text{for every }x\in V.
\end{equation}
Write \(r_x:=|\Sigma_x^0|=|\Sigma_x^1|\), so that
\(\sum_{x\in V}r_x=N/2\).

As before, for a global assignment \(f:V\to\{0,1\}\), its agreement set is
\[
\Agr(f)
:=
\{\sigma\in\Sigma:
f(\operatorname{var}(\sigma))=\ell(\sigma)\},
\]
and the balance condition gives
\begin{equation}
\label{eq:global-agreement-size}
|\Agr(f)|
=
|\Sigma\setminus\Agr(f)|
=
\frac N2
=
\frac{Bm}{2}
\qquad
\text{for every }f.
\end{equation}
Thus, the total number of symbols agreeing with an assignment, and
the total number disagreeing with it, are independent of the
assignment.

\paragraph{Constraint-checking permutations.}
To make the local scores add, we place the constraint alphabets in
the same order in both permutations:
\[
\beta_1:=P_1P_2\cdots P_m,
\qquad
\beta_2:=P_1'P_2'\cdots P_m'.
\]

\begin{lemma}
\label{lem:framework-constraint-score}
For every \(S\subseteq\Sigma\),
\[
\LCS(\beta_1|_S,\beta_2|_S)
=
\sum_{j=1}^m
\LCS\bigl(
P_j|_{S\cap\Sigma_j},
P_j'|_{S\cap\Sigma_j}
\bigr).
\]
In particular, for every assignment \(f:V\to\{0,1\}\),
\[
\LCS(\beta_1|_{\Agr(f)},\beta_2|_{\Agr(f)})
=
Am+\Delta\Val_\varphi(f).
\]
\end{lemma}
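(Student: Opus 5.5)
The argument is a block-decomposition of common subsequences, essentially the same as the second half of the proof of \Cref{lem:clause-gadget}, followed by a direct evaluation using \eqref{eq:local-csp-gadget}. Fix \(S\subseteq\Sigma\) and write \(S_j:=S\cap\Sigma_j\). Since \(\beta_1|_S=P_1|_{S_1}\cdots P_m|_{S_m}\) and \(\beta_2|_S=P_1'|_{S_1}\cdots P_m'|_{S_m}\), both restricted strings are concatenations of \(m\) blocks over the pairwise disjoint alphabets \(\Sigma_1,\ldots,\Sigma_m\), and the blocks appear in the same order in both.

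For the lower bound, I will concatenate, in increasing order of \(j\), a longest common subsequence of \(P_j|_{S_j}\) and \(P_j'|_{S_j}\) for each \(j\). Because the blocks are ordered identically in both strings, this concatenation is a common subsequence of \(\beta_1|_S\) and \(\beta_2|_S\). Its length is the claimed sum.

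For the upper bound, take any common subsequence \(\rho\) and split it as \(\rho=\rho|_{\Sigma_1}\cdots\rho|_{\Sigma_m}\). This splitting is valid because every symbol of \(\Sigma_j\) precedes every symbol of \(\Sigma_{j'}\) for \(j<j'\) in both strings. Each piece \(\rho|_{\Sigma_j}\) is a subsequence of \(\beta_1|_S\) restricted to \(\Sigma_j\), which is \(P_j|_{S_j}\), and similarly of \(P_j'|_{S_j}\). Hence \(|\rho|_{\Sigma_j}|\le\LCS(P_j|_{S_j},P_j'|_{S_j})\), and summing over \(j\) gives the bound. The only point that needs stating carefully is that restricting a common subsequence to a sub-alphabet yields a common subsequence of the restricted strings. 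That is immediate, so I expect no real obstacle in this step.

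For the ``in particular'' statement, set \(S=\Agr(f)\). Then \(S\cap\Sigma_j=\Agr_j(f|_{V_j})\), and \(f|_{V_j}\) satisfies \(C_j\) exactly when \(f\) satisfies \(C_j\), since \(C_j\) depends only on the variables in \(V_j\). By \eqref{eq:local-csp-gadget}, the \(j\)-th term of the sum is \(A+\Delta\) or \(A\) according to whether \(C_j\) is satisfied. Summing over \(j\) gives \(Am+\Delta\Val_\varphi(f)\).
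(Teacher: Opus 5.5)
Your proposal is correct and follows essentially the same route as the paper. Both arguments restrict an arbitrary common subsequence to each block alphabet \(\Sigma_j\) for the upper bound and concatenate blockwise longest common subsequences for the lower bound. Both then use \(\Agr(f)\cap\Sigma_j=\Agr_j(f|_{V_j})\) with \eqref{eq:local-csp-gadget} to sum the local scores.

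A small remark: the upper bound needs only that the \(\Sigma_j\) partition \(\Sigma\), so that \(|\rho|=\sum_j\bigl|\rho|_{\Sigma_j}\bigr|\). The common block order is used only in the lower bound.
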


\begin{proof}
Let \(S_j:=S\cap\Sigma_j\), and let \(\tau\) be a common
subsequence of \(\beta_1|_S\) and \(\beta_2|_S\).
For each \(j\), let \(\tau_j:=\tau|_{\Sigma_j}\). Then
\(\tau_j\) is a common subsequence of \(P_j|_{S_j}\) and
\(P_j'|_{S_j}\). Hence
\[
|\tau|
=
\sum_{j=1}^m|\tau_j|
\leq
\sum_{j=1}^m\LCS(P_j|_{S_j},P_j'|_{S_j}).
\]
Conversely, the blocks occur in the same order in both permutations,
so concatenating a longest common subsequence from each block
attains the right-hand side.

For an assignment \(f\), we have
\(
\Agr(f)\cap\Sigma_j=\Agr_j(f|_{V_j}).
\)
Applying \eqref{eq:local-csp-gadget} in each block and summing gives
the second equality.
\end{proof}

\paragraph{Consistency and anchor gadgets.}
We reuse the variable-consistency permutations from
\Cref{sec:3sat-reduction}. For each \(x\in V\) and \(b\in\{0,1\}\),
let \(B_x^b\) list the symbols of \(\Sigma_x^b\) in a fixed order,
and define
\[
\begin{aligned}
\alpha_1
&:=B_{x_1}^0B_{x_1}^1\cdots B_{x_n}^0B_{x_n}^1,\\
\alpha_2
&:=B_{x_1}^1B_{x_1}^0\cdots B_{x_n}^1B_{x_n}^0.
\end{aligned}
\]
The proof of \Cref{lem:consistency-gadget} applies unchanged, giving
\begin{equation}
\label{eq:framework-consistency-score}
\LCS(\alpha_1|_S,\alpha_2|_S)
=
\sum_{x\in V}
\max\bigl\{
|S\cap\Sigma_x^0|,
|S\cap\Sigma_x^1|
\bigr\}
\end{equation}
for every \(S\subseteq\Sigma\). In particular, the symbols
disagreeing with an assignment form one full label class for each
variable. By global balance,
\begin{equation}
\label{eq:framework-honest-consistency}
\LCS\bigl(
\alpha_1|_{\Sigma\setminus\Agr(f)},
\alpha_2|_{\Sigma\setminus\Agr(f)}
\bigr)
=
\frac N2.
\end{equation}

Let \(\Sigma_Z:=\{z_1,\ldots,z_M\}\) be an alphabet disjoint from
\(\Sigma\), where \(M:=2N+1\), and let \(Z:=z_1\cdots z_M\).
As before, define
\[
L_1:=\beta_1Z,
\qquad
L_2:=\beta_2Z,
\qquad
R_1:=Z\alpha_1,
\qquad
R_2:=Z\alpha_2,
\]
and let the set of input permutations be
\[
\Pi_\varphi:=\{L_1,L_2,R_1,R_2\}.
\]

By \Cref{lem:anchor-block}, we may restrict attention, without loss
of generality, to candidate medians of the form
\(\tau_LZ\tau_R\), inducing a partition
\(\Sigma=S_L\sqcup S_R\). In the \textsc{3-SAT}
construction, \(S_L\) was the proposed true side. Here, it is the
proposed \textit{agreement set}, while \(S_R\) is the proposed \textit{disagreement
set}. A partition induced by an assignment \(f\) has
\[
S_L=\Agr(f),
\qquad
S_R=\Sigma\setminus\Agr(f).
\]
As before, write
\[
q(S):=\LCS(\beta_1|_S,\beta_2|_S),
\qquad
\kappa(S):=\LCS(\alpha_1|_S,\alpha_2|_S)
\]
for the constraint and consistency scores, respectively. The proof
of \Cref{lem:canonical-score} applies unchanged, so for each fixed
partition,
\begin{equation}
\label{eq:framework-partition-score}
\max_{\tau_L,\tau_R}
\sum_{\rho\in\Pi_\varphi}
\LCS(\tau_LZ\tau_R,\rho)
=
4M+N+q(S_L)+\kappa(S_R).
\end{equation}
Since each input permutation has length \(M+N\), this gives
\begin{equation}
\label{eq:framework-partition-optimum}
\OPT_{\mathrm{med}}(\Pi_\varphi)
=
3N-
\max_{\Sigma=S_L\sqcup S_R}
\bigl(q(S_L)+\kappa(S_R)\bigr),
\end{equation}
just as in \eqref{eq:3sat-partition-optimum}.

The local gadget condition determines \(q(\Agr(f))\), but a
proposed agreement set need not arise from an assignment. To compare
its score with that of an assignment, we will delete the selected
symbols that disagree with the assignment. The following lemma
bounds the score lost in doing so. It replaces the clause-witness
counting argument in the proof of
\Cref{thm:3sat-exact-relation}.

\begin{lemma}
\label{lem:framework-constraint-stability}
Let \(P,P'\) be permutations of the same alphabet and let \(S,T\) be
two subsets of that alphabet. Then
\[
\LCS(P|_T,P'|_T)
\leq
\LCS(P|_S,P'|_S)+|T\setminus S|.
\]
\end{lemma}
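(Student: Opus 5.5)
The argument goes through the intersection \(S\cap T\). The plan is to take a longest common subsequence of \(P|_T\) and \(P'|_T\), restrict it to \(S\cap T\), and check that this restriction is still a common subsequence of \(P|_S\) and \(P'|_S\). The restriction loses at most \(|T\setminus S|\) symbols, which gives the bound.

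Let \(\rho\) be a longest common subsequence of \(P|_T\) and \(P'|_T\), so \(|\rho|=\LCS(P|_T,P'|_T)\). Since \(P\) and \(P'\) are permutations, every symbol occurs at most once in \(\rho\). Also, \(\rho\) is a common subsequence of \(P\) and \(P'\), because \(P|_T\) and \(P'|_T\) are themselves subsequences of \(P\) and \(P'\). Now set \(\rho':=\rho|_S\). Deleting symbols keeps the relative order of the remaining ones, so \(\rho'\) is again a common subsequence of \(P\) and \(P'\). All its symbols lie in \(S\), hence \(\rho'\) is a common subsequence of \(P|_S\) and \(P'|_S\). This gives \(\LCS(P|_S,P'|_S)\geq|\rho'|\).

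To count what was lost, note that the symbols deleted from \(\rho\) are exactly those in \(T\setminus S\), since every symbol of \(\rho\) is in \(T\). Each such symbol appears at most once, so \(|\rho|-|\rho'|\leq|T\setminus S|\). Combining the two bounds gives
\[
\LCS(P|_T,P'|_T)=|\rho|\leq|\rho'|+|T\setminus S|\leq\LCS(P|_S,P'|_S)+|T\setminus S|.
\]

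There is no real obstacle here. The only point to state explicitly is the basic fact that a subsequence of a common subsequence remains a common subsequence of the restricted permutations, and that holds because restriction preserves relative order.
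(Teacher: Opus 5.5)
Your proposal is correct and follows essentially the same argument as the paper: take a longest common subsequence of \(P|_T\) and \(P'|_T\), delete its symbols from \(T\setminus S\) (losing at most \(|T\setminus S|\) of them), and observe that what remains is a common subsequence of \(P|_S\) and \(P'|_S\). You simply make explicit the routine facts (restriction preserves order, each symbol occurs at most once) that the paper leaves implicit.
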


\begin{proof}
Let \(\tau\) be a longest common subsequence of \(P|_T\) and
\(P'|_T\). Deleting all symbols in \(T\setminus S\) removes at most
\(|T\setminus S|\) symbols and leaves a common subsequence of
\(P|_S\) and \(P'|_S\). Hence
\[
|\tau|-|T\setminus S|
\leq
\LCS(P|_S,P'|_S),
\]
which proves the claim.
\end{proof}

We can now repeat the comparison between an arbitrary partition and
a partition induced by an assignment.

\begin{theorem}
\label{thm:boolean-csp-exact-relation}
The instance \(\Pi_\varphi\) satisfies
\[
\OPT_{\mathrm{med}}(\Pi_\varphi)
=
\left(\frac{5B}{2}-A\right)m
-\Delta\OPT_{\mathrm{CSP}}(\varphi).
\]
\end{theorem}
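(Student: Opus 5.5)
By \eqref{eq:framework-partition-optimum}, and since $N=Bm$ so that $3N-N/2=\tfrac{5B}{2}m$, the theorem is equivalent to the identity
\[
\max_{\Sigma=S_L\sqcup S_R}\bigl(q(S_L)+\kappa(S_R)\bigr)=\frac N2+Am+\Delta\OPT_{\mathrm{CSP}}(\varphi).
\]
I will prove the two inequalities separately, following the structure of the proof of \Cref{thm:3sat-exact-relation}. The only change is that the clause-counting step there is replaced by \Cref{lem:framework-constraint-stability}.

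For the lower bound, fix an optimal assignment $f$ and take the partition $S_L=\Agr(f)$, $S_R=\Sigma\setminus\Agr(f)$. By \Cref{lem:framework-constraint-score},
\[
q(S_L)=Am+\Delta\OPT_{\mathrm{CSP}}(\varphi),
\]
and by \eqref{eq:framework-honest-consistency}, $\kappa(S_R)=N/2$. Adding these gives the required value.

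For the upper bound, fix an arbitrary partition $S_L\sqcup S_R$. Decode an assignment by
\[
f(x)\in\arg\max_b|S_L\cap\Sigma_x^b|,
\]
and set $D:=|S_L\setminus\Agr(f)|$.

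The consistency computation from the 3-SAT proof goes through verbatim. Global balance shows that $\Sigma_x^{1-f(x)}$ is a largest label class of $S_R$ for every variable $x$. Then \eqref{eq:framework-consistency-score} gives that $\kappa(S_R)$ equals the number of disagreeing symbols lying in $S_R$, and by \eqref{eq:global-agreement-size} this is exactly $N/2-D$.

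For the constraint score, apply \Cref{lem:framework-constraint-stability} to $\beta_1,\beta_2$ with $T=S_L$ and $S=\Agr(f)$. Since $|S_L\setminus\Agr(f)|=D$, this yields
\[
q(S_L)\le q(\Agr(f))+D=Am+\Delta\Val_\varphi(f)+D.
\]
Summing the two bounds cancels $D$:
\[
q(S_L)+\kappa(S_R)\le \frac N2+Am+\Delta\Val_\varphi(f)\le \frac N2+Am+\Delta\OPT_{\mathrm{CSP}}(\varphi).
\]

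The one point that needs care is the direction of monotonicity in \Cref{lem:framework-constraint-stability}. The lemma only charges symbols of $T\setminus S$, so any symbols of $\Agr(f)\setminus S_L$ cost nothing. This is why the argument needs no positivity assumptions on $A$ or $\Delta$ beyond \eqref{eq:local-csp-gadget}. Everything else is the routine substitution $N=Bm$ into \eqref{eq:framework-partition-optimum}.
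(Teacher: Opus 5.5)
Your proof is correct and matches the paper's argument step for step. It uses the same reduction to the combined-score identity via \eqref{eq:framework-partition-optimum}, the same majority decoding with $D=|S_L\setminus\Agr(f)|$, \Cref{lem:framework-constraint-stability} with $T=S_L$ and $S=\Agr(f)$, and the same balance computation $\kappa(S_R)=N/2-D$. One small correction to your closing remark: the final step $\Delta\Val_\varphi(f)\le\Delta\OPT_{\mathrm{CSP}}(\varphi)$ does use $\Delta\ge 0$, which holds because the framework assumes $\Delta>0$.
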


\begin{proof}
By \eqref{eq:framework-partition-optimum}, it suffices to prove that
\[
\max_{\Sigma=S_L\sqcup S_R}
\bigl(q(S_L)+\kappa(S_R)\bigr)
=
\frac N2+Am+\Delta\OPT_{\mathrm{CSP}}(\varphi).
\]
First, let \(f\) attain \(\OPT_{\mathrm{CSP}}(\varphi)\), and take
\[
S_L=\Agr(f),
\qquad
S_R=\Sigma\setminus\Agr(f).
\]
By \Cref{lem:framework-constraint-score} and
\eqref{eq:framework-honest-consistency},
\[
q(S_L)=Am+\Delta\OPT_{\mathrm{CSP}}(\varphi),
\qquad
\kappa(S_R)=\frac N2.
\]
This proves the lower bound on the maximum combined score.

For the reverse inequality, fix an arbitrary partition
\(\Sigma=S_L\sqcup S_R\). Following the proof of
\Cref{thm:3sat-exact-relation}, we choose an assignment that agrees
with as many symbols of \(S_L\) as possible. Thus, for each
\(x\in V\), choose
\[
f(x)\in\arg\max_{b\in\{0,1\}}|S_L\cap\Sigma_x^b|,
\]
breaking ties arbitrarily, and let
\[
D:=|S_L\setminus\Agr(f)|.
\]
The quantity \(D\) counts the selected symbols that disagree with
the decoded assignment.

Apply \Cref{lem:framework-constraint-stability} to the global
permutations \(\beta_1,\beta_2\), with \(S=\Agr(f)\) and \(T=S_L\).
Together with \Cref{lem:framework-constraint-score}, this gives
\begin{equation}
\label{eq:framework-constraint-soundness}
\begin{aligned}
q(S_L)
&\leq q(\Agr(f))+D\\
&=Am+\Delta\Val_\varphi(f)+D.
\end{aligned}
\end{equation}

It remains to account for these disagreements in the consistency
score. For each variable \(x\), the choice of \(f(x)\) and global
balance give
\[
\begin{aligned}
|S_R\cap\Sigma_x^{1-f(x)}|
&=r_x-|S_L\cap\Sigma_x^{1-f(x)}|\\
&\geq r_x-|S_L\cap\Sigma_x^{f(x)}|\\
&=|S_R\cap\Sigma_x^{f(x)}|.
\end{aligned}
\]
Consequently,
\eqref{eq:framework-consistency-score} and
\eqref{eq:global-agreement-size} give
\begin{equation}
\label{eq:framework-consistency-soundness}
\begin{aligned}
\kappa(S_R)
&=\sum_{x\in V}|S_R\cap\Sigma_x^{1-f(x)}|\\
&=|\Sigma\setminus\Agr(f)|-D\\
&=\frac N2-D.
\end{aligned}
\end{equation}

The possible excess in the constraint score is therefore offset by
the consistency loss. Combining
\Cref{eq:framework-constraint-soundness,eq:framework-consistency-soundness},
we obtain
\begin{align*}
q(S_L)+\kappa(S_R)
&\leq
Am+\Delta\Val_\varphi(f)+D+\frac N2-D\\
&=
Am+\Delta\Val_\varphi(f)+\frac N2\\
&\leq
Am+\Delta\OPT_{\mathrm{CSP}}(\varphi)+\frac N2.
\end{align*}
This proves the upper bound on the maximum combined score.
Substituting into \eqref{eq:framework-partition-optimum} and using
\(N=Bm\) proves the theorem.
\end{proof}

\paragraph{Approximation hardness.}
\Cref{thm:boolean-csp-exact-relation} shows that a gap for the source CSP yields a gap for the resulting Ulam-median instance. More precisely, suppose that, for fixed parameters \(A,B,\Delta\)
and constants \(0\leq s<c\leq1\), it is $\mathrm{NP}$-hard to distinguish such
instances for which
\(
\OPT_{\mathrm{CSP}}(\varphi)\geq cm
\)
from those for which
\(
\OPT_{\mathrm{CSP}}(\varphi)\leq sm.
\)
Then \Cref{thm:boolean-csp-exact-relation} gives the following.

\begin{corollary}
\label{cor:boolean-csp-gap}
Under the preceding assumptions, given a four-permutation Ulam-median instance \(\Pi\), in which each input
permutation has length \(3Bm+1\), it is $\mathrm{NP}$-hard to distinguish between the following two cases.
\begin{itemize}
    \item \emph{(Completeness)} \(
\OPT_{\mathrm{med}}(\Pi)
\leq
\left(\frac{5B}{2}-A-\Delta c\right)m
\)
\item \emph{(Soundness)} \(
\OPT_{\mathrm{med}}(\Pi)
\geq
\left(\frac{5B}{2}-A-\Delta s\right)m.
\)
\end{itemize}
Consequently, it is $\mathrm{NP}$-hard to approximate the Ulam median within any
fixed factor \(\rho\) satisfying
\(
1\leq\rho<
\frac{\frac{5B}{2}-A-\Delta s}
     {\frac{5B}{2}-A-\Delta c}.
\)
\end{corollary}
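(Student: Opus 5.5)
The proof will follow the pattern of \Cref{cor:ulam-gap-3sat,cor:53-52-hardness}, with \Cref{thm:boolean-csp-exact-relation} playing the role of \Cref{thm:3sat-exact-relation}.

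Given a CSP instance \(\varphi\) from the assumed hard family, we build \(\Pi_\varphi\) as in this section. This is possible in polynomial time because the gadgets are polynomial-time constructible and the anchor block has size \(M=2N+1\). Each input permutation then has length \(M+N=3N+1=3Bm+1\), as claimed.

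\textbf{Gap.} \Cref{thm:boolean-csp-exact-relation} gives
\[
\OPT_{\mathrm{med}}(\Pi_\varphi)=\left(\tfrac{5B}{2}-A\right)m-\Delta\,\OPT_{\mathrm{CSP}}(\varphi).
\]
The right-hand side is decreasing in \(\OPT_{\mathrm{CSP}}\) because \(\Delta>0\). So \(\OPT_{\mathrm{CSP}}\ge cm\) yields the completeness bound, and \(\OPT_{\mathrm{CSP}}\le sm\) yields the soundness bound. Distinguishing the two Ulam cases would therefore distinguish the two CSP cases, so it is NP-hard.

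\textbf{Inapproximability.} Write \(\mathcal{C}:=\left(\tfrac{5B}{2}-A-\Delta c\right)m\) and \(\mathcal{S}:=\left(\tfrac{5B}{2}-A-\Delta s\right)m\). Suppose a \(\rho\)-approximation algorithm exists with \(\rho<\mathcal{S}/\mathcal{C}\).

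\begin{itemize}
\item On a completeness instance, it returns a permutation of cost at most \(\rho\,\mathcal{C}<\mathcal{S}\).
\item On a soundness instance, every permutation has cost at least \(\mathcal{S}\).
\end{itemize}

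Hence comparing the computed cost with \(\mathcal{S}\) decides which case holds, which is NP-hard. Here the cost of the returned permutation is computable in polynomial time via LCS.

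\textbf{Main obstacle.} The only delicate point is that the ratio must be well defined: we need \(\tfrac{5B}{2}-A-\Delta c>0\). To see this, take the assignment \(f\) attaining \(\OPT_{\mathrm{CSP}}(\varphi)\) and bound \(q(\Agr(f))\) by the size of the agreement set:
\[
Am+\Delta\,\OPT_{\mathrm{CSP}}(\varphi)=q(\Agr(f))\le|\Agr(f)|=\tfrac{N}{2}=\tfrac{Bm}{2}.
\]
The first equality is \Cref{lem:framework-constraint-score}, and the last uses \eqref{eq:global-agreement-size}. Since \(\OPT_{\mathrm{CSP}}(\varphi)\ge cm\) in the completeness case, this gives \(A+\Delta c\le B/2\). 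Therefore \(\tfrac{5B}{2}-A-\Delta c\ge 2B>0\), so \(\mathcal{C}>0\).

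This argument uses that completeness instances exist (otherwise the case is vacuous). Alternatively, it follows directly from \(\OPT_{\mathrm{med}}\ge0\) together with \(\OPT_{\mathrm{med}}\ge 2N\), since each of \(R_1,R_2\) contributes at least \(N\) in the canonical form by \eqref{eq:framework-partition-score}.
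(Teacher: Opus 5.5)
Your proposal is correct and follows the paper's proof essentially step for step: the gap is read off from \Cref{thm:boolean-csp-exact-relation}, and the positivity of the denominator comes from $Am+\Delta cm\le q(\Agr(f))\le|\Agr(f)|=Bm/2$ on a completeness instance. The one error is in your optional alternative justification. A single $R_i$ need not contribute $N$; for example, with $S_R=\Sigma$ and $\tau_R=\alpha_i$ it contributes $0$. The bound $\OPT_{\mathrm{med}}\ge 2N$ nonetheless holds, because $q(S_L)\le|S_L|$ and $\kappa(S_R)\le|S_R|$ in \eqref{eq:framework-partition-optimum}, so each pair $L_i,R_i$ contributes at least $N$. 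Note also that this alternative route still needs a completeness instance to exist, just like the main argument.
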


\begin{proof}
The gap follows directly from
\Cref{thm:boolean-csp-exact-relation}. On a completeness instance, an assignment \(f\)
satisfying at least \(cm\) constraints has
\[
Am+\Delta cm
\leq q(\Agr(f))
\leq|\Agr(f)|
=\frac{Bm}{2}.
\]
Hence
\[
\frac{5B}{2}-A-\Delta c\geq2B>0.
\]
For any \(\rho\) in the stated range, a
\(\rho\)-approximation algorithm would return a permutation of cost
strictly less than
\((5B/2-A-\Delta s)m\) on a completeness instance. On a soundness
instance, every permutation has cost at least this threshold. Such
an algorithm would therefore distinguish the two cases.
\end{proof}

Taking the clause-witness gadget and the balanced
\textsc{Max-3-SAT} gap recovers \Cref{cor:53-52-hardness}. More
generally, a stronger source CSP gap alone does not necessarily give
a stronger Ulam-median hardness factor. The local gadget must also
realize the predicate efficiently: what matters is the gap
\(\Delta\) relative to the fixed overhead
\(\frac{5B}{2}-A\).

\section{An Instantiation for \textsc{Max-E3-Lin-2}}
\label{sec:e3lin-gadget}

We now instantiate the framework of
\Cref{sec:boolean-csp-framework} for the problem
\textsc{Max-E3-Lin-2}. The main motivation for switching from 3-SAT to \textsc{Max-E3-Lin-2} is that the latter allows us to assume a much larger gap. Furthermore, we will show that \textsc{Max-E3-Lin-2} admits efficient local gadgets as well. Both of these factors will contribute to ultimately obtaining a 35/34 factor hardness of approximation for Ulam median, surpassing the 53/52 factor hardness from~\Cref{sec:3sat-reduction}.

Let \(\varphi\) be an instance with \(m\)
equations over \(\mathbb F_2\), each of the form
\[
x\oplus y\oplus z=b,
\qquad b\in\{0,1\},
\]
where \(x,y,z\) are distinct variables. We keep the notation
\(\Val_\varphi\) and \(\OPT_{\mathrm{CSP}}\) from the previous section.
The main contribution of this section is an eight-symbol local gadget that satisfies the conditions required by the framework.

\paragraph{The local gadget.}
Fix an equation
\[
C_j:\qquad x\oplus y\oplus z=b.
\]
We now describe the local permutations for this equation. The permutations will make use of the following eight symbols
\[
\Sigma_j
=
\{
x_0^{(1)},x_0^{(2)},x_1^{(1)},x_1^{(2)},
y_0,y_1,z_0,z_1
\}.
\]
Here, the four \(x\)-symbols
are associated with \(x\), the two \(y\)-symbols with \(y\), and
the two \(z\)-symbols with \(z\). Next, we assign labels to these symbols. For \(p\in\{0,1\}\) and
\(t\in\{1,2\}\), we set
\[
\ell(x_p^{(t)})=p\oplus(1-b),
\qquad
\ell(y_p)=p,
\qquad
\ell(z_p)=p.
\]
In other words, the label of a symbol is always equal to its subscript unless the right-hand side \(b\) of the equation equals zero, in which case the labels of only the \(x\)-symbols are flipped. Define
\[
\begin{aligned}
P_j
&:=
x_0^{(1)}\,y_0\,z_0\,x_1^{(1)}\,x_1^{(2)}\,z_1\,y_1\,x_0^{(2)},\\
P_j'
&:=
x_1^{(1)}\,y_0\,z_1\,x_0^{(1)}\,x_0^{(2)}\,z_0\,y_1\,x_1^{(2)}.
\end{aligned}
\]

\begin{lemma}[Local parity gadget]
\label{lem:e3lin-local-gadget}
For every local assignment \(a:\{x,y,z\}\to\{0,1\}\),
\begin{equation}
\label{eq:e3lin-local-gap}
\LCS(P_j|_{\Agr_j(a)},P_j'|_{\Agr_j(a)})
=
\begin{cases}
3, & \text{if \(a\) satisfies \(C_j\)},\\
2, & \text{otherwise}.
\end{cases}
\end{equation}
Moreover, for each variable in \(C_j\), the gadget contains equally
many symbols of label \(0\) and of label \(1\).
\end{lemma}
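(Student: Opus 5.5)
We first reduce to the case \(b=1\). When \(b=0\), every \(x\)-symbol has its label flipped relative to the case \(b=1\). Hence the agreement set of a local assignment \(a\) under the \(b=0\) labels is exactly the agreement set of \(a'\) under the \(b=1\) labels, where \(a'\) is \(a\) with the value of \(x\) flipped. Flipping \(x\) also exchanges satisfaction of \(x\oplus y\oplus z=0\) with satisfaction of \(x\oplus y\oplus z=1\). So \eqref{eq:e3lin-local-gap} for \(b=0\) follows from the case \(b=1\) applied to \(a'\). From now on \(b=1\), and every label equals its subscript.

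Write \(p:=a(x)\), \(q:=a(y)\), \(r:=a(z)\). Then \(\Agr_j(a)=\{x_p^{(1)},x_p^{(2)},y_q,z_r\}\), a set of four symbols.

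\textbf{Bounds valid for every assignment.} The pair \(x_p^{(1)}x_p^{(2)}\) appears in this order in both \(P_j\) and \(P_j'\), so the restricted LCS is always at least \(2\).

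For the upper bound, the key structural fact concerns where the two selected \(x\)-symbols sit:
\begin{itemize}
\item if \(p=0\), they occupy positions \(1\) and \(8\) in \(P_j\) but the consecutive positions \(4,5\) in \(P_j'\);
\item if \(p=1\), the roles of the two permutations are reversed.
\end{itemize}
In either case, one permutation places \(y_q\) and \(z_r\) strictly between the two \(x\)-symbols, while the other places them outside the consecutive \(x\)-pair. Therefore no common subsequence contains both \(x\)-symbols together with \(y_q\) or \(z_r\). This rules out a common subsequence of length \(4\), so the LCS is always \(2\) or \(3\).

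\textbf{Deciding between \(2\) and \(3\).} It remains to show that the LCS equals \(3\) exactly when \(p\oplus q\oplus r=1\). A length-\(3\) common subsequence must drop at least one selected symbol. If it keeps both \(x\)-symbols, the previous paragraph leaves no room for a third symbol. So it has one of two forms:
\begin{itemize}
\item one \(x\)-symbol together with both \(y_q\) and \(z_r\), or
\item both \(x\)-symbols together with one of \(y_q,z_r\), which we have just excluded.
\end{itemize}
Only the first form is possible. It requires two things: \(y_q\) and \(z_r\) appear in the same relative order in both permutations, and they lie on a common side of some single \(x\)-symbol in both permutations.

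The cleanest way to finish is to check the eight assignments \((p,q,r)\) directly, writing down the two restricted four-symbol strings in each case. For example, \((0,0,1)\) gives \(x_0^{(1)}y_0z_1x_0^{(2)}\) and \(y_0z_1x_0^{(1)}x_0^{(2)}\), with common subsequence \(y_0z_1x_0^{(2)}\). In contrast, \((0,0,0)\) gives \(x_0^{(1)}y_0z_0x_0^{(2)}\) and \(y_0x_0^{(1)}x_0^{(2)}z_0\).

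In each unsatisfying case, we will exhibit two disjoint pairs of selected symbols whose relative order is reversed between the two strings. In the example \((0,0,0)\), these are \(\{x_0^{(1)},y_0\}\) and \(\{z_0,x_0^{(2)}\}\). Any common subsequence must omit a symbol from each reversed pair, so its length is at most \(2\). In each satisfying case, we exhibit an explicit common subsequence of length \(3\).

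The main obstacle is just organizing this case check cleanly. To shorten it, we will use the symmetry \(y\leftrightarrow z\) combined with reversal: reversing \(P_j\) and \(P_j'\) swaps the positions of the \(y\)- and \(z\)-symbols in the appropriate way and preserves LCS. This should cut the verification roughly in half.

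\textbf{Balance.} The balance claim is immediate from the construction. The variables \(y\) and \(z\) each have one symbol of label \(0\) and one of label \(1\). The variable \(x\) has two symbols of each label, since the labels of \(x_0^{(t)}\) and \(x_1^{(t)}\) are always complementary; flipping all \(x\)-labels when \(b=0\) preserves this.
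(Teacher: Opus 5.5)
Your proposal is correct and is essentially the paper's argument. The paper tabulates, for both values of $b$, the order pattern $r(a)$ of the four agreeing symbols and reads off its LIS; you run the same finite check, streamlined by reducing $b=0$ to $b=1$ (flip $a(x)$) and by proving $2\le\LCS\le 3$ up front, and all eight cases do go through, with every unsatisfying one giving the pattern $2143$. One small correction to your optional shortcut: the symmetry realized by reversal is not a $y\leftrightarrow z$ swap but the relabeling $x_p^{(1)}\leftrightarrow x_p^{(2)}$, $y_q\leftrightarrow y_{1-q}$, $z_r\leftrightarrow z_{1-r}$, which maps $(P_j,P_j')$ to their reversals and identifies the case $(p,q,r)$ with $(p,1-q,1-r)$.
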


\begin{proof}
Fix some local assignment \(a:\{x, y, z\}\to \{0, 1\}\) and write \(a_x:= a(x), a_y:= a(y), a_z:= a(z)\).

Now, among the symbols associated with \(x\), each label occurs twice.
Among those associated with \(y\) or with \(z\), each label occurs
once. This holds for either value of \(b\), proving the balance
assertion. 

Next, we prove the LCS claim. There are two cases, and both involve verification by hand. First, suppose that \(b=1\). Then we have
\[
\Agr_j(a)
=
\{
x_{a_x}^{(1)},x_{a_x}^{(2)},y_{a_y},z_{a_z}
\}.
\]
Number these four symbols \(1,2,3,4\) in their order of appearance
in \(P_j\), and let \(r(a)\) be the permutation of \([4]\) recording
their order in \(P_j'\). A common subsequence corresponds exactly
to an increasing subsequence of \(r(a)\), so
\[
\LCS(P_j|_{\Agr_j(a)},P_j'|_{\Agr_j(a)})
=
\operatorname{LIS}(r(a)).
\]
Depending on the values of \(a_x, a_y, a_z\), we have
\[
\begin{array}{c|c|c}
a_xa_ya_z & r(a) & \operatorname{LIS}(r(a))\\ \hline
000,011,101,110 & 2143 & 2\\
001 & 2314 & 3\\
010 & 1423 & 3\\
100 & 3124 & 3\\
111 & 1342 & 3
\end{array}
\]
Inspecting the table, we can see that when \(b=1\), the LCS is 3 if and only if \(a_x\oplus a_y \oplus a_z =1\), i.e., exactly when the local assignment \(a\) satisfies equation \(C_j\). If it does not satisfy the equation \(C_j\), the LCS is 2.

If \(b=0\), then we have
\[
\Agr_j(a)
=
\{
x_{1-a_x}^{(1)},x_{1-a_x}^{(2)},y_{a_y},z_{a_z}
\}.
\]
Proceeding exactly as above, we obtain
\[
\begin{array}{c|c|c}
a_xa_ya_z & r(a) & \operatorname{LIS}(r(a))\\ \hline
001,010,100,111 & 2143 & 2\\
000 & 3124 & 3\\
011 & 1342 & 3\\
101 & 2314 & 3\\
110 & 1423 & 3
\end{array}
\]

Once again, we see that the LCS is 3 if and only if the local assignment \(a\) satisfies the equation \(C_j\), otherwise it is 2. This completes the proof. 

\end{proof}

\begin{corollary}
\label{cor:e3lin-exact-relation}
Let \(\Pi_\varphi\) be the instance obtained from the above gadgets
by the construction of \Cref{sec:boolean-csp-framework}. Its four
input permutations have length \(24m+1\), and
\[
\OPT_{\mathrm{med}}(\Pi_\varphi)
=
18m-\OPT_{\mathrm{CSP}}(\varphi).
\]
\end{corollary}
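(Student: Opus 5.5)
The corollary should follow by plugging the parameters of the parity gadget into \Cref{thm:boolean-csp-exact-relation}. All that needs checking is that the gadget meets the framework's hypotheses, and then the arithmetic.

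\textbf{Step 1: identify the parameters.} By \Cref{lem:e3lin-local-gadget}, each equation $C_j$ receives a fresh alphabet $\Sigma_j$ of $B=8$ labeled symbols. For every local assignment, the restricted LCS of $P_j,P_j'$ equals $3$ if the assignment satisfies $C_j$ and $2$ otherwise. So condition \eqref{eq:local-csp-gadget} holds with $A=2$ and $\Delta=1$, and these constants are the same for all $j$. The gadgets are explicit constant-size objects, so the construction runs in polynomial time.

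\textbf{Step 2: global balance.} \Cref{lem:e3lin-local-gadget} also states that within each gadget, every variable of $C_j$ has equally many symbols of label $0$ and of label $1$. Summing over all equations containing a given variable $x$ gives $|\Sigma_x^0|=|\Sigma_x^1|$, which is \eqref{eq:global-balance}. This local-to-global step is the only point needing a remark, and it is immediate because the alphabets $\Sigma_j$ are disjoint. Note that no balancing assumption on the \textsc{Max-E3-Lin-2} instance itself is required, in contrast to the \textsc{3-SAT} case.

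\textbf{Step 3: substitute.} With $N=Bm=8m$ and $M=2N+1$, each input permutation has length $M+N=3N+1=24m+1$. \Cref{thm:boolean-csp-exact-relation} then gives
\[
\OPT_{\mathrm{med}}(\Pi_\varphi)=\left(\frac{5\cdot 8}{2}-2\right)m-\OPT_{\mathrm{CSP}}(\varphi)=18m-\OPT_{\mathrm{CSP}}(\varphi).
\]
There is no real obstacle here: all the work was done in \Cref{thm:boolean-csp-exact-relation} and \Cref{lem:e3lin-local-gadget}.
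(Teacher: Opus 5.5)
Your proposal is correct and follows the paper's own proof. Like the paper, you read off $B=8$, $A=2$, $\Delta=1$ from \Cref{lem:e3lin-local-gadget}, sum the per-gadget balance over the disjoint alphabets to obtain \eqref{eq:global-balance}, and then substitute into \Cref{thm:boolean-csp-exact-relation} to get $(20-2)m-\OPT_{\mathrm{CSP}}(\varphi)$ and permutation length $3Bm+1=24m+1$.
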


\begin{proof}
By \Cref{lem:e3lin-local-gadget}, the local score condition holds
with
\[
B=8,\qquad A=2,\qquad\Delta=1.
\]
Summing the local balance equalities over the constraints containing
each variable gives \eqref{eq:global-balance}. The labeled gadgets
can be constructed in polynomial time, so
\Cref{thm:boolean-csp-exact-relation} gives the claimed identity.
The common permutation length is \(3Bm+1=24m+1\).
\end{proof}

\paragraph{Approximation hardness.}
We now combine H{\aa}stad's gap theorem with the local gadget above
to obtain our main hardness result.

\FourInputHardnessTheorem*

\begin{proof}
Fix \(0<\varepsilon<1/34\). For any fixed
\(0<\eta<1/4\), \Cref{thm:hastad-e3lin-gap} gives a source gap with
\[
c=1-\eta
\qquad\text{and}\qquad
s=\frac12+\eta.
\]
The local gadget has
\[
B=8,\qquad A=2,\qquad \Delta=1.
\]
Therefore, \Cref{cor:boolean-csp-gap} implies hardness within every
factor strictly smaller than
\[
\frac{\frac{5B}{2}-A-\Delta s}
     {\frac{5B}{2}-A-\Delta c}
=
\frac{\frac{35}{2}-\eta}{17+\eta},
\]
and we may choose \(\eta>0\) sufficiently small that this ratio is greater
than \(35/34-\varepsilon\). This proves the theorem.
\end{proof}

We next show that the same gap rules out an additive approximation
scheme.

\MedianAdditiveHardness*

\begin{proof}
Fix \(0<\varepsilon\leq 1/200\), and suppose that there is a
polynomial-time algorithm for Ulam median with additive error at most
\(\varepsilon kd\). Given any instance \(\varphi\) of \textsc{Max-E3-Lin-2} with \(m\geq 4\)
equations, construct the corresponding Ulam-median instance
\(\Pi_\varphi\). It consists of \(k=4\) permutations over an alphabet
of size
\(
d=24m+1.
\)
Since \(m\geq4\), the additive error is at most
\[
\varepsilon kd
=
4\varepsilon(24m+1)
\leq
97\varepsilon m.
\]  
Now, choose \(0<\eta<1/4\) sufficiently small that
\begin{equation}
\label{eq:additive-gap-condition}
2\eta+97\varepsilon<\frac12.
\end{equation}
In the completeness case of
\Cref{thm:hastad-e3lin-gap}, we have
\(\OPT_{\mathrm{CSP}}(\varphi)\geq(1-\eta)m\). Hence
\Cref{cor:e3lin-exact-relation} gives
\[
\OPT_{\mathrm{med}}(\Pi_\varphi)
\leq
(17+\eta)m,
\]
and the algorithm returns a permutation \(\rho\) satisfying
\[
\cost_{\Pi_\varphi}(\rho)
\leq
(17+\eta+97\varepsilon)m.
\]
In the soundness case,
\(\OPT_{\mathrm{CSP}}(\varphi)\leq(1/2+\eta)m\), and therefore every
permutation \(\rho\) satisfies
\[\cost_{\Pi_\varphi}(\rho)
\geq
\OPT_{\mathrm{med}}(\Pi_\varphi)
\geq
\left(\frac{35}{2}-\eta\right)m.
\]
By \eqref{eq:additive-gap-condition},
\(
17+\eta+97\varepsilon
<
\frac{35}{2}-\eta.
\)
Thus, the cost returned by the assumed algorithm distinguishes the
completeness and soundness cases of
\Cref{thm:hastad-e3lin-gap}, a contradiction.
\end{proof}

\section{Hardness of Approximation for Ulam Center}
\label{sec:center-hardness}

We now reduce Ulam median to Ulam center while preserving both the number of input permutations and the optimum value. Together with~\Cref{thm:four-input-hardness}, this yields the claimed hardness of Ulam center.
\CenterHardnessTheorem*

We first describe an approximation factor-preserving reduction from the Ulam median problem to the Ulam center problem.

Throughout this section, we use zero-based indexing: the input
permutations, center permutations, and alphabet copies are indexed by
\(
\mathbb Z_k:=\{0,1,\ldots,k-1\}.
\)

\begin{lemma}[Median-to-center reduction]
\label{lem:median-to-center}
There is a polynomial-time algorithm that, given an instance
\(\Pi=\{\pi_j\}_{j\in\mathbb Z_k}\) of Ulam median over an alphabet
\(\Sigma\) of size \(d\), constructs an instance
\(\mathcal C(\Pi)=\{\sigma_j\}_{j\in\mathbb Z_k}\) of Ulam center
consisting of \(k\) permutations over an alphabet \(\Sigma'\) of size
\(kd\) such that
\(\OPT_{\mathrm{ctr}}(\mathcal C(\Pi))=\OPT_{\mathrm{med}}(\Pi)\).
Moreover, from any center solution for \(\mathcal C(\Pi)\) of radius
at most \(R\), one can compute in polynomial time a permutation
\(\rho\in\mathcal P(\Sigma)\) with \(\cost_\Pi(\rho)\leq R\).
\end{lemma}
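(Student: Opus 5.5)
The plan is to use the cyclic block construction described in the technical overview. Take $k$ disjoint copies $\Sigma^{(0)},\ldots,\Sigma^{(k-1)}$ of $\Sigma$, and let $\Sigma':=\Sigma^{(0)}\sqcup\cdots\sqcup\Sigma^{(k-1)}$, which has size $kd$. For a permutation $\pi$ of $\Sigma$, write $\pi^{(b)}$ for its copy over $\Sigma^{(b)}$. Define
\[
\sigma_j:=\pi_{j}^{(0)}\,\pi_{j+1}^{(1)}\cdots\pi_{j+k-1}^{(k-1)}
\]
with indices of $\pi$ taken in $\mathbb Z_k$. Every $\sigma_j$ lists the blocks in the same order, and block $b$ of $\sigma_j$ is a copy of $\pi_{j+b}$. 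As $j$ ranges over $\mathbb Z_k$, each block $b$ therefore sees every original permutation exactly once. The construction is clearly polynomial.

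\emph{Upper bound.} For $\rho\in\mathcal P(\Sigma)$, set $\hat\rho:=\rho^{(0)}\rho^{(1)}\cdots\rho^{(k-1)}$. Both $\hat\rho$ and $\sigma_j$ are concatenations of blocks over disjoint alphabets in the same order, so an LCS decomposes blockwise: restricting any common subsequence to each $\Sigma^{(b)}$ gives a common subsequence of the blocks, and concatenating blockwise LCSs is feasible. Hence $\LCS(\hat\rho,\sigma_j)=\sum_b\LCS(\rho,\pi_{j+b})$, which gives $\U(\hat\rho,\sigma_j)=\cost_\Pi(\rho)$ for every $j$. Therefore $\OPT_{\mathrm{ctr}}\le\OPT_{\mathrm{med}}$.

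\emph{Extraction.} Let $\tau\in\mathcal P(\Sigma')$ have radius at most $R$. First, let $\tau'$ be the concatenation $\tau|_{\Sigma^{(0)}}\tau|_{\Sigma^{(1)}}\cdots\tau|_{\Sigma^{(k-1)}}$. I claim $\LCS(\tau',\sigma_j)\ge\LCS(\tau,\sigma_j)$. Any common subsequence $\rho$ of $\tau$ and $\sigma_j$ is, being a subsequence of $\sigma_j$, ordered block by block. Its restriction to $\Sigma^{(b)}$ is a subsequence of $\tau|_{\Sigma^{(b)}}$, so $\rho$ is also a subsequence of $\tau'$. Thus $\tau'$ still has radius at most $R$. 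By the blockwise decomposition, writing $\rho_b\in\mathcal P(\Sigma)$ for the block-$b$ part of $\tau'$ read in the original alphabet, we get $\U(\tau',\sigma_j)=\sum_b\U(\rho_b,\pi_{j+b})$. Averaging over $j$ gives
\[
R\ \ge\ \frac1k\sum_j\U(\tau',\sigma_j)=\frac1k\sum_b\sum_j\U(\rho_b,\pi_{j+b})=\frac1k\sum_b\cost_\Pi(\rho_b).
\]
So some $b$ has $\cost_\Pi(\rho_b)\le R$. We can find this $b$ in polynomial time by evaluating all $k$ costs, each an LCS of permutations. Applying this to an optimal center gives $\OPT_{\mathrm{med}}\le\OPT_{\mathrm{ctr}}$, hence equality.

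The main point needing care is the blockwise-sorting step, namely that regrouping $\tau$ into blocks does not decrease any LCS. This holds because all the $\sigma_j$ share the same block order. That shared order is exactly why the construction rotates which permutation fills each block, rather than rotating the blocks themselves.
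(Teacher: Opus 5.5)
Your proposal is correct and follows essentially the same route as the paper: the same cyclic block construction, the same regrouping of a center candidate by the common block order (your $\tau'$ is the paper's stable sort $\overline\tau$), the same blockwise LCS decomposition, and the same averaging over $j$ to extract a block of cost at most $R$. The only cosmetic difference is that you obtain $\OPT_{\mathrm{med}}\le\OPT_{\mathrm{ctr}}$ as a corollary of the extraction claim, whereas the paper proves this lower bound first and then repeats the argument for extraction.
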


The reduction is the same as the median-to-center reduction for the Kendall-tau metric~\cite{biedl2009complexity}, and its applicability to the Ulam metric was also observed in~\cite{CGJ21}. These works focus on polynomial-time reducibility for establishing $\mathrm{NP}$-hardness. For completeness, we provide a detailed proof in~\Cref{sec:reduction-med-center} that explicitly establishes the cost-preservation guarantees stated above. Now, using the above lemma, we show~\Cref{thm:center-hardness}.

\begin{proof}[Proof of~\Cref{thm:center-hardness}]
Suppose that, for some \(0<\varepsilon<1/34\), there is a
polynomial-time \((35/34-\varepsilon)\)-approximation algorithm for
Ulam center with four input permutations. Given any four-input Ulam
median instance \(\Pi\), apply~\Cref{lem:median-to-center} to construct
the four-input center instance \(\mathcal C(\Pi)\). Since the
construction preserves the optimum value, we can compute a solution
\(\rho\) for the Ulam median with
\[
\cost_\Pi(\rho)
\leq
\left(\frac{35}{34}-\varepsilon\right)
\OPT_{\mathrm{ctr}}(\mathcal C(\Pi))
=
\left(\frac{35}{34}-\varepsilon\right)
\OPT_{\mathrm{med}}(\Pi),
\]
contradicting~\Cref{thm:four-input-hardness}.
\end{proof}

Next, we use the same reduction to refute the plausibility of attaining
an additive approximation; more specifically, we show the following.

\CenterAdditiveHardness*

\begin{proof}
Assume for contradiction that there exists a polynomial-time
\(\varepsilon d\)-additive approximation algorithm for the Ulam center
problem with some \(0<\varepsilon\leq1/200\). Given an Ulam median
instance \(\Pi\) consisting of four input permutations, each of length
\(\ell\), we first construct an Ulam center instance
\(\mathcal C(\Pi)\) consisting of \(k=4\) permutations, each of length
\(d=4\ell\). We then run the additive approximation algorithm to
obtain a solution \(\tau\). By \Cref{lem:median-to-center}, from
\(\tau\), we can compute in polynomial time a solution \(\rho\) for
the Ulam median problem of the instance \(\Pi\), satisfying
\begin{align*}
\cost_\Pi(\rho)
&\leq
\rad_{\mathcal C(\Pi)}(\tau)\\
&\leq
\OPT_{\mathrm{ctr}}(\mathcal C(\Pi))+\varepsilon d\\
&=
\OPT_{\mathrm{med}}(\Pi)+4\varepsilon\ell\\
&=
\OPT_{\mathrm{med}}(\Pi)+\varepsilon k\ell,
\end{align*}
contradicting~\Cref{thm:median-additive-hardness}.
\end{proof}

\section{Conclusion}\label{sec:conclusion}

We established explicit constant-factor inapproximability for rank aggregation under the Ulam metric. In particular, unless $\mathrm{P}=\mathrm{NP}$, neither Ulam median nor Ulam center admits a PTAS, even when the instance consists of only four input permutations. In establishing these hardness results, we introduced a general reduction framework from Boolean CSPs to Ulam median. Our current instantiation does not optimize the resulting inapproximability factor, and further refinements of the framework may yield stronger bounds.

A natural next step is to narrow the substantial gap between these lower bounds and the best known approximation algorithms. One promising approach is to extend our framework to non-Boolean CSPs, potentially yielding stronger hardness results. Such an extension requires overcoming additional technical obstacles, which we leave as an intriguing open direction. For Ulam center, determining whether one can beat factor $2$ when the number of inputs is part of the input remains another compelling open direction.

\paragraph*{AI Disclosure. }The authors discovered the initial reduction from \textsc{3-SAT}, establishing $\mathrm{NP}$-hardness for four input permutations. They subsequently used ChatGPT 5.6 Sol to develop an initial hardness-of-approximation result. By inspecting the proof, the authors identified a general reduction framework that is presented in this paper. With further assistance from ChatGPT 5.6 Sol, they then constructed the local gadget used to instantiate this framework to derive the claimed constant factor for the hardness. The authors take full responsibility for the results, proofs, and presentation.

\bibliographystyle{plain}
\bibliography{ref}
\newpage
\appendix

\section{Balanced \textsc{Max-3-SAT} Hardness}
\label{app:balanced-3sat-hardness}

We establish two balanced hardness results used in
\Cref{sec:3sat-reduction}. The first justifies the balance assumption
for the exact decision reduction. The second establishes the balanced
gap needed for hardness of approximation.

\begin{lemma}
\label{lem:balanced-3sat-decision}
There is a polynomial-time transformation that maps any \(3\)-CNF
formula \(\varphi_0\), in which every clause contains exactly three
literal occurrences, to a \(3\)-CNF formula \(\varphi\) such that:
\begin{enumerate}
    \item every clause of \(\varphi\) contains exactly three literal
    occurrences;
    \item every variable of \(\varphi\) has equally many positive and
    negative occurrences;
    \item \(\varphi\) is satisfiable if and only if \(\varphi_0\) is
    satisfiable; and
    \item the number of clauses in \(\varphi\) is linear in the number
    of clauses in \(\varphi_0\).
\end{enumerate}
Consequently, \textsc{3-SAT} remains \(\mathrm{NP}\)-hard under the
first two restrictions.
\end{lemma}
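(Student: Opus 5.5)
We balance the formula one variable at a time by adding clauses that are always satisfiable and that supply the missing literal occurrences. Fix a variable \(x\) of \(\varphi_0\) with \(p_x\) positive and \(n_x\) negative occurrences, and suppose, say, \(p_x>n_x\). We need \(p_x-n_x\) extra negative occurrences of \(x\). Each extra occurrence of \(\neg x\) will sit in a clause that is satisfied by every assignment. The clause must also be internally balanced with respect to every variable other than \(x\).

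\textbf{Step 1: the padding gadget.} Introduce two fresh variables \(u,v\) for each extra occurrence. Add the pair of clauses
\[
(\neg x\vee u\vee \neg u),\qquad (v\vee \neg v\vee \neg u).
\]
This pair is unbalanced in \(u\). A cleaner choice is to add the single clause \((\neg x\vee u\vee\neg u)\). It is a tautology, it contributes one negative occurrence of \(x\), and it contributes one positive and one negative occurrence of the fresh variable \(u\), so it has exactly three literal occurrences. Repeat this \(p_x-n_x\) times, with a fresh \(u\) each time. Treat the case \(n_x>p_x\) symmetrically with clauses \((x\vee u\vee\neg u)\). If the convention forbids repeating a variable inside a clause, replace the gadget by a constant-size satisfiable block on fresh variables. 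For example, use the clauses \((\neg x\vee u\vee w)\), \((\neg u\vee\neg w\vee t)\) and \((\neg t\vee \ldots)\), completed by a small balanced block forced satisfiable by setting the fresh variables appropriately. The only properties we need are that the block is balanced on its fresh variables and that it can be satisfied for both values of \(x\).

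\textbf{Step 2: verification.} Properties 1 and 2 hold by construction. The new clauses only touch \(x\) and fresh variables, and balance is achieved exactly for \(x\) and for every fresh variable. Property 3 holds in both directions. Any satisfying assignment of \(\varphi_0\) extends to \(\varphi\), because the added clauses are tautologies, or are satisfiable for every value of \(x\) via the fresh variables. Conversely, restricting a satisfying assignment of \(\varphi\) to the original variables satisfies \(\varphi_0\), since \(\varphi_0\)'s clauses are a subset of \(\varphi\)'s. For Property 4, the total number of added clauses is \(\sum_x|p_x-n_x|\le 3m_0\) times a constant, which is linear in \(m_0\).

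\textbf{Main obstacle.} The only delicate point is the well-formedness convention, namely whether tautological clauses or repeated variables are allowed as ``exactly three literal occurrences.'' The paper's framework counts occurrences as symbols, so I would first use the tautology gadget. If distinct variables are required, the work lies in designing a small fresh-variable block that is balanced and satisfiable for both values of \(x\). For the later gap version (Lemma for the balanced gap), tautologies also shift the satisfied fraction. Those clauses are always satisfied and form a bounded fraction of the formula, so the gap \(7/8\) would only be preserved if their number is controlled. That lemma would need a different (e.g. duplicate-with-negation) construction, but it is not required here.
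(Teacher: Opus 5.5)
Your construction is exactly the paper's proof, and it is correct: for each of the $|p_x-n_x|$ missing occurrences you add one tautological clause $(\neg x\vee u\vee\neg u)$ (resp.\ $(x\vee u\vee\neg u)$) with a fresh $u$, and then bound the number of added clauses by $\sum_x|p_x-n_x|\le 3m_0$. The discarded two-clause pair and the half-specified distinct-variable block are not needed, since the lemma only requires exactly three literal occurrences per clause, not three distinct variables.
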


\begin{proof}
For each variable \(x\), let \(p_x\) and \(n_x\) denote the numbers of
positive and negative occurrences of \(x\) in \(\varphi_0\),
respectively. If \(p_x>n_x\), add \(p_x-n_x\) clauses of the form
\[
(\neg x\vee y\vee\neg y),
\]
using a fresh variable \(y\) in each added clause. If \(n_x>p_x\),
add \(n_x-p_x\) clauses of the form
\[
(x\vee y\vee\neg y),
\]
again using a fresh variable \(y\) in each clause.

Clearly, the new formula \(\varphi\) is balanced. Furthermore, each added clause is a tautology. Thus, \(\varphi\) is
satisfiable if and only if \(\varphi_0\) is satisfiable.

If \(\varphi_0\) has \(m_0\) clauses, then
\[
\sum_x |p_x-n_x|
\leq
\sum_x(p_x+n_x)
=
3m_0.
\]
Thus, at most \(3m_0\) clauses are added, and the resulting formula has
at most \(4m_0\) clauses.
\end{proof}
\begin{lemma}
\label{lem:balanced-3sat-gap}
For every fixed \(0<\eta<1/16\), it is $\mathrm{NP}$-hard to distinguish
\(3\)-CNF formulas \(\varphi\) with \(m\) clauses for which
\(
\OPT_{\mathrm{Max3SAT}}(\varphi)\geq(1-\eta)m
\)
from those for which
\(
\OPT_{\mathrm{Max3SAT}}(\varphi)
\leq\left(\frac78+\eta\right)m,
\)
even when every clause contains exactly three distinct variables
and every variable has equally many positive and negative
occurrences.
\end{lemma}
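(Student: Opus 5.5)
We plan to derive this lemma directly from H\aa stad's gap for \textsc{Max-E3-Lin-2} (\Cref{thm:hastad-e3lin-gap}) through the standard encoding of each parity equation by four $3$-clauses. This route avoids padding with tautological clauses, which would break the ``three distinct variables'' requirement and would dilute the gap. The encoding is already balanced equation by equation, so no extra balancing step is needed.

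\emph{The encoding.} Given an instance with $m'$ equations, we replace each equation $x\oplus y\oplus z=1$ by the four clauses
\[
(x\vee y\vee z),\quad (x\vee\neg y\vee\neg z),\quad(\neg x\vee y\vee\neg z),\quad(\neg x\vee\neg y\vee z).
\]
These are exactly the clauses falsified by the four assignments of even parity. We replace each equation $x\oplus y\oplus z=0$ by
\[
(\neg x\vee\neg y\vee\neg z),\quad(\neg x\vee y\vee z),\quad(x\vee\neg y\vee z),\quad(x\vee y\vee\neg z).
\]
Since $x,y,z$ are distinct, every clause contains three distinct variables. The resulting formula $\varphi$ has $m=4m'$ clauses.

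\emph{Balance and clause counts.} The key local facts are the following, and they can be checked by inspection.
\begin{itemize}
\item In each group of four clauses, each of $x,y,z$ occurs twice positively and twice negatively. Summing over the groups gives global balance.
\item Each assignment falsifies at most one clause of a group, namely the unique clause whose literals are all false under it.
\item Such a clause exists exactly when the assignment violates the equation.
\end{itemize}
Consequently, an assignment satisfies all four clauses of a group if it satisfies the equation, and exactly three otherwise. Hence $\OPT_{\mathrm{Max3SAT}}(\varphi)=3m'+\OPT_{\mathrm{CSP}}$.

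\emph{Transferring the gap.} Fix $0<\eta<1/16$ and apply \Cref{thm:hastad-e3lin-gap} with the same $\eta$, which is allowed since $\eta<1/4$.
\begin{itemize}
\item \emph{Completeness.} Here $\OPT_{\mathrm{Max3SAT}}(\varphi)\geq 3m'+(1-\eta)m'=(1-\eta/4)m\geq(1-\eta)m$.
\item \emph{Soundness.} Here $\OPT_{\mathrm{Max3SAT}}(\varphi)\leq 3m'+(1/2+\eta)m'=(7/8+\eta/4)m\leq(7/8+\eta)m$.
\end{itemize}
The transformation runs in polynomial time, which gives the claimed hardness.

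The only point that needs care is the local verification that each four-clause group is simultaneously balanced and has satisfied-count exactly $4$ or $3$ according to the parity. This is a finite check over the eight assignments. I do not expect a genuine obstacle beyond it.
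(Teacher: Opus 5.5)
Your proof is correct and follows the paper's argument essentially verbatim: the same four-clause encoding of each equation (one clause falsified exactly by each violating assignment), the same per-group balance, the same count $3m'+s(f)$ of satisfied clauses, and the same transfer of H{\aa}stad's gap. The only difference is cosmetic. The paper invokes \Cref{thm:hastad-e3lin-gap} with $\delta=4\eta$, which lands exactly on the thresholds $(1-\eta)m$ and $(7/8+\eta)m$ and is where the restriction $\eta<1/16$ comes from, whereas you use $\eta$ itself and absorb the slack via $\eta/4\le\eta$, which is equally valid.
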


\begin{proof}
Let $m_0$ be the number of linear equations in the \textsc{Max-E3-Lin-2} instance. We apply \Cref{thm:hastad-e3lin-gap} with
\(\delta:=4\eta\). For each equation
\[
x\oplus y\oplus z=b,
\]
there are exactly four assignments to \((x,y,z)\) that violate the
equation. For each such assignment, introduce the clause on
\(x,y,z\) that is false precisely on that assignment. Thus, an
assignment satisfying the equation satisfies all four clauses,
whereas an assignment violating the equation satisfies exactly
three. Moreover, among the four violating assignments, each
variable takes each value exactly twice. Hence each variable occurs
twice positively and twice negatively among the four clauses.

Let \(\varphi\) be the resulting formula, with \(m=4m_0\) clauses.
Every clause contains three distinct variables, and every variable
has equally many positive and negative occurrences. If an assignment
\(f\) satisfies \(s(f)\) equations in the original system, then
\[
\Val_\varphi(f)=3m_0+s(f).
\]
In the completeness case, some assignment therefore satisfies at
least
\[
3m_0+(1-\delta)m_0
=\left(1-\frac\delta4\right)m
=(1-\eta)m
\]
clauses. In the soundness case, every assignment satisfies at most
\[
3m_0+\left(\frac12+\delta\right)m_0
=\left(\frac78+\frac\delta4\right)m
=\left(\frac78+\eta\right)m
\]
clauses. The construction is polynomial-time and proves the lemma.
\end{proof}

\section{Approximation for a Fixed Number of Inputs}
\label{sec:approx-constant}
Let $\Pi=(\pi_1,\ldots,\pi_k)$ be an instance of Ulam median, where each $\pi_i$ is a permutation of an alphabet $\Sigma$ of size $d$. We assume that $k$ is fixed. In this section, we give an
\[
\alpha_k=1+\frac{\lfloor k/2\rfloor-1}{k}
\]
approximation algorithm. In particular, for $k=4$, this gives a $5/4$-approximation. 

For brevity, let $\OPT=\OPT_{\mathrm{med}}(\Pi)$.

\begin{theorem}
\label{thm:main-algo-constant-m}
For every fixed $k\ge2$, there is a deterministic algorithm running in $O(2^{k+1}d^{k+1})$ time, returning a permutation $ \sigma \in \mathcal{P}(\Sigma) $ such that
\begin{equation}\label{eq:main-factor}
  \cost_{\Pi}(\sigma)\le\alpha_k\OPT
\end{equation}
\end{theorem}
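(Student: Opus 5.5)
Our plan is to compute an optimal \emph{median string} under insertion--deletion distance by the $k$-dimensional dynamic program, and then repair it into a permutation. Concretely, we compute a string $s$ (possibly with repeated symbols, possibly missing symbols) maximizing $\sum_i \LCS(s,\pi_i)$ with $s$ a common-supersequence-style alignment. Equivalently, we run the DP over $(j_1,\ldots,j_k)\in[0,d]^k$. At each cell we choose a nonempty subset $I\subseteq[k]$ of the inputs whose current symbols coincide, emit that symbol, and gain $|I|$ in LCS (one unit per matched input). This gives $O(2^k d^k)$ cells-times-transitions; checking symbol equality accounts for the extra factor $2d$. Since any permutation $\sigma$ is a feasible string, the DP value $V$ satisfies $V\ge \sum_i\LCS(\sigma^*,\pi_i)=kd-\OPT$.

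The output $s$ may repeat symbols. The key observation is that each emitted symbol with $|I|=1$ contributes at most $1$, which is what the trivial solution obtains anyway, so we drop all emissions with $|I|\le \lfloor k/2\rfloor$ except as needed. We then argue about repeated symbols. If a symbol $a$ is emitted twice in $s$, with input sets $I_1$ and $I_2$, then $I_1\cap I_2=\emptyset$, since each $\pi_i$ contains $a$ only once. Hence at most one occurrence of $a$ can have $|I|>k/2$. Our first attempt is the following rule: keep for each symbol its occurrence with largest $|I|$, delete the others, and append the missing symbols anywhere. The result is a permutation $\sigma$ whose total LCS is at least $V-(\text{loss})$, where the loss from deleting an occurrence of $a$ is its $|I|$.

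The main obstacle is bounding this loss by $\frac{\lfloor k/2\rfloor-1}{k}\OPT$. We intend a charging argument per symbol. For symbol $a$, let $g_a$ be the total gain of $a$ in $s$ (so $\sum_a g_a=V$, $g_a\le k$), and let $h_a$ be the kept gain. Then $\cost(\sigma)\le \sum_a (k-h_a)$, while $\OPT\le \sum_a(k-g_a)$. We need $k-h_a\le \alpha_k(k-g_a)$ termwise; this fails when $g_a=k$ but $h_a<k$. So we must first preprocess $s$ so that each symbol's emissions are split into at most two parts and are well balanced. We will try a refined accounting in which $\OPT$ is compared not symbolwise but through $V$: writing $g_a=\sum_t |I_t|$ with disjoint $I_t$, we have $k-h_a\le k-\max_t|I_t|$. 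We aim to show $\sum_a(k-h_a)\le \alpha_k\bigl(kd-V\bigr)$ fails only when we may alternatively use the best input $\pi_i$ itself. The cost of that alternative, $\min_i\cost(\pi_i)$, is at most $2\OPT(1-1/k)$ by averaging and the triangle inequality.

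Taking the better of the two candidates, we would balance the two bounds via a convex combination. This is where the constant $1+(\lfloor k/2\rfloor-1)/k$ should emerge: emissions with $|I|\le\lfloor k/2\rfloor$ lose at most $(\lfloor k/2\rfloor-1)$ relative to singleton baseline per unit of $k$. Working out this mixing inequality carefully is the step we expect to require the most case analysis.
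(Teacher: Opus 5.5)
There is a genuine gap, and it starts with the relaxation. Your dynamic program maximizes $\sum_i\LCS(s,\pi_i)$ over strings $s$ without constraining or charging for the length $|s|$. Emitting every symbol of every input as a singleton already achieves $V=kd$. So $kd-V=0$, and the lower bound you extract is vacuous; it should read $\OPT\ge kd-V=\sum_a(k-g_a)$, since you wrote the inequality in the wrong direction. This is exactly why your per-symbol charging breaks when $g_a=k$: no later case analysis can recover a ratio from a zero lower bound.

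The paper's relaxation is the insertion--deletion median over strings of length \emph{exactly} $d$, namely $\widehat\sigma=\argmin_{x\in\Sigma^d}\sum_i\ED(x,\pi_i)$. Enforcing this requires the DP to track the output length, which naturally accounts for the extra factor of $d$ in the running time. The length constraint is the missing idea. It makes the number $M$ of surplus duplicate occurrences equal to the number of missing symbols. Since $|\widehat\sigma|=d$ gives $\ED(\widehat\sigma,\pi_i)=2(d-\LCS(\widehat\sigma,\pi_i))$, and every missing symbol is unmatched in every input, one gets $2kM\le E:=\sum_i\ED(\widehat\sigma,\pi_i)\le 2\OPT$.

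Your disjointness observation is the right second ingredient, but it must be used globally rather than per symbol. For each repeated symbol, keep the occurrence matched by the most alignments. Then each deleted occurrence is matched in at most $\lfloor k/2\rfloor$ alignments, so the deletions destroy at most $D\le\lfloor k/2\rfloor M$ matched pairs.

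The repair step is also too weak as stated. Appending the missing symbols anywhere only preserves the retained matches. That gives $\sum_i\ED(\sigma,\pi_i)\le E+2D$, hence only the factor $1+\lfloor k/2\rfloor/k$, which is $3/2$ for $k=4$. The $-1$ in $\alpha_k$ comes from inserting the $M$ missing symbols in the order of one reference input $\pi_1$, interleaved consistently with the retained part of the alignment to $\pi_1$. Then $\LCS(\sigma,\pi_1)$ increases by $M$, while the other inputs keep their retained matches. This yields $\sum_i\ED(\sigma,\pi_i)\le E+2D-2M\le E+2(\lfloor k/2\rfloor-1)M\le E\bigl(1+(\lfloor k/2\rfloor-1)/k\bigr)$. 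Halving and using $E\le 2\OPT$ gives $\alpha_k\OPT$.

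Your fallback of taking the better of the repaired string and the best input cannot substitute for this. The best input only guarantees $(2-2/k)\OPT$, which exceeds $\alpha_k\OPT$ for every $k\ge 3$. Without a nontrivial guarantee for the other candidate, there is nothing to combine it with.
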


For $ k \leq 2 $, returning any permutation from the input gives an optimal solution. Chakraborty, Das, and Krauthgamer~\cite{chakraborty2021approximating} gave an exact algorithm for $k=3$ and a $3/2$-approximation for every fixed $k\geq 4$, based on a dynamic program for an edit-median relaxation. The algorithm in~\Cref{thm:main-algo-constant-m} also uses the same relaxation and dynamic program but refines the rounding step and its analysis. It computes an optimal length-$d$ string over $\Sigma$, which may contain repeated symbols and omit others. It then rounds this relaxed solution to a permutation by removing duplicate occurrences and inserting the missing symbols, while controlling the resulting increase in cost. 

We describe and analyze the algorithm using insertion--deletion edit distance. For strings $u$ and $v$, define
\[
\ED(u,v)=|u|+|v|-2\LCS(u,v).
\]
This definition applies to arbitrary strings, not necessarily permutations. When $u$ and $v$ are permutations of $\Sigma$, both have length $d$, and hence
\[
\ED(u,v)=2(d-\LCS(u,v))=2\U(u,v).
\]

\paragraph{Algorithm.} The algorithm in~\Cref{thm:main-algo-constant-m} consists of the following steps.

\begin{enumerate}[label=Step \arabic*., ref = Step \arabic*,leftmargin=3.5em]
\item \label{enu:median.string}Compute a length-$ d $ string $ \widehat\sigma = \argmin_{x \in \Sigma^{d}} \sum_{i=1}^{k} \ED(x, \pi_{i}) $ using the dynamic programming algorithm in~\cite{chakraborty2021approximating}. For each $ \pi_{i} \in \Pi $, fix an optimal alignment $ \mathcal{A}_{i}$ between $ \widehat \sigma $ and $ \pi_{i} $.

\item \label{enu:remove.repeated.symbols}For each repeated symbol in $ \widehat \sigma $, keep an occurrence used by the largest number of the alignments $ \mathcal{A}_{i} $; delete its other occurrences. Let $ \sigma' $ be the resulting string.
\item \label{enu:add.missing.symbols} Fix one reference input, say $\pi_1$. Obtain a permutation $\sigma$ by inserting all missing symbols into $ \sigma' $ following the order of $ \pi_{1} $, restricted to the symbols in the retained alignment $ \mathcal{A}_{1} $ and the missing symbols.

Return $\sigma$.
\end{enumerate}

\paragraph{Running time.}\ref{enu:median.string} can be implemented in $ O(2^{k+1}d^{k+1}) $ time (see~\cite[Section 5]{chakraborty2021approximating}). ~\ref{enu:remove.repeated.symbols} and~\ref{enu:add.missing.symbols} run in $ O(kd) $ and $ O(d) $ time respectively.  Hence, the total running time is $ O(2^{k+1}d^{k+1}) $.

\paragraph{Approximation Ratio.} Recall that in~\ref{enu:median.string}, we obtain a string $ \widehat\sigma = \argmin_{x \in [\Sigma]^{d}} \sum_{i=1}^{k} \ED(x, \pi_{i}) $. Let \[ E = \sum_{i=1}^{k} \ED(\widehat\sigma, \pi_{i}).\]
Occurrences of repeated symbols are removed in~\ref{enu:remove.repeated.symbols} and missing symbols are inserted in~\ref{enu:add.missing.symbols}. Let
\[
M=d-\lvert \supp (\widehat{\sigma})\rvert.
\]
Thus, $M$ is the number of symbols missing from $\widehat{\sigma}$. Since
$\widehat{\sigma}$ has length $d$, $M$ is also the number of duplicate
occurrences removed in ~\ref{enu:remove.repeated.symbols}.

\begin{lemma}\label{lem:bound.cost.median.string.missing.symbols}
    $ E \leq 2\OPT $ and $ M\leq E/2k $.
\end{lemma}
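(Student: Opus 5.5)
The first inequality follows from comparing $\widehat\sigma$ with an optimal permutation median. The second follows from counting, for each input, how many edit operations the missing symbols force.

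\emph{First inequality.} Let $\sigma^\ast$ be an optimal Ulam median. Since $\sigma^\ast$ is a permutation of $\Sigma$, it is in particular a string in $\Sigma^d$, so it is feasible for the minimization in \ref{enu:median.string}. By the optimality of $\widehat\sigma$ and the identity $\ED(u,v)=2\U(u,v)$ for permutations,
\[
E=\sum_{i=1}^k\ED(\widehat\sigma,\pi_i)\le\sum_{i=1}^k\ED(\sigma^\ast,\pi_i)=2\sum_{i=1}^k\U(\sigma^\ast,\pi_i)=2\OPT.
\]

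\emph{Second inequality.} Fix an input $\pi_i$. A common subsequence of $\widehat\sigma$ and $\pi_i$ is a subsequence of the permutation $\pi_i$, so it contains each symbol at most once. It can only use symbols that actually occur in $\widehat\sigma$, so
\[
\LCS(\widehat\sigma,\pi_i)\le|\supp(\widehat\sigma)|=d-M.
\]
Since $|\widehat\sigma|=|\pi_i|=d$, this gives
\[
\ED(\widehat\sigma,\pi_i)=2d-2\LCS(\widehat\sigma,\pi_i)\ge 2M.
\]
Summing over the $k$ inputs yields $E\ge 2kM$, that is, $M\le E/(2k)$.

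I expect no genuine obstacle here. The only point requiring care is the observation that a common subsequence with a permutation uses each symbol at most once, so repeated symbols in $\widehat\sigma$ cannot increase the LCS. Hence each of the $M$ missing symbols costs one deletion and one insertion against every input.
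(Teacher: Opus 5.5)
Your proof is correct and follows the paper's argument. An optimal permutation median is feasible for the relaxation, which gives $E\le 2\OPT$, and the bound $\LCS(\widehat\sigma,\pi_i)\le d-M$ summed over the $k$ inputs gives $E\ge 2kM$, with your remark that a common subsequence with a permutation uses each symbol at most once supplying the justification the paper leaves implicit.
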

\begin{proof}
    By definition of $ \widehat\sigma $, $ E \leq \sum_{i=1}^{k} \ED(x, \pi_{i}) $, for any permutation $x\in \mathcal{P}(\Sigma) $. 
    Choose $ x $ to be the permutation attaining $ \OPT_{\mathrm{med}}(\Pi) $, and note that $ \ED(x, \pi_{i}) = 2 \U(x, \pi_{i}) $, it follows that $ E\leq 2\OPT $.

    To bound $ M $, observe that $ M $ is the number of symbols in $ \Sigma $ that do not appear in $ \widehat\sigma $. 
    Therefore, the number of unmatched positions between $ \widehat\sigma $ and each $ \pi_{i} $ must be at least $ M $, namely, $ d - \LCS(\widehat\sigma, \pi_{i}) \geq M $. 
    Thus,
    \begin{align}
        E = \sum_{i=1}^{k} \ED(\widehat\sigma, \pi_{i}) = \sum_{i=1}^{k} 2\left(d-\LCS(\widehat\sigma,\pi_{i}) \right)\geq 2kM, \nonumber
    \end{align}
    and hence, $ M \leq E/2k $.
\end{proof}

\newcommand{\kpar}{\idc{2 \nmid k}}
Let $ \kpar $ be the indicator of the parity of $ k $, namely, $ \kpar = 1 $ if $ k $ is odd, and $ 0 $ otherwise. 
The following lemma shows that~\ref{enu:remove.repeated.symbols} reduces the cost by at least $ \kpar M $.
\begin{lemma}\label{lem:remove.repeated.symbols}
    Let $ \sigma' $ be the string obtained from~\ref{enu:remove.repeated.symbols}, then $ \sum_{i=1}^{k} \ED(\sigma', \pi_{i}) \leq E - \kpar M $.
\end{lemma}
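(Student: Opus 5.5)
The plan is to analyze the deletions one repeated symbol at a time, tracking for each input $\pi_i$ how the edit distance changes when a single occurrence is removed from the current string. Fix an alignment $\mathcal{A}_i$ between $\widehat\sigma$ and $\pi_i$ for every $i$. Since $\pi_i$ is a permutation, each symbol $a$ is matched in $\mathcal{A}_i$ to at most one occurrence of $a$ in $\widehat\sigma$.

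The basic local observation is as follows. Delete an occurrence $o$ of $a$ from a string $u$ that is aligned with $\pi_i$ via a restriction of $\mathcal{A}_i$, and keep the restricted alignment.
\begin{itemize}
\item If $o$ was unmatched in that alignment, the length of $u$ drops by one while the matching is unchanged. So $\ED$ decreases by exactly $1$ (as an upper bound on the new distance, the decrease is at least $1$).
\item If $o$ was matched, we lose one match and one length. So $\ED=|u|+|\pi_i|-2\cdot\#\text{matches}$ increases by $2-1=1$.
\end{itemize}
Hence each deleted occurrence changes $\sum_i \ED$ by at most $(\#\text{alignments matching } o)-(\#\text{alignments not matching } o)$. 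The alignments stay valid after deletion, since restricting an alignment to a subsequence is still an alignment, so these bounds compose.

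Next I handle a single repeated symbol $a$ with occurrences $o_1,\dots,o_r$, $r\ge2$. Let $c_j$ be the number of alignments matching $o_j$. Because each $\pi_i$ matches at most one occurrence, $\sum_j c_j\le k$. We keep the occurrence $o^*$ with the largest $c_j$ and delete the other $r-1$ occurrences. The resulting change is at most
\[
\sum_{j\ne *}(2c_j-k)=2\Big(\sum_{j\ne *}c_j\Big)-(r-1)k .
\]
Since $c_*\ge c_j$ for all $j$, we get $\sum_{j\ne *}c_j\le k-c_*$. We also get $\sum_{j\ne *}c_j\le (r-1)c_*$. The step I expect to be the main obstacle is getting the per-deletion bound $-\kpar$ out of these two inequalities.

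For the first deleted occurrence, write $s=\sum_{j\ne*}c_j$. Then $s\le\min(k-c_*,(r-1)c_*)$, so $s\le (r-1)k/2$ up to rounding. For $r=2$ this gives $s\le\min(k-c_*,c_*)\le\lfloor k/2\rfloor$. The change is therefore at most $2\lfloor k/2\rfloor-k=-\kpar$.

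For general $r$, I use $2s-(r-1)k\le 2(k-c_*)-(r-1)k$ when $c_*$ is large, and $2(r-1)c_*-(r-1)k$ when $c_*$ is small. Equivalently, pair each deleted occurrence $o_j$ with $o_*$: since $c_j+c_*\le k$ and $c_j\le c_*$, we have $c_j\le\lfloor k/2\rfloor$. Each deletion then contributes $2c_j-k\le -\kpar$.

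Summing over all symbols, the total number of deleted occurrences is exactly $M$, because $|\widehat\sigma|=d$ and $\sigma'$ has $d-M$ distinct symbols. This gives $\sum_i\ED(\sigma',\pi_i)\le E-\kpar M$.
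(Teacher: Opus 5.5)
Your proof is correct and is essentially the paper's argument. Summing your per-deletion change $2c_j-k$ over the $M$ deleted occurrences gives exactly the paper's bound $E-kM+2D$ with $D=\sum_i r_i$. Your pairing step ($c_j+c_*\le k$ and $c_j\le c_*$, hence $c_j\le\lfloor k/2\rfloor$) is precisely how the paper shows $D\le\lfloor k/2\rfloor M$, so the earlier detour through $s\le (r-1)k/2$ ``up to rounding'' can simply be dropped.
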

    


\begin{proof}
For each $i\in[k]$, let $\ell_i$ be the number of matched pairs in the fixed optimal alignment $\mathcal A_i$, so that $\ell_i=\LCS(\widehat\sigma,\pi_i)$. Let $r_i$ be the number of these matched pairs whose occurrence in $\widehat\sigma$ is deleted in~\ref{enu:remove.repeated.symbols}, and set $D:=\sum_{i=1}^k r_i$.

Consider any deleted occurrence, and let $t$ be the number of fixed alignments in which it is matched. The retained occurrence of the same symbol is matched in at least $t$ alignments. Since each input permutation contains the symbol exactly once, no alignment can match both occurrences. Hence $2t\leq k$, and summing over the $M$ deleted occurrences gives
\[
D\leq \lfloor k/2\rfloor M.
\]
Restricting $\mathcal A_i$ to the undeleted occurrences gives a common subsequence of $\sigma'$ and $\pi_i$ of length $\ell_i-r_i$. These retained alignments need not be optimal, but they provide the bounds
\[
\LCS(\sigma',\pi_i)\geq \ell_i-r_i.
\]
Since $|\sigma'|=d-M$ and $E=\sum_{i=1}^k(2d-2\ell_i)$, it follows that
\begin{align*}
\sum_{i=1}^k\ED(\sigma',\pi_i)
&\leq \sum_{i=1}^k\bigl(2d-M-2(\ell_i-r_i)\bigr)\\
&=E-kM+2D\\
&\leq E-\bigl(k-2\lfloor k/2\rfloor\bigr)M\\
&=E-\kpar M.
\end{align*}
\end{proof}

\begin{lemma}\label{lem:add.missing.symbols}
    Let $ \sigma $ be the permutation obtained from~\ref{enu:add.missing.symbols}, then $ \sum_{i=1}^{k} \ED(\sigma, \pi_{i}) \leq E + (k-2-\kpar )M $.
\end{lemma}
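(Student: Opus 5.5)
We compare $\sigma$ with $\sigma'$ input by input. Since $|\sigma'|=d-M$ and $|\sigma|=d$, we have $\ED(\sigma,\pi_i)=2d-2\LCS(\sigma,\pi_i)$ and $\ED(\sigma',\pi_i)=2d-M-2\LCS(\sigma',\pi_i)$. It therefore suffices to show two things. First, inserting the $M$ missing symbols does not decrease any LCS: $\LCS(\sigma,\pi_i)\geq\LCS(\sigma',\pi_i)$ for every $i$. Second, for the reference input, $\LCS(\sigma,\pi_1)\geq\LCS(\sigma',\pi_1)+M$. Given these, summing and applying \Cref{lem:remove.repeated.symbols} yields
\[
\sum_{i}\ED(\sigma,\pi_i)\leq \sum_i \ED(\sigma',\pi_i)+kM-2M\leq E-\kpar M+(k-2)M.
\]
This is the desired bound $E+(k-2-\kpar)M$, and it would in fact be stronger by $\kpar M$. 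So a weaker per-input accounting is allowed, and I would not worry about sharpness.

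The first claim is immediate. $\sigma'$ is a subsequence of $\sigma$, because Step~3 only inserts symbols. Hence any common subsequence of $\sigma'$ and $\pi_i$ is also one of $\sigma$ and $\pi_i$.

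For the second claim, let $C$ be the common subsequence of $\sigma'$ and $\pi_1$ given by the retained part of the alignment $\mathcal A_1$ after Step~2. Step~3 inserts each missing symbol into $\sigma'$ at a position consistent with the order of $\pi_1$ restricted to $\supp(C)\cup\{\text{missing symbols}\}$. Then $\pi_1$ restricted to $\supp(C)$ together with the missing symbols is a subsequence of $\sigma$, so it is a common subsequence of $\sigma$ and $\pi_1$. Its length is $|C|+M$.

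The main obstacle is that $|C|$ may be less than $\LCS(\sigma',\pi_1)$, since the retained alignment need not be optimal. To handle this, I would not route through $\LCS(\sigma',\pi_1)$ but through the bookkeeping of \Cref{lem:remove.repeated.symbols}. In its proof, the bound for $\sum_i\ED(\sigma',\pi_i)$ uses exactly the retained alignments, of lengths $\ell_i-r_i$. So I restate the argument with the quantities $\ell_i-r_i$ in place of the LCS values. For $i\neq1$, $\LCS(\sigma,\pi_i)\geq\ell_i-r_i$, because the retained alignment survives the insertions. For $i=1$, $\LCS(\sigma,\pi_1)\geq \ell_1-r_1+M$ by the construction above.

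Summing, and using $|\sigma|=d$ and $D\leq\lfloor k/2\rfloor M$,
\[
\sum_i \ED(\sigma,\pi_i)\leq \sum_i\bigl(2d-2(\ell_i-r_i)\bigr)-2M = E+2D-2M\leq E+(2\lfloor k/2\rfloor-2)M.
\]
Since $2\lfloor k/2\rfloor=k-\kpar$, the right-hand side equals $E+(k-2-\kpar)M$, which is exactly the claim. The step needing the most care is verifying that the insertion order in Step~3 makes $C$ together with all missing symbols a genuine subsequence of $\sigma$. This holds because positions are chosen relative to the retained matched symbols of $\mathcal A_1$, which appear in $\sigma'$ in the same order as in $\pi_1$.
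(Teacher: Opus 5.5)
Your final argument is correct and is essentially the paper's proof. You lower-bound $\LCS(\sigma,\pi_i)$ by the retained-alignment lengths $\ell_i-r_i$, adding $M$ for $i=1$ because the missing symbols are inserted in $\pi_1$-order into the gaps of the retained part of $\mathcal A_1$; you then sum to get $E+2D-2M$ and finish with $D\le\lfloor k/2\rfloor M$.

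One correction to your opening paragraph: the route through \Cref{lem:remove.repeated.symbols} would give exactly $E+(k-2-\kpar)M$, not a bound stronger by $\kpar M$, so there is no slack. Your final bookkeeping does not need any slack, and you were right not to rely on the unjustified claim $\LCS(\sigma,\pi_1)\ge\LCS(\sigma',\pi_1)+M$.
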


    

\begin{proof}
Use the notation $\ell_i,r_i,D$ from the proof of~\Cref{lem:remove.repeated.symbols}. The retained part of $\mathcal A_1$ gives a common subsequence of $\sigma'$ and $\pi_1$ of length $\ell_1-r_1$. \ref{enu:add.missing.symbols} inserts all $M$ missing symbols in their $\pi_1$ order into the gaps determined by this subsequence. Thus, the retained matches together with the inserted symbols form a common subsequence of $\sigma$ and $\pi_1$, giving
\[
\LCS(\sigma,\pi_1)\geq \ell_1-r_1+M.
\]
For every other input, insertion preserves all retained matches, so
\[
\LCS(\sigma,\pi_i)\geq \ell_i-r_i
\qquad\text{for }i\in\{2,\ldots,k\}.
\]
Both $\sigma$ and each $\pi_i$ have length $d$. Therefore, using $D\leq\lfloor k/2\rfloor M$, we obtain
\begin{align*}
\sum_{i=1}^k\ED(\sigma,\pi_i)
&=2kd-2\sum_{i=1}^k\LCS(\sigma,\pi_i)\\
&\leq 2kd-2\left(\sum_{i=1}^k(\ell_i-r_i)+M\right)\\
&=E+2D-2M\\
&\leq E+2\bigl(\lfloor k/2\rfloor-1\bigr)M\\
&=E+(k-2-\kpar)M.
\end{align*}
\end{proof}

\begin{proof}[Proof of~\Cref{thm:main-algo-constant-m}]
    It suffices to analyze the approximation factor of the permutation $ \sigma $. 
    Applying~\Cref{lem:add.missing.symbols} and~\Cref{lem:bound.cost.median.string.missing.symbols}, we have
    \begin{align}
        \cost_{\Pi}(\sigma) &= \dfrac{1}{2}\sum_{i=1}^{k} \ED(\sigma, \pi_{i}) \nonumber \\
                            &\leq \dfrac{1}{2}\left(E + (k-2-\kpar )M\right) \nonumber \\
                            &\leq \dfrac{1}{2}E\left(1 + \dfrac{\lfloor k/2 \rfloor - 1}{k}\right) \nonumber \\
                            &\leq \OPT \alpha_{k}. \nonumber
    \end{align}
\end{proof}

\section{Reduction from the Ulam Median to the Ulam Center}
\label{sec:reduction-med-center}
\begin{proof}[Proof of~\Cref{lem:median-to-center}]
For each \(i\in\mathbb Z_k\), let \(\Sigma^{(i)}\) be a disjoint copy
of \(\Sigma\). For a permutation \(\pi\in\mathcal P(\Sigma)\), let
\(\pi^{(i)}\) denote its copy on \(\Sigma^{(i)}\). The center instance
is defined over the alphabet
\(\bigsqcup_{i\in\mathbb Z_k}\Sigma^{(i)}\) as follows. For each
\(j\in\mathbb Z_k\), define
\[
\sigma_j
:=
\pi_j^{(0)}
\pi_{j+_k 1}^{(1)}
\cdots
\pi_{j+_k(k-1)}^{(k-1)},
\]
where the notation \(+_k\) denotes the sum modulo \(k\). Thus, every
\(\sigma_j\) contains one copy of every input permutation, and these
copies are cyclically shifted across the \(k\) blocks. The construction
is illustrated in~\Cref{fig:median-to-center-overview}.

\begin{figure}[t]
\centering
\begin{tikzpicture}[
  font=\small,
  ctrbase/.style={draw,minimum width=1.25cm,minimum height=.50cm,
                  inner sep=1pt,font=\scriptsize},
  ctrone/.style={ctrbase,fill=blue!13},
  ctrtwo/.style={ctrbase,fill=orange!20},
  ctrthree/.style={ctrbase,fill=green!16},
  ctrfour/.style={ctrbase,fill=red!13},
  ctrcandidate/.style={ctrbase,fill=black!7},
  ctrnote/.style={draw,rounded corners,fill=black!3,align=left,
                  text width=5.2cm,inner sep=5pt,font=\small}
]
  \node[font=\scriptsize] at (0,.55) {\(\Sigma^{(0)}\)};
  \node[font=\scriptsize] at (1.35,.55) {\(\Sigma^{(1)}\)};
  \node[font=\scriptsize] at (2.70,.55) {\(\Sigma^{(2)}\)};
  \node[font=\scriptsize] at (4.05,.55) {\(\Sigma^{(3)}\)};

  \node[anchor=east,font=\scriptsize] at (-.75,0) {\(\sigma_0\)};
  \node[ctrone] at (0,0) {\(\pi_0^{(0)}\)};
  \node[ctrtwo] at (1.35,0) {\(\pi_1^{(1)}\)};
  \node[ctrthree] at (2.70,0) {\(\pi_2^{(2)}\)};
  \node[ctrfour] at (4.05,0) {\(\pi_3^{(3)}\)};

  \node[anchor=east,font=\scriptsize] at (-.75,-.62) {\(\sigma_1\)};
  \node[ctrtwo] at (0,-.62) {\(\pi_1^{(0)}\)};
  \node[ctrthree] at (1.35,-.62) {\(\pi_2^{(1)}\)};
  \node[ctrfour] at (2.70,-.62) {\(\pi_3^{(2)}\)};
  \node[ctrone] at (4.05,-.62) {\(\pi_0^{(3)}\)};

  \node[anchor=east,font=\scriptsize] at (-.75,-1.24) {\(\sigma_2\)};
  \node[ctrthree] at (0,-1.24) {\(\pi_2^{(0)}\)};
  \node[ctrfour] at (1.35,-1.24) {\(\pi_3^{(1)}\)};
  \node[ctrone] at (2.70,-1.24) {\(\pi_0^{(2)}\)};
  \node[ctrtwo] at (4.05,-1.24) {\(\pi_1^{(3)}\)};

  \node[anchor=east,font=\scriptsize] at (-.75,-1.86) {\(\sigma_3\)};
  \node[ctrfour] at (0,-1.86) {\(\pi_3^{(0)}\)};
  \node[ctrone] at (1.35,-1.86) {\(\pi_0^{(1)}\)};
  \node[ctrtwo] at (2.70,-1.86) {\(\pi_1^{(2)}\)};
  \node[ctrthree] at (4.05,-1.86) {\(\pi_2^{(3)}\)};

  \draw[dashed,black!55] (-.65,-2.35)--(4.70,-2.35);
  \node[anchor=east,font=\scriptsize] at (-.75,-2.82) {\(\bar\tau\)};
  \node[ctrcandidate] at (0,-2.82) {\(\rho_0^{(0)}\)};
  \node[ctrcandidate] at (1.35,-2.82) {\(\rho_1^{(1)}\)};
  \node[ctrcandidate] at (2.70,-2.82) {\(\rho_2^{(2)}\)};
  \node[ctrcandidate] at (4.05,-2.82) {\(\rho_3^{(3)}\)};
\end{tikzpicture}
\caption{The median-to-center reduction for \(k=4\). Cyclic shifts
place every original input once in every row and column, preserving the
optimum.}
\label{fig:median-to-center-overview}
\end{figure}

We first show that a center candidate can be assumed to respect the
common block order. Let \(\tau\) be an arbitrary permutation of
\(\bigsqcup_{i\in\mathbb Z_k}\Sigma^{(i)}\), and let
\(\overline\tau\) be obtained by stably sorting the symbols of
\(\tau\) according to the block order
\(\Sigma^{(0)},\Sigma^{(1)},\ldots,\Sigma^{(k-1)}\), while preserving
the relative order within each block. Every \(\sigma_j\) has this same
block order. Hence every common subsequence of \(\tau\) and
\(\sigma_j\) already visits the blocks in this order, and stable
sorting preserves such a subsequence. Therefore, for every
\(j\in\mathbb Z_k\),
\[
\U(\overline\tau,\sigma_j)
\leq
\U(\tau,\sigma_j).
\]
For each \(i\in\mathbb Z_k\), let \(\rho_i\in\mathcal P(\Sigma)\)
denote the restriction of \(\overline\tau\) to the \(i\)-th alphabet
copy \(\Sigma^{(i)}\), where the symbols are replaced by their
counterparts in the original alphabet \(\Sigma\). Since
\(\overline\tau\) and \(\sigma_j\) have the same block order, their LCS
decomposes across the blocks, and therefore
\[
\U(\overline\tau,\sigma_j)
=
\sum_{i\in\mathbb Z_k}
\U(\rho_i,\pi_{j+_k i}).
\]
Averaging over \(j\in\mathbb Z_k\) gives
\begin{align*}
\frac{1}{k}
\sum_{j\in\mathbb Z_k}
\U(\overline\tau,\sigma_j)
&=
\frac{1}{k}
\sum_{j\in\mathbb Z_k}
\sum_{i\in\mathbb Z_k}
\U(\rho_i,\pi_{j+_k i})\\
&=
\frac{1}{k}
\sum_{i\in\mathbb Z_k}
\cost_\Pi(\rho_i)\\
&\geq
\OPT_{\mathrm{med}}(\Pi).
\end{align*}
The second equality holds because, for every fixed \(i\), the cyclic
index \(j+_k i\) ranges over \(\mathbb Z_k\) exactly once as \(j\)
ranges over \(\mathbb Z_k\). Since
\[
\max_{j\in\mathbb Z_k}\U(\tau,\sigma_j)
\geq
\max_{j\in\mathbb Z_k}\U(\overline\tau,\sigma_j)
\geq
\frac{1}{k}
\sum_{j\in\mathbb Z_k}
\U(\overline\tau,\sigma_j),
\]
every center candidate \(\tau\) has radius at least
\(\OPT_{\mathrm{med}}(\Pi)\). Hence
\[
\OPT_{\mathrm{ctr}}(\mathcal C(\Pi))
\geq
\OPT_{\mathrm{med}}(\Pi).
\]

For the reverse inequality, let \(\rho^*\) be an optimal median for
\(\Pi\), and consider
\[
\tau^*
:=
(\rho^*)^{(0)}
(\rho^*)^{(1)}
\cdots
(\rho^*)^{(k-1)}.
\]
For every \(j\in\mathbb Z_k\), the block decomposition gives
\[
\U(\tau^*,\sigma_j)
=
\sum_{i\in\mathbb Z_k}
\U(\rho^*,\pi_{j+_k i})
=
\sum_{\ell=0}^{k-1}
\U(\rho^*,\pi_\ell)
=
\OPT_{\mathrm{med}}(\Pi).
\]
Thus, \(\tau^*\) has radius \(\OPT_{\mathrm{med}}(\Pi)\), and therefore
\[
\OPT_{\mathrm{ctr}}(\mathcal C(\Pi))
\leq
\OPT_{\mathrm{med}}(\Pi).
\]
Together with the previous lower bound, this proves equality of the
optimum values.

Finally, suppose that a center solution \(\tau\) has radius at most
\(R\). Construct \(\overline\tau\) and
\(\rho_0,\ldots,\rho_{k-1}\) as above. Since stable sorting does not
increase any distance, \(\U(\overline\tau,\sigma_j)\leq R\) for every
\(j\in\mathbb Z_k\). Therefore,
\[
\frac{1}{k}
\sum_{i\in\mathbb Z_k}
\cost_\Pi(\rho_i)
=
\frac{1}{k}
\sum_{j\in\mathbb Z_k}
\U(\overline\tau,\sigma_j)
\leq R.
\]
Hence at least one \(i\in\mathbb Z_k\) satisfies
\(\cost_\Pi(\rho_i)\leq R\). Computing all \(\rho_i\) and returning
one of minimum cost proves the extraction claim.
\end{proof}

\end{document}